\let\cite\citep
\renewcommand{\tilde}{\widetilde}
\renewcommand{\hat}{\widehat}
\newcommand{\child}{\mathsf{child}}
\newcommand{\AX}[1]{\textnormal{#1}}
\DeclareMathOperator{\lca}{lca}
\newcommand{\symdiff}{\mathbin{\triangle}}
\def\arrowedvec{\mathaccent"017E}
\newcommand{\G}{\arrowedvec{G}} 
\newcommand{\E}{\arrowedvec{E}}
\newcommand{\tT}{{\hat t_T}}
\newcommand{\cupdot}{\charfusion[\mathbin]{\cup}{\cdot}}
\newcommand{\sthin}{\mathsf{S}}
\newcommand{\sigmasthin}{\sigma_{/\mathsf{S}}}
\def\moverlay{\mathpalette\mov@rlay}
\def\mov@rlay#1#2{\leavevmode\vtop{%
   \baselineskip\z@skip \lineskiplimit-\maxdimen
   \ialign{\hfil$\m@th#1##$\hfil\cr#2\crcr}}}
\newcommand{\charfusion}[3][\mathord]{
    #1{\ifx#1\mathop\vphantom{#2}\fi
        \mathpalette\mov@rlay{#2\cr#3}
      }
    \ifx#1\mathop\expandafter\displaylimits\fi}
\DeclareRobustCommand\bigop[1]{%
  \mathop{\vphantom{\sum}\mathpalette\bigop@{#1}}\slimits@
}
\newcommand{\bigop@}[2]{%
  \vcenter{%
    \sbox\z@{$#1\sum$}%
    \hbox{\resizebox{\ifx#1\displaystyle.9\fi\dimexpr\ht\z@+\dp\z@}{!}{$\m@th#2$}}%
  }%
}
\newcommand{\ROOT}{\circledcirc}
\newcommand{\LEAF}{\odot}
\newcommand{\SPEC}{\newmoon}
\newcommand{\HGT}{\triangle}
\newcommand{\DUPL}{\square}
\newcommand{\EHGT}{\mathcal{E}}
\newcommand{\Th}{\ensuremath{T_{\mathcal{\overline{E}}}}}
\newcommand{\bigjoin}{\bigop{\Join}}
\DeclareMathOperator{\join}{\Join}
\newcommand{\hc}{\emph{hc}}
\providecommand{\keywords}[1]{\textbf{\textit{Keywords: }} #1}
\newtheorem{theorem}{Theorem}
\newtheorem{corollary}{Corollary}
\newtheorem{proposition}{Proposition}
\newtheorem{lemma}{Lemma}
\newtheorem{definition}{Definition}
\newtheorem{fact}{Observation}
\begin{document}

\title{Best Match Graphs and Reconciliation of Gene Trees with Species Trees}
    
\author[1]{Manuela Gei{\ss}}
\author[2]{Marcos E.\ Gonz{\'a}lez}
\author[2]{Alitzel L{\'o}pez S{\'a}nchez}
\author[3]{Dulce I.\ Valdivia}
\author[4,5]{Marc Hellmuth}
\author[2]{Maribel Hern{\'a}ndez Rosales}
\author[1,6-12]{Peter F.\ Stadler}

\affil[1]{Bioinformatics Group, Department of Computer Science; and 
	Interdisciplinary Center of Bioinformatics, University of Leipzig,
	H{\"a}rtelstra{\ss}e 16-18, D-04107 Leipzig, Germany}
\affil[2]{CONACYT-Instituto de Matem{\'a}ticas, UNAM Juriquilla,
	Blvd.\ Juriquilla 3001,
	76230 Juriquilla, Quer{\'e}taro, QRO, M{\'e}xico}
\affil[3]{Universidad Aut{\'o}noma de Aguascalientes, 
	Centro de Ciencias B{\'a}sicas,
	Av.\ Universidad 940,
	20131 Aguascalientes, AGS, M{\'e}xico;
	Instituto de Matem{\'a}ticas, UNAM Juriquilla,
	Blvd.\ Juriquilla 3001, 
	76230 Juriquilla, Quer{\'e}taro, QRO, M{\'e}xico}
\affil[4]{Institute of Mathematics and Computer Science, 
	University of Greifswald, Walther-Rathenau-Stra{\ss}e 47, 
	D-17487 Greifswald, Germany}
\affil[5]{Center for Bioinformatics, Saarland University, Building E 2.1, 
	P.O.\ Box 151150, D-66041 Saarbr{\"u}cken, Germany}
\affil[6]{German Centre for Integrative Biodiversity Research (iDiv)
	Halle-Jena-Leipzig} 
\affil[7]{Competence Center for Scalable Data Services
	and Solutions} 
\affil[8]{Leipzig Research Center for Civilization Diseases,
	Leipzig University,
	H{\"a}rtelstra{\ss}e 16-18, D-04107 Leipzig}
\affil[9]{Max-Planck-Institute for Mathematics in the Sciences,
	Inselstra{\ss}e 22, D-04103 Leipzig}
\affil[10]{Inst.\ f.\ Theoretical Chemistry, University of Vienna,
	W{\"a}hringerstra{\ss}e 17, A-1090 Wien, Austria}
\affil[11]{Facultad de Ciencias, Universidad National de Colombia, Sede
	Bogot{\'a}, Colombia}
\affil[12]{Santa Fe Institute, 1399 Hyde Park Rd., Santa Fe,
	NM 87501, USA}

\date{}
\normalsize

\maketitle

\begin{abstract} 
  A wide variety of problems in computational biology, most notably the
  assessment of orthology, are solved with the help of reciprocal best
  matches. Using an evolutionary definition of best matches that captures
  the intuition behind the concept we clarify rigorously the relationships
  between reciprocal best matches, orthology, and evolutionary events under
  the assumption of duplication/loss scenarios. We show that the orthology
  graph is a subgraph of the reciprocal best match graph (RBMG). We
  furthermore give conditions under which an RBMG that is a cograph
  identifies the correct orthlogy relation. Using computer simulations we
  find that most false positive orthology assignments can be identified as
  so-called good quartets -- and thus corrected -- in the absence of
  horizontal transfer.  Horizontal transfer, however, may introduce also
  false-negative orthology assignments.

  \keywords{Phylogenetic Combinatorics \and
    Colored digraph \and Orthology \and Horizontal Gene Transfer }
\end{abstract}

\sloppy

\section{Introduction}

The distinction between orthologous and paralogous genes has important
consequences for gene annotation, comparative genomics, as well as
<molecular phylogenetics due to their close correlation with gene function
\cite{Koonin:05}. Orthologous genes, which derive from a speciation as
their last common ancestor \cite{Fitch:70}, usually have at least
approximately equivalent functions \cite{Gabaldon:13}. Paralogs, in
contrast, tend to have related, but clearly distinct functions
\cite{Studer:09,Innan:10,Altenhoff:12,Zallot:16}. Phylogenetic studies
strive to restrict their input data to one-to-one orthologs since these
often evolve in an approximately clock-like fashion. In comparative
genomics, orthologs serve as anchors for chromosome alignments and thus are
an important basis for synteny-based methods \cite{Sonnhammer:14}.

Despite its practical importance, the mathematical interrelationships of
empirical ``pairwise best hits'' on the one hand, and reconciliations of
gene and species trees on the other hand have remained largely unexplored.
Practical workflows for orthology assignment directly use pairwise best
hits as initial estimate of orthologous gene pairs. Many of the commonly
used methods for orthology-identification, such as \texttt{OrthoMCL}
\cite{Li:03}, \texttt{ProteinOrtho} \cite{Lechner:14a}, \texttt{OMA}
\cite{Roth:08}, or \texttt{eggNOG} \cite{Jensen:08}, belong to this
class. Extensive benchmarking \cite{Altenhoff:16,Nichio:17} has shown that
these tools perform at least as well as methods such as
\texttt{Orthostrapper} \cite{Storm:02}, \texttt{PHOG} \cite{Datta:09},
\texttt{EnsemblCompara} \cite{Vilella:09}, or \texttt{HOGENOM}
\cite{Dufayard:05} that first independently reconstruct a gene tree $T$ and
a species tree $S$ and then determine orthologous and paralogous genes.

\begin{figure}
\centering
\includegraphics[width=0.9\textwidth]{./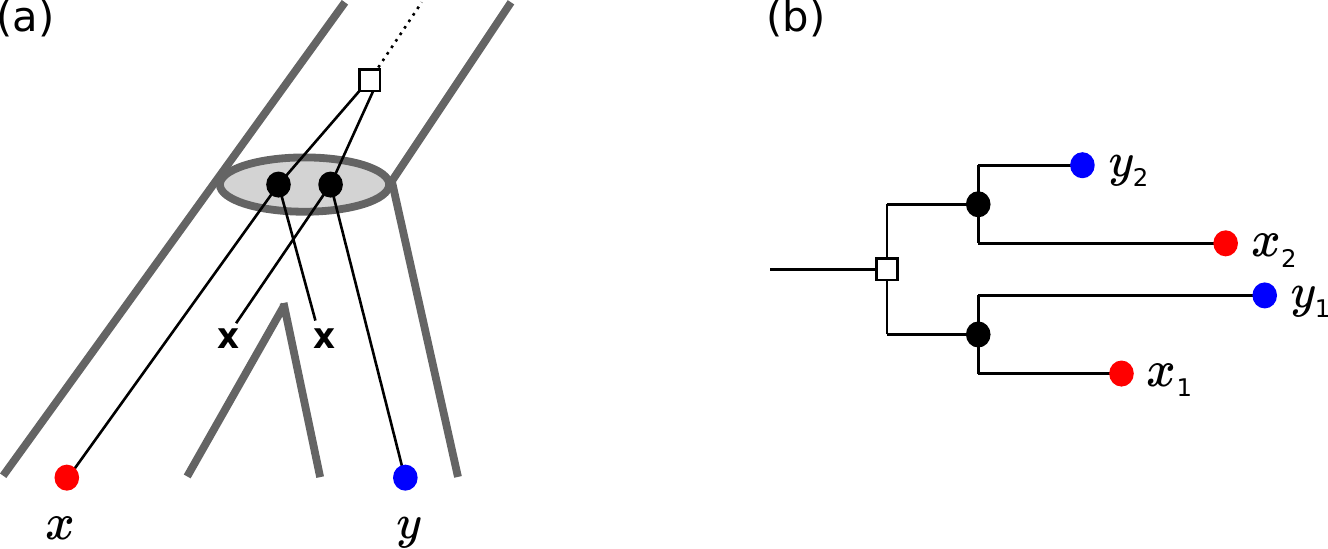} 
\caption{Pairwise best hits are not equivalent to orthology.  (a)
  Complementary losses of ancient paralogs following a later speciation
  event leaves only a single member of the gene family in each
  species. Hence, $x$ and $y$ are reciprocal best matches but not orthologs
  since their last common ancestor by construction is a duplication
  event. (b) Lineage specific rate differences between paralogs cause
  discrepancies between best hits and best matches. Here, the branch length
  in the tree represents sequence dissimilarity. In this example, the
  species (indicated by the leaf color) retain copies of the two paralogs
  originating from a duplication event pre-dating the separation of red and
  blue. While the gene $x_2$ evolves faster in the red species, the
  situation is reversed for $y_2$ in the blue species. While $\{x_1,y_1\}$
  and $\{x_2,y_2\}$ are orthologs and reciprocal best matches in the
  evolutionary sense, neither appears as a reciprocal best hit in terms of
  similarity (i.e., branch length). The only reciprocal best hit is
  $\{x_1,y_2\}$, which is neither a best match nor a pair of orthologs.  }
\label{fig:introxampl}
\end{figure}

The intuition behind the pairwise best hit approach is that a gene $y$ in
species $s$ can only be an ortholog of a gene $x$ in species $r$ if $y$ is
the closest relative of $x$ in $s$ and $x$ is at the same time the closest
relative of $y$ in $r$. Evolutionary relatedness is defined in terms of an
-- often unknown -- phylogenetic tree $T$. The notion of a best match or
closest relative thus is made precise by considering the last common
ancestors in $T$: $y$ is a best match for $x$ if the least common ancestor
$\lca_T(x,y)$ is not further away from $x$ (and thus not closer to the root
of the tree) than $\lca_T(x,y')$ for any other gene $y$ in species
$s$. This formally defines the \emph{best match} relation studied in
\cite{Geiss:19a}. The \emph{reciprocal best match} relation identifies the
pairs of genes that are mutually closest relatives between pairs of
species, see \cite{Geiss:19x}.

Two approximations are introduced when pairwise best hit approaches are
employed for orthology assessment.  First, it is well known that two genes
can be mutual closest relatives without being orthologs. The usual example
is the complementary loss of ancestrally present paralogs following a gene
duplication (Fig.~\ref{fig:introxampl}a).  Second, pairwise best hits as
determined by sequence (dis)similiarity are not necessarily pairs of most
closely related genes and \textit{vice versa}, evolutionarily most closely
related gene pairs do not necessarily appear as pairwise best hits
(Fig.~\ref{fig:introxampl}b).

We argue, therefore, that the relationship of pairwise best hits and
orthology has to be understood in (at least) two conceptually and
practically separate steps:
\begin{enumerate}
\item What is the relationship of pairwise best hits and reciprocal best
  matches?
\item What is the relation of reciprocal best matches and orthology?
\end{enumerate}
In this contribution we focus on the second question, which is a largely
mathematical problem. The first question, which is primarily a question of
inference from data, will be investigated in a companion paper that makes
use of some of the mathematical results derived here. The main aim of the
present contribution is, therefore, to connect formal results on the
structure of the orthology relation and the associated reconciliation maps
and gene trees with recent results on the mathematical structure of
(reciprocal) best match relations.

Symbolic ultrametrics \cite{Boecker:98} and 2-structures
\cite{ER1:90,ER2:90} provided a basis to show that orthology relations are
essentially equivalent to cographs
\cite{Hellmuth:13a,Hellmuth:17a,HW:16book}.  Moreover, in the absence of
horizontal gene transfer (HGT), reconciliation maps for an event-labeled
gene tree exist if and only if the species tree $S$ displays all triples
rooted in a speciation event that have leaves from three distinct species
\cite{HernandezRosales:12a,Hellmuth2017}.  This shows that it is possible
to infer species phylogenies from empirical estimates of orthology
\cite{Hellmuth:15a,Lafond:16,Lafond:14,Dondi:17}.  Although it is possible
to generalize many of the results, such as the characterization of
reconciliation maps for event-labeled gene trees to scenarios with
horizontal gene transfer \cite{Nojgaard:18a,HHM:19,Hellmuth2017} this
remains an active area of research.

Best matches as a mathematical structure have been studied only very
recently. \citet{Geiss:19a} gave two alternative characterizations of best
match digraphs and showed that they can be recognized in polynomial time.
In particular, there is a unique least resolved tree for each best match
digraph, which is displayed by the gene tree and can also be computed in
polynomial time. Reciprocal best matches naturally appear as the symmetric
part of these digraphs. Somewhat surprisingly, the undirected reciprocal
best match graphs seem to have a much more difficult structure
\cite{Geiss:19x}.

Although pairwise best hit methods do not attempt to explicitly construct
the gene tree $T$, they still make the assumption that there is some
underlying phylogeny for the provided homologous genes.  The distinction of
orthology and paralogy then amounts to assigning event labels
(``speciation'', ``duplication'', and possibly ``HGT'') to the inner
vertices of $T$.  While it is true that any gene tree, and thus also any
best match graph, can be reconciled with any species tree
\cite{Guigo:96,Page:97,Gorecki:06}, such a reconciliation may imply
unrealistically many duplication and deletion events. Moreover, the
existence of reconciliation maps for $T$ to some species tree cannot
generally be ensured, if the event labels are given
\cite{HernandezRosales:12a,Hellmuth2017}. Hence, the best match relation
(which constrains the gene tree \cite{Geiss:19a}), the event labels, the
existence of one or a particular reconcilation map, and the species tree
depend on each other or at least do constrain each other. In this
contribution we explore these dependencies in detail in the absence of
horizontal gene transfer.

We show that, in this setting, the true orthology graph (TOG) is a subgraph
of the reciprocal best match graph (RBMG). In other words, reciprocal best
matches can only produce false positive orthology assignments as long as
the evolution of a gene family proceeds via duplications, losses, and
speciations.  Computer simulations show that in broad parameter range the
TOG and RBMG are very similar, proving an \emph{a posteriori} justification
for the use of reciprocal best matches in orthology estimation. In
addition, we characterize a subset of the ``false positive'' edges in the
RBMG that cannot be present in the TOG.  Experimental results show that --
using so-called good quartets -- it is possible to remove nearly all false
positive orthology assignments. Our aim here is to understand those sources
of error and ambiguities in orthology detection that still persist even if
reciprocal best matches are inferred with perfect accuracy. Therefore, all
computer simulations reported here use perfect data as input. In a
companion paper, we address the question how well reciprocal best matches
can be inferred from (dis)ssimilarity data, and what can be done to make
this inital step more accurate. Finally, we discuss how these results can
potentially be generalized to the case that the evolutionary scenarios
contain HGT.

\section{Preliminaries}

A \emph{planted (phylogenetic) tree} is a rooted tree $T$ with vertex set
$V(T)$ and edge set $E(T)$ such that (i) the root $0_T$ has degree $1$ and
(ii) all inner vertices have degree $\deg_T(u)\ge 3$.  We write $L(T)$ for
the leaves (not including $0_T$) and $V^0=V(T)\setminus(L(T)\cup\{0_T\})$
for the inner vertices (also not including $0_T$).  To avoid trivial cases,
we will always assume that $|L(T)|\geq 2$.  The \emph{conventional root}
$\rho_T$ of $T$ is the unique neighbor of $0_T$. The main reason for using
planted phylogenetic trees instead of modeling phylogenetic trees simply as
rooted trees, which is the much more common practice in the field, is that
we will often need to refer to the time before the first branching
event. Conceptually, it corresponds to explicitly representing an outgroup.
For some vertex $v\in V(T)$, we denote by $T(v)$ the subtree of $T$ that is
rooted in $v$.  Its leaf set is $L(T(v))$.

On a rooted tree $T$ we define the \emph{ancestor order} by setting
$x\prec_T y$ if $y$ is a vertex of the unique path connecting $x$ with the
root $0_T$. As usual we write $x\preceq_T y$ if $x=y$ or $x\prec_T y$. In
particular, the leaves are the minimal elements w.r.t.\ $\prec_T$, and we
have $x\preceq 0_T$ for all $x\in V(T)$. This partial order is conveniently
extended to the edge set by defining each edge to be located between its
incident vertices, i.e., if $y\prec_T x$ and $e=xy$ is an edge, we set
$y \prec_T e \prec_T x$.  In this case, we write $e=xy$ to denote that $x$
is closer to the root than $y$. If $e=xy\in E(T)$, we say that $y$ is a
\emph{child} of $x$, in symbols $y\in\child(x)$, and $x$ is the
\emph{parent} of $y$ in $T$. We sometimes also write $y\succeq_T x$
  instead of $x\preceq_T y$. Moreover, if $x\preceq_T y$ or $y\preceq_T x$
in $T$, then $x$ and $y$ are called \emph{comparable}, otherwise the two
vertices are \emph{incomparable}.

For a non-empty subset of vertices $A\subseteq V$ of a rooted tree
$T=(V,E)$, we define $\lca_T(A)$, the \emph{last common ancestor of $A$},
to be the unique $\preceq_T$-minimal vertex of $T$ that is an ancestor of
every vertex in $A$. For simplicity we write
$\lca_T(x_1,\dots,x_k)\coloneqq\lca_T(\{x_1,\dots,x_k\})$ for a set
$A=\{x_1,\dots,x_k\}$ of vertices.  The definition of $\lca_T(A)$ is
conveniently extended to edges by setting
$\lca_T(x,e) \coloneqq \lca_T(\{x\}\cup e)$ and
$\lca_T(e,f) \coloneqq \lca_T(e\cup f)$, where the edges $e,f\in E(T)$ are
simply treated as sets of vertices. We note for later reference that
$\lca(A\cup B)=\lca(\lca(A),\lca(B))$ holds for non-empty vertex sets $A$,
$B$ of a tree.

Binary trees on three leaves are called \emph{triples}. We say that a
triple $xy|z$ is \emph{displayed} in a rooted tree $T$ if $x,y,$ and $z$
are leaves in $T$ and the path from $x$ to $y$ does not intersect the path
from $z$ to the root. The set of all triples that are displayed by the tree
$T$, is denoted by $r(T)$ and a triple set $R$ is said to be
\emph{compatible} if there exists a tree $T$ that displays $R$, i.e.,
$R\subseteq r(T)$.
  
Denote by $L(S)$ a set of species and denote by $\sigma: L(T)\to L(S)$ the
map that assigns to each gene $x\in L(T)$ a species $\sigma(x)\in L(S)$.  A
tree $T$ together with such a map $\sigma$ is denoted by $(T,\sigma)$ and
called \emph{leaf-colored tree}.
\begin{definition}
  Let $(T,\sigma)$ be a leaf-colored tree. A leaf $y\in L(T)$ is a
  \emph{best match} of the leaf $x\in L(T)$ if $\sigma(x)\neq\sigma(y)$ and
  $\lca(x,y)\preceq_T \lca(x,y')$ holds for all leaves $y'$ from species
  $\sigma(y')=\sigma(y)$.  The leaves $x,y\in L(T)$ are \emph{reciprocal
    best matches} if $y$ is a best match for $x$ and $x$ is a best match
  for $y$.
\end{definition}
The directed graph $\G(T,\sigma)$ with vertex set $L(T)$, vertex-coloring
$\sigma$, and edges defined by the best matches in $(T,\sigma)$ is known as
\emph{colored best match graph} (BMG) \cite{Geiss:19a}. The undirected
graph $G(T,\sigma)$ with vertex set $L(T)$, vertex-coloring $\sigma$, and
edges defined by the reciprocal best matches in $(T,\sigma)$ is known as
colored \emph{reciprocal best match graph} (RBMG) \cite{Geiss:19x}. We
sometimes write $n$-BMG, resp., $n$-RBMG to specify the number $n$ of
colors.

Throughout this contribution, $G=(V,E)$ and $\G=(V,\E)$ denote simple
undirected and simple directed graphs, respectively. We distinguish
directed arcs $(x,y)$ in a digraph $\G$ from edges $xy$ in an undirected
graph $G$ or tree $T$. For an undirected graph $G$ we denote by
$N(x)=\{y \mid y \in V(G), xy\in E(G)\}$ the neighborhood of some vertex
$x$ in $G$. The \emph{disjoint union} $G\cupdot H$ of two graphs $G=(V,E)$
and $H=(W,F)$ has vertex set $V\cupdot W$ and edge set $E\cupdot F$. Their
\emph{join} has again vertex set $V\cupdot W$ and its edge set is given by
$E(G\join H)=E\cupdot F \cupdot \{xy\mid x\in V,y\in W\}$. Thus the join of
$G$ and $H$ is obtained by connecting every vertex of $G$ to every vertex
of $H$.

A class of undirected graphs that plays an important role in this
contribution are \emph{cographs}, which are recursively defined
\cite{Corneil:81}:
\begin{definition}
  An undirected graph $G$ is a cograph if one of the following conditions
  is satisfied:
  \begin{itemize}
  \item[(1)] $G=K_1$, 
  \item[(2)] $G= H \join H'$, where $H$ and $H'$ are cographs,
  \item[(3)] $G= H \cupdot H'$, where $H$ and $H'$ are cograph.
  \end{itemize}
  \label{def:cograph}
\end{definition}
An undirected graph is a cograph if and only if it does not contain an
induced $P_4$ (path on four vertices) \cite{Corneil:81}. 

Every cograph $G$ is associated with a set of phylogenetic trees
  $\mathfrak{T}_G$, usually referred to as the \emph{cotrees} of $G$.
  Every cotree $T_G\in \mathfrak{T}_G$ correspond to a possible recursive
  construction of $G$. Since both the disjoint union and the join operation
  is associative, it is possible to join or unify two or more component
  cographs in a single construction step.  The leaves of $T_G$ correspond
  to the vertices of $G$. Each interior vertex of $T_G$ corresponds to either
  a join or a disjoint union operation. Its child-subtrees, furthermore,
  are exactly the cotrees of the component cographs that are joined or
  disjointly unified, respectively. The event type associated with an inner
  vertex $u$ will be denoted by $t_G(u)$. Each vertex $u$ of $T_G$ can be
  associated with an induced subgraph $G[L(T_G(u))]$.  A cotree $T_G$ is
  called \emph{discriminating} if any two adjacent inner nodes represent
  different types of events. If $T_G\in\mathfrak{T}_G$ and $T_G'$ is
  obtained from $T_G$ by contracting a non-discriminating edge, i.e., an
  edge $uv$ with $t_G(u)=t_G(v)$, then $T_G'\in\mathfrak{T}_G$. Every
  cograph has a unique discriminating cotree, which is obtained from any of
  its cotrees by contracting all non-discriminating edges
  \cite{Corneil:81}. We note, finally, that the discriminating cotree of
  $G$ coincides with the modular decomposition tree of $G$.
 
\section{Reconciliation Maps, Event Labelings, and Orthology Relations}

A \emph{gene} tree $T=(V,E)$ and a species tree $S=(W,F)$ are planted
phylogenetics trees on a set of (extant) genes $L(T)$ and species $L(S)$,
respectively.  We assume that we know which gene comes from which
species. Mathematically, this knowledge is represented by a map
$\sigma\colon L(T)\to L(S)$ that assigns to each gene the species in whose
genome it resides.  Best match approaches start from a set of genes taken
from a set of species.  Hence, the ``gene-species-association'' is known.
Moreover, species without sampled genes do not affect the best match graph
and we can w.l.o.g.\ assume that $\sigma$ is a surjective map to avoid
trivial cases.  Note, however, that the definitions and results presented
below naturally extend to general maps $\sigma$.  We write $(T,\sigma)$ for
a gene tree with given map $\sigma$.

An \emph{evolutionary scenario} comprises a gene tree and a species
  tree together with a map $\mu$ from $T$ to $S$ that identifies the
  locations in the species tree $S$ at which evolutionary events took place
  that are represented by the vertices of the gene tree $T$.  The
properties of the map $\mu$ of course depend on which types of evolutionary
events are considered. In order to model evolutionary scenarios we assume
that evolutionary events of different types do not occur concurrently. In
particular, speciation and duplication are always strictly temporally
ordered. Gene duplications therefore always occur along the edges of the
species tree. Vertices on $T$ that model speciation events, on the other
hand, must be mapped to inner vertices of $S$.

\emph{From here on we will consider only Duplication/Loss secenarios, that
  is we explicitly exclude horizontal gene transfer (HGT). We will briefly
  discuss the effects of HGT in Section~\ref{sect:HGT}.}

\begin{definition}[Reconciliation Map] \label{def:mu} Let $S=(W,F)$ and
  $T=(V,E)$ be two planted phylogenetic trees and let
  $\sigma\colon L(T)\to L(S)$ be a surjective map. A \emph{reconciliation}
  from $(T,\sigma)$ to $S$ is a map $\mu\colon V\to W\cup F$ satisfying
  \begin{description}
  \item[\AX{(R0)}] \emph{Root Constraint.} $\mu(x) = 0_S$ if and only if
    $x = 0_T$.
  \item[\AX{(R1)}] \emph{Leaf Constraint.}  If $x\in L(T)$, then
    $\mu(x)=\sigma(x)$.
  \item[\AX{(R2)}] \emph{Ancestor Preservation.}  $x\prec_{T} y$ implies
    $\mu(x)\preceq_S \mu(y)$.
  \item[\AX{(R3)}] \emph{Speciation Constraints.}  Suppose $\mu(x)\in W^0$.
    \begin{itemize}
    \item[(i)] $\mu(x) = \lca_S(\mu(v'),\mu(v''))$ for at least two
      distinct children $v',v''$ of $x$ in $T$.
    \item[(ii)] $\mu(v')$ and $\mu(v'')$ are incomparable in $S$ for any
      two distinct children $v'$ and $v''$ of $x$ in $T$.
    \end{itemize}
  \end{description}
  \label{def:reconc-map}
\end{definition}           

Several alternative definitions of reconciliation maps for Duplication/Loss
scenarios have been proposed in the literature, many of which have been
shown to be equivalent. Nevertheless, we add yet another one because
earlier variants do not clearly separate conditions pertaining to the
structural congruence of gene tree and species tree (Axioms \AX{(R0)},
\AX{(R1)}, and \AX{(R2)}) from conditions that (implicitly) distinguish
event types, here \AX{(R3.i)} and \AX{(R3.ii)}. This axiom system also
generalizes easily to situations with horizontal transfer as we shall see
in Section~\ref{sect:HGT}.  We proceed by showing that it is equivalent to
axioms that are commonly used in the literature, see e.g.\
\citet{Gorecki:06,Vernot:08,Doyon:11,Rusin:14,Hellmuth2017,Nojgaard:18a},
and the references therein.
\begin{lemma}
  Let $\mu$ be a map from $(T=(V,E), \sigma)$ to $S=(W,F)$ that satisfies
  \AX{(R0)} and \AX{(R1)}.  Then, $\mu$ satisfies Axioms \AX{(R2)} and
  \AX{(R3)} if and only if $\mu$ satisfies
  \begin{description}
  \item[\AX{(R2')}] \emph{Ancestor Constraint.}		\\
    Suppose $x,y\in V$ with $x\prec_{T} y$.
    \begin{itemize}
    \item[(i)] If $\mu(x), \mu(y) \in F$, then $\mu(x)\preceq_S \mu(y)$,
    \item[(ii)] otherwise, i.e., at least one of $\mu(x)$ and $\mu(y)$ is
      contained in $W$, $\mu(x)\prec_S\mu(y)$.
  \item[\AX{(R3')}] \emph{Inner Vertex Constraint.}  	\\
    If $\mu(x)\in W^0$, then
    \begin{itemize}
    \item[(i)] $\mu(x) = \lca_S(\sigma(L(T(x))))$ and  
    \item[(ii)] $\mu(v')$ and $\mu(v'')$ are incomparable in $S$ for any
      two distinct children $v'$ and $v''$ of $x$ in $T$.
    \end{itemize}
     \end{itemize}
  \end{description}
\end{lemma}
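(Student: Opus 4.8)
The two axiom systems share \AX{(R0)} and \AX{(R1)}, and the incomparability conditions \AX{(R3.ii)} and \AX{(R3'.ii)} are literally identical, so nothing needs to be shown for them. The plan is therefore to prove, under the standing assumptions \AX{(R0)}, \AX{(R1)} together with the common incomparability condition, that the pair $(\AX{(R2)},\AX{(R3.i)})$ is equivalent to the pair $(\AX{(R2')},\AX{(R3'.i)})$. The single observation driving both directions is a monotonicity fact: for every vertex $v\in V$ one has $\lca_S(\sigma(L(T(v))))\preceq_S\mu(v)$. Indeed, every leaf $\ell\in L(T(v))$ satisfies $\ell\preceq_T v$, so \AX{(R1)} together with \AX{(R2)} (equivalently, the weak part of \AX{(R2')}) gives $\sigma(\ell)=\mu(\ell)\preceq_S\mu(v)$; thus $\mu(v)$ is a common ancestor of $\{\sigma(\ell)\}$ and lies above their last common ancestor. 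I will combine this with the associativity rule $\lca(A\cup B)=\lca(\lca(A),\lca(B))$ recorded in the preliminaries.

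For the direction $(\AX{(R2)},\AX{(R3)})\Rightarrow(\AX{(R2')},\AX{(R3')})$, the only refinement of \AX{(R2')} over \AX{(R2)} is the strictness in part (ii). When exactly one of $\mu(x),\mu(y)$ lies in $W$ and the other in $F$, strictness is immediate since a vertex and an edge are distinct, so $\mu(x)\preceq_S\mu(y)$ forces $\mu(x)\prec_S\mu(y)$. When both images are vertices I argue by contradiction: if $\mu(x)=\mu(y)\in W^0$, let $v_i$ be the child of $y$ with $x\preceq_T v_i\prec_T y$; then $\mu(v_i)=\mu(y)$, and any second child $v_j$ of $y$ (which exists since inner vertices have degree $\ge3$) satisfies $\mu(v_j)\preceq_S\mu(y)=\mu(v_i)$, contradicting \AX{(R3.ii)}; the case $\mu(y)=0_S$ is excluded by \AX{(R0)}. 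For \AX{(R3'.i)}, fix $x$ with $\mu(x)\in W^0$ and children $v_1,\dots,v_k$, and set $m_i\coloneqq\lca_S(\sigma(L(T(v_i))))$. Monotonicity and associativity give $\lca_S(\sigma(L(T(x))))=\lca_S(m_1,\dots,m_k)\preceq_S\mu(x)$. For the reverse inequality I use \AX{(R3.i)}: it supplies two children $v',v''$ with $\mu(x)=\lca_S(\mu(v'),\mu(v''))$ and $\mu(v'),\mu(v'')$ incomparable, hence lying strictly below two \emph{distinct} children of $\mu(x)$ in $S$. Since the corresponding values satisfy $m'\preceq_S\mu(v')$ and $m''\preceq_S\mu(v'')$, they remain in these two distinct child-subtrees, so $\lca_S(m',m'')=\mu(x)$; enlarging the set can only move the last common ancestor toward the root, giving $\lca_S(m_1,\dots,m_k)\succeq_S\mu(x)$ and hence \AX{(R3'.i)}.

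For the converse $(\AX{(R2')},\AX{(R3')})\Rightarrow(\AX{(R2)},\AX{(R3)})$, axiom \AX{(R2)} is just the weakening of \AX{(R2')}, and \AX{(R3.ii)} equals \AX{(R3'.ii)}, so only \AX{(R3.i)} remains. With $x,v_1,\dots,v_k,m_i$ as before and $\mu(x)\in W^0$, monotonicity gives $m_i\preceq_S\mu(v_i)\preceq_S\mu(x)$ for each $i$, while \AX{(R3'.i)} gives $\lca_S(m_1,\dots,m_k)=\mu(x)$. Sandwiching the images between the $m_i$ and $\mu(x)$ forces $\lca_S(\mu(v_1),\dots,\mu(v_k))=\mu(x)$. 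By \AX{(R3'.ii)} the images are pairwise incomparable, so each lies strictly below $\mu(x)$; if they all lay below a single child of $\mu(x)$ in $S$, their last common ancestor would be a proper descendant of $\mu(x)$, a contradiction. Hence two of them, $\mu(v')$ and $\mu(v'')$, sit below distinct children of $\mu(x)$, so $\lca_S(\mu(v'),\mu(v''))=\mu(x)$, which is exactly \AX{(R3.i)}.

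The edge/vertex bookkeeping in \AX{(R2')} is mechanical; the genuine content, and the step I expect to be most delicate, is the two-sided localization of $\mu(x)$ inside $S$: showing that incomparability of two children-images pins them into distinct child-subtrees of $\mu(x)$, so that one may pass freely between $\mu(x)=\lca_S(\mu(v'),\mu(v''))$ and $\mu(x)=\lca_S(\sigma(L(T(x))))$. Care is needed to handle the order on $W\cup F$ correctly, since the relevant $\lca$ values may be edges rather than vertices, and to dispose of the degenerate possibility that an inner vertex of $T$ is mapped to a leaf of $S$.
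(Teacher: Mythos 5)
Your proof is correct and follows essentially the same route as the paper's: the same contradiction via a second child of $y$ and \AX{(R3.ii)} to get the strictness in \AX{(R2'.ii)}, and the same two-sided comparison of $\mu(x)$ with $\lca_S(\sigma(L(T(x))))$ in both directions of the \AX{(R3.i)}/\AX{(R3'.i)} equivalence. The only difference is that you spell out the localization steps the paper leaves implicit (incomparable images pinning $\mu(v')$, $\mu(v'')$ into distinct child-subtrees of $\mu(x)$, and the ``easy to verify'' reduction from $k$ children to two), which is a welcome but not substantively different elaboration.
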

\begin{proof}
  Assume first that \AX{(R2)} and \AX{(R3)} are satisfied for $\mu$.\\
  Then property \AX{(R2'.i)} is satisfied since it is the restriction of
  \AX{(R2)} to  $\mu(x),\mu(y)\in F$.\\
  To see that \AX{(R2'.ii)} holds, let $x\prec_T y$ and $\mu(x)\in W$ or
  $\mu(y)\in W$. Assume first that $\mu(y)\in W$.  Property \AX{(R2)}
  implies $\mu(x)\preceq_S \mu(y)$.  Let $v$ be the child of $y$ that lies
  on the path from $y$ to $x$ in $T$, i.e., $x\preceq_T v \prec_T
  y$. Assume for contradiction that $\mu(x) = \mu(y)$.  By Property
  \AX{(R2)} we have $\mu(x) = \mu(v) = \mu(y)$.  For every other child $v'$
  of $y$, Property \AX{(R2)} implies $\mu(v') \preceq_S
  \mu(y)=\mu(v)$. Thus, $\mu(v)$ and $\mu(v')$ are comparable; a
  contradiction to \AX{(R3.ii)}. Hence, $\mu(x) \prec_S \mu(y)$ and
  \AX{(R2'.ii)} is satisfied. Now suppose $\mu(x)\in W$ and assume for
  contradiction that $\mu(x) = \mu(y)$. Thus $\mu(y)\in W$ and we can apply
  the same arguments as above to conclude that \AX{(R3.ii)} is not
  satisfied. Hence, $\mu(x) \prec_S \mu(y)$ and
  \AX{(R2'.ii)} is satisfied.\\
  In order to show that \AX{(R3')} is satisfied, let $x\in V$ such that
  $\mu(x)\in W^0$. Properties \AX{(R3'.ii)} and \AX{(R3.ii)} are
  equivalent.  It remains to show that \AX{(R3'.i)} is satisfied. From
  \AX{(R2)} we infer $\mu(y)\preceq_S\mu(x)$ for all
  $y\in \bigcup_{v\in\child(x)} L(T(v)) = L(T(x))$.  Thus,
  \begin{equation}
    \lca_S(\sigma(L(T(x)))) \preceq \mu(x).
  \end{equation}
  Property \AX{(R3.i)} implies that there are two distinct children
  $v',v''\in \child(x)$ with $\mu(x) = \lca_S(\mu(v'),\mu(v''))$. Again
  using \AX{(R3.ii)}, we know that the images $\mu(v')$ and $\mu(v'')$ are
  incomparable in $S$. The latter together with $\mu(y) \preceq_S \mu(v')$
  for all $y\in L(T(v'))$ and $\mu(y') \preceq_S \mu(v'')$ for all
  $y'\in L(T(v''))$ implies
  \begin{equation*}
    \lca_S(\mu(v'),\mu(v'')) = \lca_S(\sigma(L(T(v'))) \cup \sigma(L(T(v''))))
    \preceq_S \lca_S(\sigma(L(T(x)))).
  \end{equation*}
  In summary,
  $\lca_S(\sigma(L(T(x))))\preceq_S \mu(x) = \lca_S(\mu(v'),\mu(v''))
  \preceq_S \lca_S(\sigma(L(T(x))))$ implies that
  $\mu(x) = \lca_S(\sigma(L(T(x))))$ and Property \AX{(R3'.i)} is satisfied.\\
  Therefore, \AX{(R2)} and \AX{(R3)} imply \AX{(R2')} and \AX{(R3')}.

  Conversely, assume now that \AX{(R2')} and \AX{(R3')} are satisfied for
  $\mu$.  Clearly \AX{(R2')} implies \AX{(R2)}, and \AX{(R3'.ii)} implies
  \AX{(R3.ii)}.  It remains to show that \AX{(R3.i)} is satisfied.  Let
  $\mu(x)\in W^0$.  By \AX{(R2'.ii)} we have $\mu(x) \succ_S \mu(v_i)$ for
  all children $v_i\in \child(x) =\{v_1,\dots,v_k\}$, $k\geq 2$.
  Therefore, $\mu(x) \succeq_S \lca_S(\mu(v_1), \dots, \mu(v_k))$.  By
  \AX{(R3'.ii)}, the images $\mu(v_1), \dots, \mu(v_k)$ are pairwise
  incomparable in $S$. The latter and \AX{(R2'.i)} imply
  $\lca_S(\mu(v_1), \dots, \mu(v_k)) =
  \lca_S(\bigcup_{i=1}^k\sigma(L(T(v_i)))) = \lca_S(\sigma(L(T(x)))) =
  \mu(x)$.  It is easy to verify that
  $\lca_S(\mu(v_1), \dots, \mu(v_k)) = \lca_S(\mu(v'), \mu(v''))$ for at
  least two children $v',v''\in\child(x)$ is always satisfied. Hence,
  $\mu(x) = \lca_S(\mu(v'), \mu(v''))$ for some $v',v''\in\child(x)$ and
  thus, \AX{(R3.i)} is satisfied.\\
  Therefore, \AX{(R2')} and \AX{(R3')} imply \AX{(R2)} and \AX{(R3)}.
  \qed
\end{proof}

A reconciliation map $\mu$ from $(T,\sigma)$ to a species tree $S$
implicitly determines whether an inner node of $T$ corresponds to a
speciation or a duplication. Since we assume that distinct events are
represented by distinct nodes of the gene tree, all duplication events are
mapped to the edges of $S$. Vertices of $T$ mapped to vertices of $S$ thus
represent speciations.  We formalize this idea as follows:
\begin{definition} 
  Given a reconciliation map $\mu$ from $(T,\sigma)$ to $S$, the
  \emph{event labeling on $T$ (determined by $\mu$)} is the map
  $t_\mu:V(T)\to \{\ROOT,\LEAF,\SPEC,\DUPL\}$ given by:
\begin{equation*}
  t_\mu(u) = \begin{cases}
    \ROOT & \, \text{if } u=0_T \text{, i.e., } \mu(u)=0_S  \text{ (root)}\\
    \LEAF & \, \text{if } u\in L(T) \text{, i.e., } \mu(u)\in L(S)
    \text{ (leaf)}\\
    \SPEC & \, \text{if } \mu(u)\in V^0(S)  \text{ (speciation)}\\
    \DUPL & \, \text{else, i.e., } \mu(u)\in E(S) \text{ (duplication)}\\
\end{cases}
\end{equation*}
\label{def:event-rbmg}
\end{definition}
The symbols $\ROOT$ and $\LEAF$ identify the planted root $0_T$ and the
leaves of $T$, respectively. Inner vertices are labeled $\DUPL$ for
duplication and $\SPEC$ for speciation, respectively.

The event labeling $t_{\mu}$, by definition, is completely determined by a
reconciliation map $\mu$. This raises two related questions: (1) which
pattern of event labels can arise for reconciliation maps, and (2) what
restriction does a given event labeling impose on the reconciliation map?
To study these questions, we consider event-labeled trees $(T,t)$ where the
\emph{event labeling} of $T$ is a map
$t:V(T)\to \{\ROOT,\LEAF,\SPEC,\DUPL\}$ satisfying $t(0_T)=\ROOT$,
$t(x)=\LEAF$ for all $x\in L(T)$, and $t(x)\in\{\DUPL,\SPEC\}$ for
$x\in V^0(T)$. We interpret $\DUPL$ as gene duplication event and $\SPEC$
as speciation event.

A simple consequence of the Axioms \AX{(R0)}-\AX{(R3)} is the following
result which is stated here for later reference. For the sake of
completeness, we also provide a short proof.
\begin{lemma} 
  Let $\mu$ be a reconciliation map from the leaf-colored tree $(T,\sigma)$
  to $S=(W,F)$ and $x\in V(T)$ a vertex with $\mu(x)\in W^0$.  Then,
  $\sigma(L(T(v')))\cap \sigma(L(T(v''))) = \emptyset$ for any two distinct
  $v',v''\in \child(x)$.
  \label{lem:disjointSpecies}
\end{lemma}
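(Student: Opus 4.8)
The plan is to argue by contradiction, using the Speciation Constraint \AX{(R3.ii)}, which is available precisely because the hypothesis $\mu(x)\in W^0$ matches the premise of \AX{(R3)}. First I would suppose that the two species sets are \emph{not} disjoint for some pair of distinct children $v',v''\in\child(x)$. Then there is a species $s\in L(S)$ together with leaves $y'\in L(T(v'))$ and $y''\in L(T(v''))$ satisfying $\sigma(y')=\sigma(y'')=s$.

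Next I would lift this common species up to the images of the two children. Since $y'\preceq_T v'$, Ancestor Preservation \AX{(R2)} gives $\mu(y')\preceq_S\mu(v')$, and the Leaf Constraint \AX{(R1)} yields $\mu(y')=\sigma(y')=s$, so that $s\preceq_S\mu(v')$. The identical argument applied to $y''$ gives $s\preceq_S\mu(v'')$. (If a child is itself a leaf, then $L(T(v'))=\{v'\}$ and the relation holds with equality, so this degenerate case needs no separate treatment.)

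The crucial structural observation is then that the set of all $\preceq_S$-ancestors of the fixed vertex $s$ forms a $\preceq_S$-chain, since by definition these are exactly the vertices and edges lying on the unique path from $s$ to the planted root $0_S$. Consequently $\mu(v')$ and $\mu(v'')$, both being $\succeq_S s$, are comparable in $S$. This directly contradicts \AX{(R3.ii)}, which asserts that $\mu(v')$ and $\mu(v'')$ are incomparable, and the contradiction establishes the claimed disjointness.

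I expect no serious obstacle here; the argument is short and uses only \AX{(R1)}, \AX{(R2)}, and \AX{(R3.ii)}. The single point requiring a little care is that $\mu$ may send a child to an \emph{edge} of $S$ rather than to a vertex, so the comparability step must be carried out in the order $\preceq_S$ extended to $W\cup F$. Because the excerpt explicitly extends $\preceq_S$ to edges and the ``ancestors of a fixed vertex form a chain'' property survives this extension, this causes no real difficulty.
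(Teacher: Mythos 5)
Your argument is correct and is essentially the paper's own proof: both take a common species in the two subtrees, lift it via \AX{(R1)} and \AX{(R2)}/\AX{(R2')} to show that $\mu(v')$ and $\mu(v'')$ are both ancestors of that species, and then contradict the incomparability required by \AX{(R3.ii)} (the paper phrases this final step as two ancestor-paths forcing a cycle in $S$, whereas you invoke directly that the ancestors of a fixed vertex form a $\preceq_S$-chain). Your explicit handling of images lying on edges of $S$ and of the degenerate leaf-child case is a welcome bit of extra care but does not change the substance.
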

\begin{proof}
  Assume for contradiction that there is a vertex
  $z\in \sigma(L(T(v')))\cap \sigma(L(T(v'')))$.  By Condition \AX{(R2')},
  we have $\mu(x)\succ_S\mu(v')\succeq_S z$ and
  $\mu(x)\succ_S\mu(v'')\succeq_S z$.  Thus, there is a path $P_1$ from
  $\mu(x)$ to $z$ that contains $\mu(v')$ and a path $P_2$ from $\mu(x)$ to
  $z$ that contains $\mu(v'')$.  However, Condition \AX{(R3.ii)} implies
  that $\mu(v')$ and $\mu(v'')$ are incomparable in $S$, that is, the
  subtree of $S$ consisting of the two paths $P_1$ and $P_2$ must contain a
  cycle; a contradiction.  \qed
\end{proof}
Lemma~\ref{lem:disjointSpecies} has a simple interpretation: Since
$\mu(x)\in W^0$, we have $t_{\mu}(x)=\SPEC$, i.e., $x$ represents a
speciation. The lemma thus states that any two subtrees of $T$ rooted in
distinct children of a speciation event are composed of genes from disjoint
sets of species. It suggests the following
\begin{definition}
  An event labeling $t:V(T)\to \{\ROOT,\LEAF,\SPEC,\DUPL\}$ is
  \emph{well-formed} if $t(x)=\SPEC$ implies that
  $\sigma(L(T(v')))\cap \sigma(L(T(v''))) = \emptyset$ for any two distinct
  $v',v''\in \child(x)$.
  \label{def:well-formed}
\end{definition}

Lemma~\ref{lem:disjointSpecies} suggests to ask for a characterization of
the event maps $t$ for a given leaf-labeled tree $(T,\sigma)$ for which
$(T,t,\sigma)$ admits a reconciliation map to some species tree. Definition
\ref{def:well-formed} suggests to start by considering among the
well-formed event labelings the one that designates every vertex of $T$
that is not identified as a duplication because it violates
Lemma~\ref{lem:disjointSpecies}.
\begin{definition} 
  Let $(T,\sigma)$ be a leaf-labeled tree.  The \emph{extremal event
    labeling} of $T$ is the map $\tT:V(T)\to\{\ROOT,\LEAF,\SPEC,\DUPL\}$
  defined for $u\in V(T)$ by
\begin{equation*}
  \tT(u) = \begin{cases}
    \ROOT & \, \text{if } u=0_{T} \\
    \LEAF & \, \text{if } u\in L(T) \\
    \DUPL & \, \text{if there are two children } u_1,u_2\in \child(u)
    \text{ such that}\\
    & \qquad \sigma(L(T(u_1)))\cap \sigma(L(T(u_2)))\neq\emptyset\\
    \SPEC & \, \text{otherwise} \\
  \end{cases}
\end{equation*}
\label{def:event-rbmg2}
\end{definition}
The extremal event labeling $\tT$ is completely determined by $(T,\sigma)$.
By construction, if $u\in V^0(T)$ is a duplication w.r.t.\ to the extremal
event labeling $\tT(u)=\DUPL$, then $t(u)=\DUPL$ for every well-formed
event labeling $t$ on $(T,\sigma)$. 

It is a well-known result that it is always possible to reconcile a given
pair of gene tree $T$ and species tree $S$, see e.g.\
\cite{Guigo:96,Page:97,Gorecki:06}. For convenience, we include a short
direct proof of this fact.
\begin{lemma}
  \label{lem:norestriction}
  For every tree $(T=(V,E),\sigma)$ there is a reconciliation map $\mu$ to
  any species tree $S$ with leaf set $L(S)=\sigma(L(T))$.
\end{lemma}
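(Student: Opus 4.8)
The plan is to prove the statement constructively by exhibiting a single explicit reconciliation map, arranged so that \emph{every} inner vertex of $T$ becomes a duplication. The payoff of this design choice is that no vertex of $T$ is then sent into $W^0$, so the hypothesis ``$\mu(x)\in W^0$'' of Axiom \AX{(R3)} is never met and \AX{(R3)} holds vacuously; only \AX{(R0)}, \AX{(R1)}, and \AX{(R2)} remain to be verified. This reduces the whole lemma to producing an order-preserving placement of the inner vertices of $T$ onto the edges of $S$. Concretely, I would set $\mu(0_T)\coloneqq 0_S$ and $\mu(x)\coloneqq\sigma(x)$ for every leaf $x\in L(T)$, which settles \AX{(R0)} and \AX{(R1)} by construction (the assumption $L(S)=\sigma(L(T))$ guarantees the leaf images are genuine leaves of $S$). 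For an inner vertex $x\in V^0(T)$, let $u_x\coloneqq\lca_S(\sigma(L(T(x))))$ and define $\mu(x)$ to be the edge of $S$ joining $u_x$ to its parent $\parent_S(u_x)$. This edge exists because $u_x\preceq_S\rho_S\prec_S 0_S$, so $u_x\neq 0_S$ and thus has a parent in $S$ (possibly $0_S$, in which case $\mu(x)$ is the planted edge). Since $\mu(x)\in F$ for every inner $x$, no vertex of $T$ maps to $W^0$, as intended.

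The heart of the argument is then verifying Ancestor Preservation \AX{(R2)}. By transitivity of $\preceq_S$ it suffices to treat a parent--child pair, i.e.\ $x\prec_T y$ with $y=\parent_T(x)$. The key structural fact is $L(T(x))\subseteq L(T(y))$, which yields $\sigma(L(T(x)))\subseteq\sigma(L(T(y)))$ and therefore $u_x\preceq_S u_y$. I would split into two cases. If $u_x=u_y$, then $x$ and $y$ receive the same edge, so $\mu(x)=\mu(y)$ and $\mu(x)\preceq_S\mu(y)$ holds trivially. If $u_x\prec_S u_y$, then $u_y$ lies strictly above $u_x$ on the path to the root, whence $\parent_S(u_x)\preceq_S u_y$; using the convention that an edge sits strictly between its endpoints, this gives $\mu(x)\prec_S\parent_S(u_x)\preceq_S u_y\prec_S\mu(y)$, so $\mu(x)\prec_S\mu(y)$. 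The same reasoning shows $\sigma(\ell)\preceq_S u_x\prec_S\mu(x)$ for every leaf $\ell\in L(T(x))$, which takes care of the leaf--ancestor instances of \AX{(R2)}.

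I expect the only genuine obstacle to be the edge-order bookkeeping in \AX{(R2)}: one must carefully separate the degenerate case $u_x=u_y$ (where $\mu(x)=\mu(y)$ and only the weak inequality is available, which is exactly what (R2) demands) from the strict case, and apply the convention $y\prec_T e\prec_T x$ for an edge $e=xy$ correctly so that the chosen edge lands strictly above $u_x$ yet weakly below $\mu(y)$. Everything else is immediate, since \AX{(R0)} and \AX{(R1)} are baked into the definition and \AX{(R3)} is vacuous by design. Hence $\mu$ is a reconciliation map from $(T,\sigma)$ to $S$, establishing the claim. It is worth noting in passing that this construction is exactly the reconciliation induced by the extremal event labeling $\tT$ with every inner vertex forced to $\DUPL$, which is why the obstruction-free existence result holds for an arbitrary target species tree $S$.
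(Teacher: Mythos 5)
Your construction is correct, but it is genuinely different from the one in the paper. The paper's proof takes the laziest possible placement: it sends \emph{every} inner vertex of $T$ to the single root edge $e_0=0_S\rho_S$ of $S$, so that \AX{(R2)} is immediate (all inner images coincide, lie above every leaf of $S$, and below $0_S$) and \AX{(R3)} is vacuous for the same reason as in your argument, namely that no vertex lands in $W^0$. You instead place each inner vertex $x$ on the edge just above $\lca_S(\sigma(L(T(x))))$, which forces you to do the extra work of checking monotonicity of $u_x=\lca_S(\sigma(L(T(x))))$ along root-paths of $T$ (via $L(T(x))\subseteq L(T(y))$) and to handle the degenerate case $u_x=u_y$ where only the weak inequality of \AX{(R2)} is available; your case analysis there is correct, as is the observation that $u_x\neq 0_S$ so the parent edge exists. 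What the paper's version buys is brevity; what yours buys is that the resulting map is essentially the LCA-reconciliation of \AX{(LCA)} restricted to the all-duplication event labeling, i.e.\ the most parsimonious placement rather than the maximally wasteful one that stacks $|L(T)|$ gene copies above $\rho_S$. One small caveat on your closing remark: the labeling $t_\mu$ induced by your map assigns $\DUPL$ to every inner vertex, which is in general \emph{not} the extremal event labeling $\tT$ (the latter assigns $\SPEC$ whenever the children's species sets are pairwise disjoint), so the phrase ``the reconciliation induced by the extremal event labeling'' should be dropped or rephrased; this does not affect the validity of the proof.
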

\begin{proof}
  Let $S=(W,F)$ be an arbitrary species tree with leaf set $L(S)$ and
  $e_0 = 0_S\rho_S$ be the unique root-edge of $S$.  Set $\mu(0_T) = 0_S$
  and $\mu(v) = \sigma(v)$ for all $v\in L(T)$.  Thus, \AX{(R0)} and
  \AX{(R1)} are satisfied. Now, set $\mu(v) = e_0$ for all
  $v\in V^0 = V\setminus (L(T)\cup \{0_T\})$.  Thus, $\mu(v)\notin W^0$ for
  all $v\in V^0$ and \AX{(R3)} is trivially satisfied. Finally, for all
  $v, v'\in V^0$ and $y\in L(T)$ with $y\prec_T v\prec_T v '$ we have by
  construction of $\mu$ that
  $\mu(y)\prec_T \mu(v) = \mu(v')\prec_T\mu(0_T)$.  Thus, \AX{(R2)} is
  satisfied.
  \qed
\end{proof}
The reconciliation map $\mu$ constructed in the proof of
Lemma~\ref{lem:norestriction} maps all inner vertices of the gene tree to
the edge above the root of the species tree $S$, and hence
$t_{\mu}(x)=\DUPL$ for all inner vertices of $T$. The root of $S$
  already contains $|L(T)|$ genes, one for each leaf of $T$. Every
  speciation event is therefore accompanied by complementary losses, and
  there are no further gene duplication events below the root.

The assignment of genes to species, i.e., a prescribed leaf coloring
$\sigma$, however, implies further restrictions. In fact, it is not
sufficient to require that the event labeling is well-formed. Instead, the
simultaneous knowledge of $(T,t,\sigma)$ gives rise to strong conditions on
the species trees $S$ with which $(T,t,\sigma)$ can be
reconciled. Following \cite{HernandezRosales:12a}, we denote by
$\mathcal{S}(T,t,\sigma)$ the set of triples $\sigma(a)\sigma(b)|\sigma(c)$
for which $ab|c$ is a triple displayed by $T$ such that (i) $\sigma(a)$,
$\sigma(b)$, $\sigma(c)$ are pairwise distinct species and (ii) the root of
the triple is a speciation event, i.e., $t(\lca(a,b,c))=\SPEC$. This set of
triples characterizes the existence of a reconciliation map:
\begin{proposition} \cite{HernandezRosales:12a,Hellmuth2017}
  \label{prop:inftriple}
  Given an leaf-labeled tree $(T,t,\sigma)$ with a well-formed event
  labeling $t$ and a species tree $S$ with $L(S)=\sigma(L(T))$, there is a
  reconciliation map $\mu:V(T)\to V(S)\cup E(S)$ such that the event
  labeling is consistent with Definition~\ref{def:event-rbmg} if and only
  if $S$ displays $\mathcal{S}(T,t,\sigma)$. In particular, $(T,t,\sigma)$
  can be reconciled with a species tree if and only if
  $\mathcal{S}(T,t,\sigma)$ is a compatible set of triples.
\end{proposition}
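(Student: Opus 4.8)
The plan is to prove both directions through the reformulated axioms \AX{(R2')} and \AX{(R3')} established in the preceding lemma, since these make the role of the last common ancestor explicit; throughout I read ``the event labeling is consistent with Definition~\ref{def:event-rbmg}'' as $t_\mu=t$. For the direction ``reconciliation exists $\Rightarrow$ $S$ displays $\mathcal{S}(T,t,\sigma)$'', suppose $\mu$ is such a map and let $\sigma(a)\sigma(b)|\sigma(c)\in\mathcal{S}(T,t,\sigma)$ arise from a displayed triple $ab|c$ in $T$ with $u\coloneqq\lca_T(a,b,c)$ and $t(u)=\SPEC$. Since $ab|c$ is displayed, $a,b$ lie below one child $v'$ of $u$ and $c$ below another child $v''$. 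From $t_\mu(u)=\SPEC$ we get $\mu(u)\in W^0$, so \AX{(R3'.ii)} makes $\mu(v')$ and $\mu(v'')$ incomparable, while \AX{(R1)} and \AX{(R2')} give $\sigma(a),\sigma(b)\preceq_S\mu(v')$ and $\sigma(c)\preceq_S\mu(v'')$, hence $\lca_S(\sigma(a),\sigma(b))\preceq_S\mu(v')$. I would then observe that $\sigma(c)\preceq_S\lca_S(\sigma(a),\sigma(b))$ would make $\sigma(c)$ a common descendant of $\mu(v')$ and $\mu(v'')$ and force them to be comparable; ruling this out yields $\lca_S(\sigma(a),\sigma(b))\prec_S\lca_S(\sigma(a),\sigma(b),\sigma(c))$, i.e.\ $S$ displays the triple. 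As the triple was arbitrary, $S$ displays $\mathcal{S}(T,t,\sigma)$.

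For the converse I would construct $\mu$ explicitly. Set $\mu(0_T)=0_S$, $\mu(x)=\sigma(x)$ for $x\in L(T)$, and for an inner vertex $x$ write $w_x\coloneqq\lca_S(\sigma(L(T(x))))$; put $\mu(x)=w_x$ if $t(x)=\SPEC$ and let $\mu(x)$ be the edge of $S$ joining $w_x$ to its parent if $t(x)=\DUPL$. Well-formedness guarantees that a speciation vertex has at least two children with disjoint species sets, so $w_x$ is the last common ancestor of at least two distinct species and thus lies in $W^0$; this simultaneously gives \AX{(R3'.i)} and $t_\mu=t$. Axioms \AX{(R0)} and \AX{(R1)} are immediate. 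The ancestor constraint \AX{(R2')} follows from $L(T(x'))\subseteq L(T(x))$ whenever $x'\prec_T x$ (so $w_{x'}\preceq_S w_x$), together with the strictness observation that at a speciation $x$ one has $w_x=\lca_S(w_{v_1},\dots,w_{v_k})$ lying strictly above each $w_{v_i}$; I would simply check the four cases according to whether $x$ and its parent are labelled $\SPEC$ or $\DUPL$, using that the duplication placement sits just above its $w$.

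The heart of the argument, and the step where the hypothesis that $S$ displays $\mathcal{S}(T,t,\sigma)$ is actually consumed, is \AX{(R3'.ii)} at a speciation vertex $x$ with distinct children $v_i,v_j$. Writing $A=\sigma(L(T(v_i)))$ and $B=\sigma(L(T(v_j)))$ (disjoint by well-formedness), I would show $w_{v_i}$ and $w_{v_j}$ are incomparable. If instead $w_{v_i}\prec_S w_{v_j}$, then $A$ lies in a single child-subtree of $w_{v_j}$ while $B$ meets at least two; choosing $b,b'\in L(T(v_j))$ with $\sigma(b),\sigma(b')$ in distinct child-subtrees of $w_{v_j}$ and any $a\in L(T(v_i))$ produces a triple $\sigma(b)\sigma(b')|\sigma(a)\in\mathcal{S}(T,t,\sigma)$ that $S$ fails to display, because both relevant last common ancestors collapse to $w_{v_j}$ --- contradicting the hypothesis; the case $w_{v_i}=w_{v_j}$ is excluded by the same device. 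Since the duplication placement raises an image only just above its $w$, incomparability of the $w$'s upgrades to incomparability of $\mu(v_i)$ and $\mu(v_j)$. I expect this triple-based incomparability argument to be \textbf{the main obstacle}, as it must track the child-subtrees of $S$ carefully and separately dispose of the degenerate cases in which $A$ or $B$ is a singleton (where the corresponding $w$ is a leaf).

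Finally, for the ``in particular'' statement I would argue as follows. If $(T,t,\sigma)$ is reconcilable with some species tree $S$, then by the equivalence just proved $S$ displays $\mathcal{S}(T,t,\sigma)$, so this triple set is compatible. Conversely, if $\mathcal{S}(T,t,\sigma)$ is compatible, I would invoke \textsc{Build} to obtain a phylogenetic tree on leaf set $\sigma(L(T))$ that displays it --- appending as extra leaves any species occurring in no triple so that $L(S)=\sigma(L(T))$ --- and then apply the construction above to produce the desired reconciliation map.
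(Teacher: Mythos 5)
The paper does not actually prove Proposition~\ref{prop:inftriple}: it is imported verbatim from \citet{HernandezRosales:12a} and \citet{Hellmuth2017}, so there is no in-paper argument to compare against. Judged on its own, your proof is correct and, as far as I can tell, reconstructs the standard argument from the cited sources. The forward direction is sound: for a triple $ab|c$ rooted at a speciation $u$, consistency forces $\mu(u)\in W^0$, \AX{(R3'.ii)} makes the images of the two relevant children incomparable, and the ``common descendant forces comparability'' observation correctly rules out $\sigma(c)\preceq_S\lca_S(\sigma(a),\sigma(b))$, which is exactly the displayed-triple condition. The converse is the LCA-reconciliation the paper itself alludes to around Axiom \AX{(LCA)}: placing speciations at $\lca_S(\sigma(L(T(x))))$ and duplications on the edge just above it, with well-formedness guaranteeing that speciation images land in $W^0$ and hence that $t_\mu=t$. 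You correctly isolate the only nontrivial step --- incomparability of the children's images at a speciation vertex, which also supplies the strictness needed for \AX{(R2'.ii)} --- and your triple-based contradiction (two species of $B$ in distinct child-subtrees of $w_{v_j}$ versus any species of $A$) does consume the hypothesis in the right place; the degenerate singleton cases are handled since $w_{v_i}\prec_S w_{v_j}$ already forces $w_{v_j}$ to be a non-leaf, and $w_{v_i}=w_{v_j}$ a leaf would contradict disjointness. The lift from incomparability of the $w$'s to incomparability of the $\mu$-images (edges sitting immediately above their $w$'s) also works under the paper's extension of $\preceq_S$ to edges. Note that Lemma~\ref{lem:reconc-extremal} in the paper is essentially a degenerate instance of your construction (empty triple set), which is further evidence you are on the intended track; the only cosmetic gap is that the four-case verification of \AX{(R2')} is sketched rather than written out, but each case reduces to facts you do establish.
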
  
An example for a $(T,t,\sigma)$ that does not admit a reconciliation map is
given in Fig.~\ref{fig:counterex_extreme} (top left). We note that the
characterization in Proposition \ref{prop:inftriple} can be evaluated in
polynomial time \cite{Hellmuth2017}.
  
\begin{figure}
  \centering
  \includegraphics[width=0.8\textwidth]{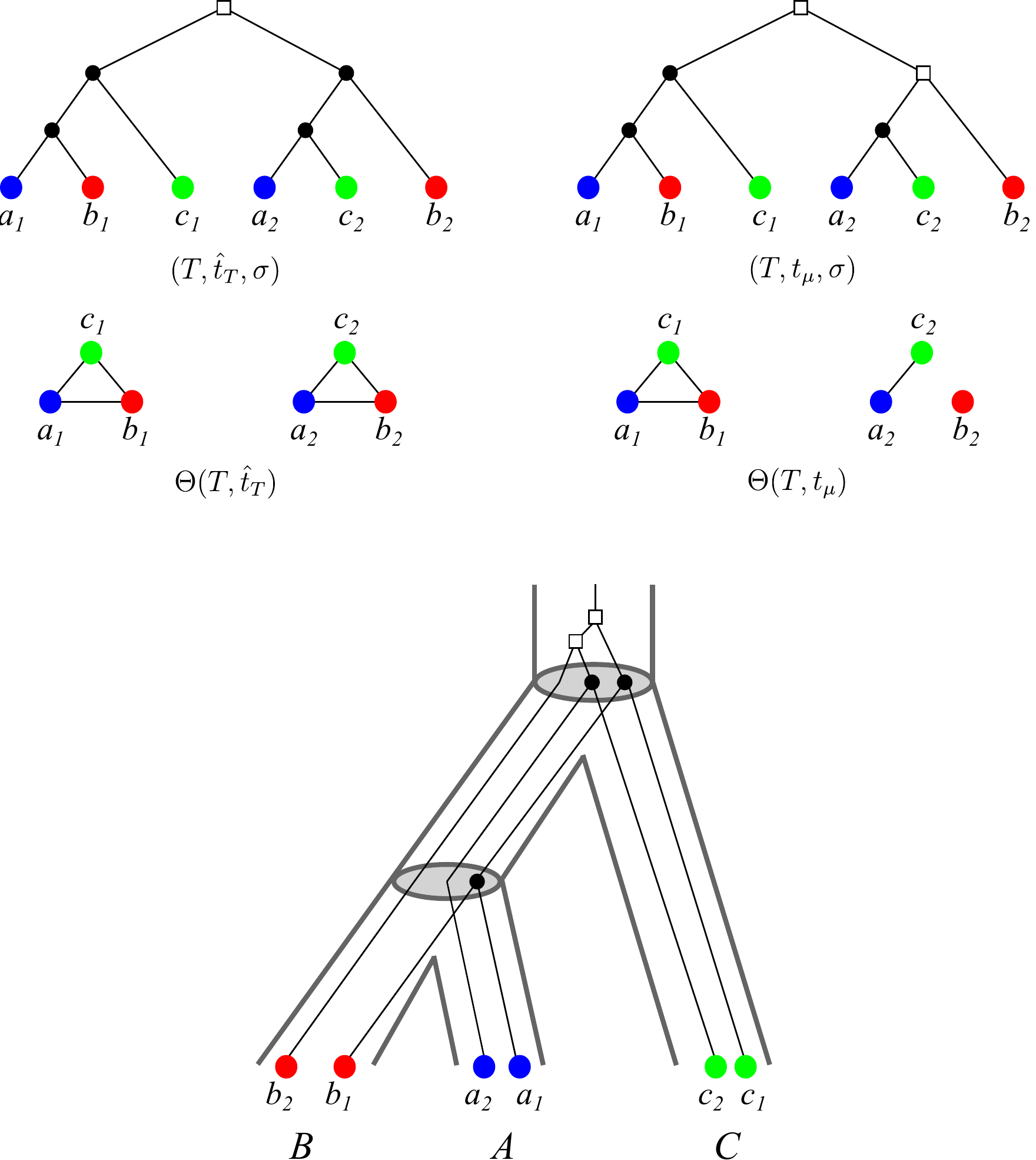}
  \caption{An example for $\Theta(T,t_{\mu})\subset \Theta(T,\tT)$.
    \emph{Top Left:} A gene tree $(T,\sigma)$ with extremal event labeling
    $\tT$, the corresponding orthology relation $\Theta(T,\tT)$ and map
    $\sigma(a_i)=A$, $\sigma(b_i)=B$ and $\sigma(c_i)=C$, $i=1,2$.  Here we
    obtain $AB|C, AC|B \in \mathcal{S}(T,\tT,\sigma)$ as conflicting
    species triples, making $\mathcal{S}(T,\tT,\sigma)$ incompatible.
    \emph{Top Right and Bottom:} The same tree $(T,\sigma)$ with another
    event labeling $t_\mu$ defined by the reconciliation map $\mu$ from
    $(T,\sigma)$ to the (tube-like) species tree $S$ as shown at the
    bottom.  The map $\mu$ is given implicitly by drawing $T$ into $S$. The
    corresponding orthology relation $\Theta(T,t_\mu)$ is shown below
    $(T,t_{\mu}, \sigma)$. Clearly, since $\mu$ exists,
    $\mathcal{S}(T,t_{\mu},\sigma) = \{AB|C\}$ is compatible (cf.\ Prop.\
    \ref{prop:inftriple}).}
  \label{fig:counterex_extreme}
\end{figure}

The event labeling $t$ on $T$ defines the orthology relation:
\begin{definition} \cite{Fitch:00}
  Two distinct leaves $x,y\in L(T)$ are \emph{orthologs (w.r.t.\ $t$)} if
  $t(\lca_T(x,y))=\SPEC$; they are \emph{paralogs} if
  $t(\lca_T(x,y))=\DUPL$.
\label{def:fitch}
\end{definition}
For completeness, we note that $t(\lca_T(x,y))=\LEAF$ if and only $x=y$,
and $0_T$ is never the $\lca$ of any of pair of leaves since the planted
root $0_T$ has degree $1$ by construction. We write $\Theta(T,t)$ for the
orthology relation obtained from $(T,t)$, i.e., the set of all unordered
pairs $\{x,y\}$ of orthologous genes in $L(T)$. For convenience we will not
distinguish between the irreflexive, symmetric binary relation
$\Theta(T,t)$ and the graph with vertex set $L(T)$ and edge set
$\Theta(T,t)$. Naturally, we say that an arbitrary relation $\Theta$ is an
orthology relation if there is an event-labeled phylogenetic tree $(T,t)$
such that $\Theta=\Theta(T,t)$. It is important to note that the orthology
relation $\Theta$ explicitly depends on the event labeling. Analogously,
one can also define the \emph{paralogy relation} $\overline{\Theta}$ by
$t(\lca_T(x,y))=\DUPL$. Both orthology and paralogy are irreflexive and
symmetric but not transitive, see Fig.\ \ref{fig:ortho}. We note that
orthology $\Theta$ and paralogy $\overline{\Theta}$ are complementary in
the graph-theoretical sense, i.e., $\{x,y\}$ is contained in exactly one of
$\Theta$ or $\overline{\Theta}$.

\begin{figure}
  \begin{tabular}{ccc}
    \begin{minipage}{0.61\textwidth}
      \includegraphics[width=\textwidth]{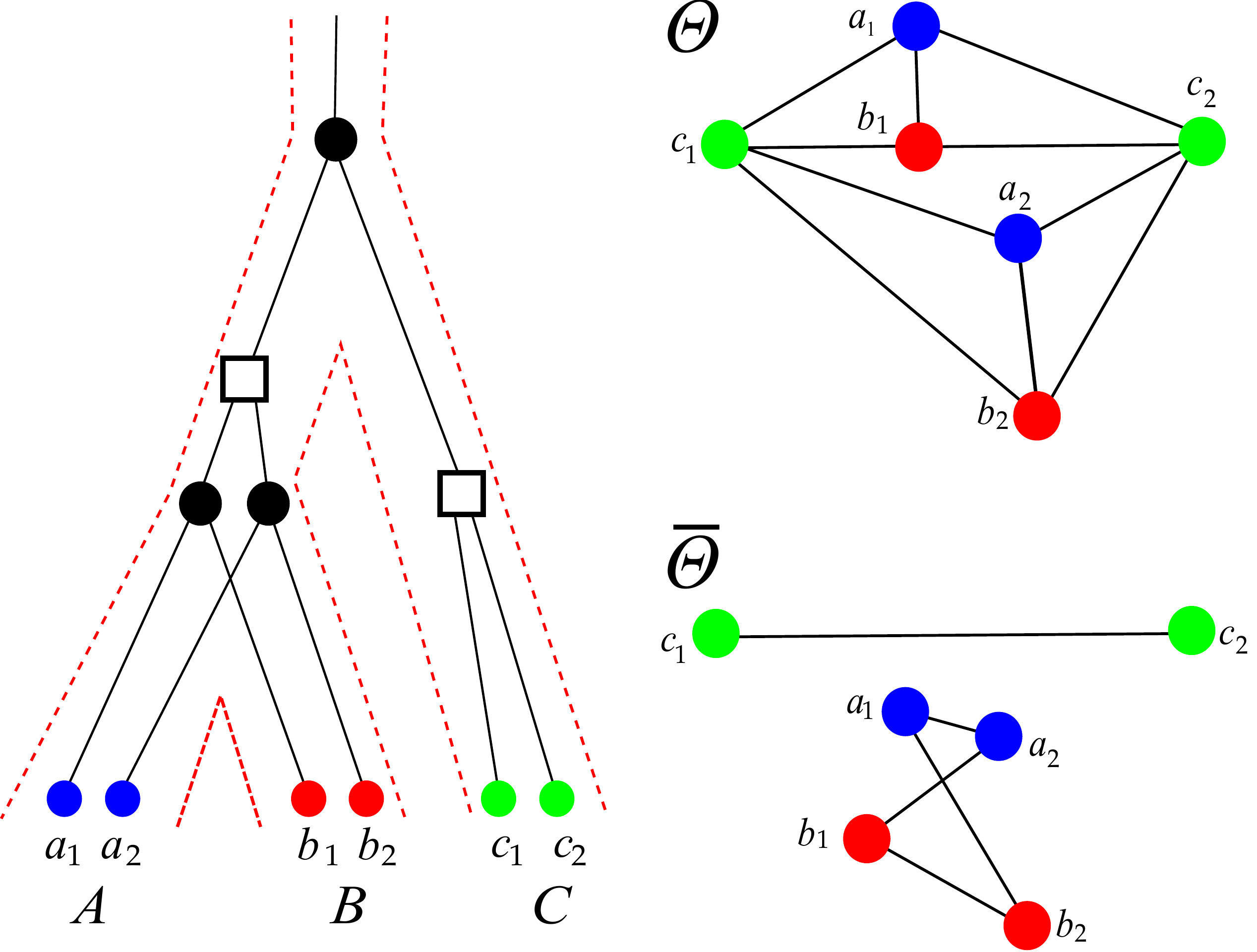}
    \end{minipage} &&
    \begin{minipage}{0.30\textwidth}
      \caption{Orthology and paralogy relations are symmetric but not
        transitive. In this evolutionary scenario with two speciations
        ($\SPEC$) and two duplications ($\DUPL$), the genes $a_1$ and $b_2$
        are both orthologs of $c_1$ but not of each other.  The leaves of
        the gene tree on the l.h.s.\ are colored corresponding to the three
        species $A$, $B$, and $C$.  The orthology graph $\Theta$ and its
        complement, the paralogy graph $\overline{\Theta}$, are shown on
        the r.h.s.}
      \label{fig:ortho}
    \end{minipage}
  \end{tabular}
\end{figure}
  
Based on the work of \citet{Boecker:98} it has been shown by
\citet{Hellmuth:13a} that valid orthology relations are exactly the
cographs:
\begin{proposition}
  An irreflexive, symmetric relation $\Theta$ on $L$ is an orthology
  relation if and only if it is a cograph. In this case, every cotree $T$
  of $\Theta$ with an event labeling $t$ assigning $\SPEC$ to join
  operations and $\DUPL$ to disjoint union operations satisfies
  $\Theta=\Theta(T,t)$. 
\label{prop:ortho-cograph}
\end{proposition}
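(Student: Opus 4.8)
The plan is to prove the two implications separately and to fold the explicit cotree construction of the second sentence into the ``cograph $\Rightarrow$ orthology relation'' direction. The only nontrivial ingredient I will use repeatedly is the elementary tree fact that for any three leaves $x,y,z$ at least two of the three pairwise last common ancestors coincide with $\lca_T(x,y,z)$ and the remaining one is $\preceq_T$ this common vertex; this is immediate from the identity $\lca(A\cup B)=\lca(\lca(A),\lca(B))$ recorded in the Preliminaries. In particular, whenever the two coinciding lcas carry labels that \emph{differ} from the label of the third, that third lca must lie \emph{strictly} below, since a single vertex cannot bear two event labels.

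First I would establish ``cograph $\Rightarrow$ orthology relation'' constructively, which simultaneously proves the final sentence. Given a cograph $\Theta$, pick any cotree $T_G\in\mathfrak{T}_\Theta$, adjoin a planted root $0_T$ above its conventional root to obtain a planted phylogenetic tree, and set $t(0_T)=\ROOT$, $t(x)=\LEAF$ on leaves, $t(u)=\SPEC$ on join vertices, and $t(u)=\DUPL$ on disjoint-union vertices. The heart of this direction is the observation that in any cotree the pair $\{x,y\}$ is an edge of $\Theta$ exactly when the operation at $\lca_{T_G}(x,y)$ is a join: a join connects every vertex of one child-subtree to every vertex of another, whereas a disjoint union creates no crossing edges. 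A one-line induction along the recursive construction of Definition~\ref{def:cograph} makes this precise. Hence $\{x,y\}\in\Theta$ iff $t(\lca_T(x,y))=\SPEC$, which by Definition~\ref{def:fitch} is precisely $\Theta=\Theta(T,t)$. Because this argument refers only to the join/union type of each inner vertex and is insensitive to contracting non-discriminating edges, it applies verbatim to \emph{every} cotree in $\mathfrak{T}_\Theta$, yielding the ``In this case'' claim.

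For the converse I would show directly that $\Theta=\Theta(T,t)$ has no induced $P_4$ and then invoke the $P_4$-free characterization of cographs quoted after Definition~\ref{def:cograph}. Suppose for contradiction that $a-b-c-d$ is an induced path, so the three edge-pairs have $\lca$ labeled $\SPEC$ and the three non-edge pairs have $\lca$ labeled $\DUPL$. Applying the triple fact to $\{a,b,c\}$: since $t(\lca(a,b))=t(\lca(b,c))=\SPEC$ but $t(\lca(a,c))=\DUPL$, the only admissible coincidence is $\lca(a,b)=\lca(b,c)=\lca(a,b,c)$ with $\lca(a,c)$ strictly below. Running the same argument on $\{b,c,d\}$ forces $\lca(b,c)=\lca(c,d)$, so all three equal one vertex $p$ with $t(p)=\SPEC$. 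Then $a,b,c,d$ lie in $T(p)$, the pairs $\{a,b\},\{b,c\},\{c,d\}$ are each split across distinct children of $p$, while $\lca(a,c)\prec_T p$ and $\lca(b,d)\prec_T p$ place $a,c$ in one child-subtree and $b,d$ in another. Tracking which child of $p$ contains each leaf then forces $a$ and $d$ into distinct child-subtrees, so $\lca(a,d)=p$ and $t(\lca(a,d))=\SPEC$, making $\{a,d\}$ an edge and contradicting the missing edge of the $P_4$.

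I expect the constructive direction to be routine once the edge-via-lca-label lemma for cotrees is stated, so the main obstacle is the $P_4$ argument in the converse: the care lies in deducing the triple-wise lca coincidences from the label pattern and then correctly reading off, from the distribution of $a,b,c,d$ among the children of $p$, that $\lca(a,d)$ is again $p$. A secondary point worth flagging is that $(T,t)$ need not be discriminating---consecutive inner vertices may share a type---but since orthology in Definition~\ref{def:fitch} depends only on the label of the single vertex $\lca_T(x,y)$, neither direction is affected, and the planted root $0_T$ never occurs as the $\lca$ of two distinct leaves.
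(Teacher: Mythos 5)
Your proof is correct, but note that the paper itself gives no proof of Proposition~\ref{prop:ortho-cograph}: it imports the result from the literature (\citet{Boecker:98} and \citet{Hellmuth:13a}), where it is obtained via the theory of symbolic ultrametrics and 2-structures --- roughly, one shows that the map $(x,y)\mapsto t(\lca_T(x,y))$ is a symbolic ultrametric precisely when a certain four-point condition holds, and that for a two-letter alphabet $\{\SPEC,\DUPL\}$ this condition is equivalent to $P_4$-freeness. Your argument reaches the same conclusion by a direct, self-contained route: the forward direction is the standard cotree--cograph correspondence (edge iff the $\lca$ is a join vertex), correctly adapted to the paper's conventions by adjoining the planted root $0_T$, and the converse excludes induced $P_4$s by the elementary three-leaf fact that two of the three pairwise $\lca$s coincide with the triple's $\lca$. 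Your deduction that the label pattern $\SPEC,\SPEC,\DUPL$ on $\{a,b,c\}$ forces $\lca(a,b)=\lca(b,c)=\lca(a,b,c)$ with $\lca(a,c)$ strictly below is sound (the differing labels rule out the other coincidences), and chaining this over $\{b,c,d\}$ to locate $a,c$ and $b,d$ in two distinct child subtrees of a common speciation vertex $p$ correctly yields $t(\lca(a,d))=\SPEC$, contradicting the missing end-edge of the $P_4$. What your approach buys is independence from the symbolic-ultrametric machinery; what it loses is the generality of that framework, which handles arbitrary event alphabets rather than just the two labels needed here.
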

There is a unique discriminating cotree $(T_{\Theta},t_{\Theta})$ for an
orthology relation $\Theta$, which is obtained from every other
(non-discriminating) cotree $(T,t)$ for $\Theta$ by contracting the inner
edges $uv$ of $T$ if and only if $t(u)=t(v)$
\cite{Boecker:98,Hellmuth:13a}.

It is natural then to ask under which conditions a given orthology relation
$\Theta$ is consistent with a leaf-labeled tree $(T,\sigma)$ in the sense
that there is a reconcilation map $\mu$ from $(T,\sigma)$ to some species
tree such that $\Theta=\Theta(T,t_{\mu})$.  We first consider the special
case $T=T_{\Theta}$.  As shown by \citet{HW:16book}, it is possible to
obtain the set of informative triples
$\mathcal{S}(T_{\Theta},t_{\Theta},\sigma)$ directly from $\Theta$ using
the following rule:\\
$\sigma(a)\sigma(b)|\sigma(c)\in \mathcal{S}(T,t,\sigma)$ if and only if
$\sigma(a),\sigma(b)$, and $\sigma(c)$ are pairwise different species and
either
\begin{enumerate}
\item[(a)] $(a,c),(b,c)\in \Theta$ and $(a,b)\not\in \Theta$ \emph{or}
\item[(b)] $(a,c),(b,c),(a,b)\in \Theta$ and there is a vertex
  $d\neq a,b,c$ with $(c,d)\in \Theta$ and $(a,d),(b,d)\notin \Theta$.
\end{enumerate}

\begin{theorem}
  Let $\Theta$ be a cograph with vertex set $L$ and associated cotree
  $(T_{\Theta},t_{\Theta})$ with leaf set $L$ and let $\sigma$ be a leaf
  coloring. Then there exists a reconciliation map $\mu$ from
  $(T_{\Theta},t_{\Theta},\sigma)$ to some species tree $S$ if and only if
  (i) $\mathcal{S}(T_{\Theta},t_{\Theta},\sigma)$ is compatible and (ii)
  the cograph $(\Theta,\sigma)$ is properly colored, i.e., for all
  $xy \in E(\Theta)$ we have $\sigma(x)\neq \sigma(y)$.
\label{thm:well-formed-cograph}
\end{theorem}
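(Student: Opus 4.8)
The plan is to reduce both directions to Proposition~\ref{prop:inftriple}, whose hypothesis already demands that the event labeling be well-formed. The key observation, which I would isolate as a claim, is that for the discriminating cotree $(T_{\Theta},t_{\Theta})$ the two conditions \emph{$t_{\Theta}$ is well-formed} and \emph{$(\Theta,\sigma)$ is properly colored} are equivalent. Granting this, the theorem reduces to routine bookkeeping with the results already proved.

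To establish the claim I would argue both contrapositives using the defining property of the cotree, namely that two leaves $x,y$ are adjacent in $\Theta$ (i.e.\ orthologous) precisely when $t_{\Theta}(\lca_{T_{\Theta}}(x,y))=\SPEC$. If $t_{\Theta}$ fails to be well-formed, there is a vertex $x$ with $t_{\Theta}(x)=\SPEC$ and distinct children $v',v''$ with $\sigma(L(T_{\Theta}(v')))\cap\sigma(L(T_{\Theta}(v'')))\neq\emptyset$; choosing $a\in L(T_{\Theta}(v'))$ and $b\in L(T_{\Theta}(v''))$ with $\sigma(a)=\sigma(b)$ forces $\lca_{T_{\Theta}}(a,b)=x$, so $ab\in E(\Theta)$ is a monochromatic edge and the coloring is not proper. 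Conversely, any monochromatic edge $ab\in E(\Theta)$ has $t_{\Theta}(\lca_{T_{\Theta}}(a,b))=\SPEC$ with $a$ and $b$ lying in distinct child-subtrees of $\lca_{T_{\Theta}}(a,b)$, which immediately violates well-formedness. This proves the claim.

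For the implication from (i) and (ii) to the existence of $\mu$: by (ii) and the claim, $t_{\Theta}$ is well-formed, so Proposition~\ref{prop:inftriple} applies, and since (i) guarantees that $\mathcal{S}(T_{\Theta},t_{\Theta},\sigma)$ is compatible, that proposition produces a reconciliation map $\mu$ whose induced labeling (Definition~\ref{def:event-rbmg}) is exactly $t_{\Theta}$. For the converse: given such a $\mu$, Lemma~\ref{lem:disjointSpecies} shows that every vertex $x$ with $t_{\mu}(x)=t_{\Theta}(x)=\SPEC$ has pairwise species-disjoint child-subtrees, so $t_{\Theta}$ is well-formed, and the claim then yields (ii). With well-formedness in hand, the forward half of Proposition~\ref{prop:inftriple} converts the existence of $\mu$ into the compatibility of $\mathcal{S}(T_{\Theta},t_{\Theta},\sigma)$, which is (i).

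The only genuinely substantive step is the claim, and even it is essentially immediate from how the orthology relation is read off the cotree. The point requiring care is the matching of event labelings: the reconciliation map of the theorem is one from $(T_{\Theta},t_{\Theta},\sigma)$, so throughout one must insist that the labeling $t_{\mu}$ determined by $\mu$ coincides with the prescribed $t_{\Theta}$, which is precisely the consistency-with-Definition~\ref{def:event-rbmg} clause in Proposition~\ref{prop:inftriple}. The explicit rule~(a)--(b) for reading $\mathcal{S}(T_{\Theta},t_{\Theta},\sigma)$ directly off $\Theta$ is not needed for the proof itself; it only serves to make condition~(i) checkable from $\Theta$ alone.
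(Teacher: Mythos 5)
Your proof is correct and follows essentially the same route as the paper: both reduce the theorem to Proposition~\ref{prop:inftriple} and hinge on the observation that, for the cotree $(T_{\Theta},t_{\Theta})$, well-formedness of $t_{\Theta}$ is equivalent to $(\Theta,\sigma)$ being properly colored. Your explicit appeal to Lemma~\ref{lem:disjointSpecies} for the necessity of well-formedness is a slightly more careful rendering of a step the paper leaves implicit, but the argument is the same.
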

\begin{proof}
  By Proposition \ref{prop:inftriple}, it is necessary and sufficient that
  (i) the set of informative triples is compatible and (ii) the event map
  $t_{\Theta}$ is well-formed. Since $t_{\Theta}$ is the event labeling of
  the co-tree, Condition (ii) amounts to requiring that the leaf set
  $L(T(v_i))$ have pairwise disjoint sets of colors $\sigma(L(T(v_i)))$ for
  all children $v_i\in\child(u)$ of every join node $u$. Since the join
  $\Theta_i \join \Theta_j$ of the two cographs associated with $T(v_i)$
  and $T(v_j)$ introduces an edge $xy$ for all $x\in L(T(v_i))$ and all
  $y\in L(T(v_j))$, the resulting graph can only be properly colored if
  $\sigma(L(T(v_i)))\cap \sigma(L(T(v_j)))=\emptyset$. On the other hand,
  every edge in $\Theta$ is the result of a join operation, thus
  $(\Theta,\sigma)$ can only be well-colored if joins only appear between
  induced subgraphs with disjoint color sets. Thus $t_{\Theta}$ is
  well-formed if and only if $\sigma$ is a proper vertex coloring for
  $\Theta$.
\end{proof}

Under the assumption that a reconciliation map $\mu$ exists for
$(T,\sigma)$ to some species tree, the next results shows that the
orthology relation $\Theta(T,t_{\mu})$ is always a subgraph of the
orthology relation $\Theta(T,\tT)$ implied by $(T,\sigma)$ and its extremal
labeling $\tT$.
\begin{lemma}
  Let $(T,\sigma)$ be a leaf-labeled tree and $\mu$ a reconciliation map
  from $(T,\sigma)$ to some species tree $S$. Then
  $\Theta(T,t_{\mu})\subseteq \Theta(T,\tT)$.
\label{lem:tm-subgraph-tt}
\end{lemma}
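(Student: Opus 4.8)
The plan is to reduce the asserted containment of orthology relations to a pointwise comparison of the two event labelings at the inner vertices of $T$. Since orthology is defined purely through the label at a last common ancestor---$\{x,y\}\in\Theta(T,t)$ precisely when $t(\lca_T(x,y))=\SPEC$---it suffices to prove the implication $t_\mu(u)=\SPEC \Rightarrow \tT(u)=\SPEC$ for every vertex $u$ that arises as $u=\lca_T(x,y)$ for distinct leaves $x,y$. Any such $u$ is necessarily an inner vertex: it is not a leaf, and it cannot be the planted root $0_T$, which has degree one and hence is never the $\lca$ of a pair of leaves. Thus the full statement follows once the implication is established on $V^0(T)$, where both labelings take values in $\{\SPEC,\DUPL\}$.

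Next I would unfold the two labelings at such a vertex $u$. By Definition~\ref{def:event-rbmg}, $t_\mu(u)=\SPEC$ means exactly that $\mu(u)\in W^0$, i.e.\ $u$ is mapped to an inner vertex of the species tree $S$. The crucial step is then to invoke Lemma~\ref{lem:disjointSpecies}: since $\mu(u)\in W^0$, the color sets $\sigma(L(T(v')))$ and $\sigma(L(T(v'')))$ are disjoint for any two distinct children $v',v''\in\child(u)$. In other words, no pair of children of $u$ shares a species.

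Finally I would read off $\tT(u)$ from Definition~\ref{def:event-rbmg2}. The extremal labeling assigns $\DUPL$ to an inner vertex exactly when two of its children have overlapping species sets, and $\SPEC$ otherwise. The disjointness just established rules out the $\DUPL$ case, so $\tT(u)=\SPEC$. Applying this at $u=\lca_T(x,y)$ gives $\{x,y\}\in\Theta(T,\tT)$ whenever $\{x,y\}\in\Theta(T,t_\mu)$, which is the desired inclusion $\Theta(T,t_\mu)\subseteq\Theta(T,\tT)$.

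I do not anticipate a genuine obstacle here: the argument is essentially a dictionary translation in which Lemma~\ref{lem:disjointSpecies} does all the work. Conceptually, the extremal labeling is the ``maximally speciation'' well-formed labeling---it declares a vertex a duplication only when forced to by overlapping species among its children---whereas any reconciliation map $\mu$ must satisfy the stronger requirement that every genuine speciation vertex has children with pairwise disjoint species. The only point that calls for a little care is the bookkeeping that $u=\lca_T(x,y)$ indeed lies in $V^0(T)$, so that the $\SPEC/\DUPL$ dichotomy of the two labelings actually applies.
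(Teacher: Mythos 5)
Your argument is correct and coincides with the paper's own proof: both take $u=\lca_T(x,y)$ for an orthologous pair, note that $t_\mu(u)=\SPEC$ forces $\mu(u)\in W^0$, invoke Lemma~\ref{lem:disjointSpecies} to get pairwise disjoint color sets among the children of $u$, and read off $\tT(u)=\SPEC$ from Definition~\ref{def:event-rbmg2}. The only difference is your explicit (and harmless) remark that such a $u$ lies in $V^0(T)$, which the paper leaves implicit.
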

\begin{proof}
  Let $u=\lca_T(x,y)$ and suppose $xy\in \Theta(T,t_{\mu})$. Then,
  $t_\mu(u)=\SPEC$ by definition of $\Theta(T,t_{\mu})$, i.e.,
  $\mu(u)\in V^0(S)$.  Therefore, Lemma \ref{lem:disjointSpecies} implies
  $\sigma(L(T(v)))\cap\sigma(L(T(v')))=\emptyset$ for all
  $v,v'\in\child_T(u)$.  Hence, $\tT(u)=\SPEC$ by definition of the
  extremal event labeling and thus $xy\in \Theta(T,\tT)$. \qed
\end{proof}
The converse of Lemma \ref{lem:tm-subgraph-tt} is generally not true, see
Fig.~\ref{fig:counterex_extreme} for an example. For later reference, we
note the following result which is an immediate consequence of Lemma
\ref{lem:tm-subgraph-tt} due to fact that orthology and paralogy relation
are complementary.
\begin{corollary}
  Let $(T,\sigma)$ be a leaf-labeled tree and $\mu$ a reconciliation map
  from $(T,\sigma)$ to some species tree $S$. Then
  $\overline{\Theta}(T,\tT)\subseteq \overline{\Theta}(T,t_\mu)$.
\label{cor:para-eventmap}
\end{corollary}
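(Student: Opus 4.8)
The plan is to derive Corollary~\ref{cor:para-eventmap} directly from Lemma~\ref{lem:tm-subgraph-tt} by invoking the fact, already established in the discussion following Definition~\ref{def:fitch}, that the orthology and paralogy relations are complementary: for any event-labeled tree $(T,t)$ every pair $\{x,y\}$ of distinct leaves lies in exactly one of $\Theta(T,t)$ and $\overline{\Theta}(T,t)$. Concretely, since both $t_\mu$ and $\tT$ are event labelings on the same tree $T$ with the same leaf set $L(T)$, the relations $\Theta(T,t_\mu)$, $\overline{\Theta}(T,t_\mu)$ and $\Theta(T,\tT)$, $\overline{\Theta}(T,\tT)$ are all subsets of the same ground set of unordered leaf pairs, namely $\binom{L(T)}{2}$. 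This common ground set is exactly what makes the complementation argument work cleanly.

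The key step is the standard set-theoretic observation that complementation reverses inclusions. First I would record that, because orthology and paralogy partition $\binom{L(T)}{2}$, we have $\overline{\Theta}(T,t) = \binom{L(T)}{2}\setminus \Theta(T,t)$ for each of the two labelings. Lemma~\ref{lem:tm-subgraph-tt} gives $\Theta(T,t_\mu)\subseteq \Theta(T,\tT)$. Taking complements within the fixed ground set $\binom{L(T)}{2}$ then yields
\begin{equation*}
\overline{\Theta}(T,\tT) = \binom{L(T)}{2}\setminus \Theta(T,\tT) \subseteq \binom{L(T)}{2}\setminus \Theta(T,t_\mu) = \overline{\Theta}(T,t_\mu),
\end{equation*}
which is precisely the claimed inclusion. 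One can also argue at the level of a single pair without invoking the complement explicitly: if $\{x,y\}\in\overline{\Theta}(T,\tT)$ then $\tT(\lca_T(x,y))=\DUPL$, so $\{x,y\}\notin\Theta(T,\tT)$; by Lemma~\ref{lem:tm-subgraph-tt} this forces $\{x,y\}\notin\Theta(T,t_\mu)$, i.e.\ $t_\mu(\lca_T(x,y))\neq\SPEC$, and since $x\neq y$ implies $\lca_T(x,y)\in V^0(T)$ is an inner vertex labeled either $\SPEC$ or $\DUPL$, we conclude $t_\mu(\lca_T(x,y))=\DUPL$, that is $\{x,y\}\in\overline{\Theta}(T,t_\mu)$.

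There is essentially no obstacle here; the only point that requires a moment's care is making sure the two labelings share the same ground set and that every inner vertex receives a label in $\{\SPEC,\DUPL\}$, so that ``not a speciation'' really does mean ``a duplication.'' This is guaranteed because $\tT$ and $t_\mu$ are both event labelings on the identical tree $T$, and the remark after Definition~\ref{def:fitch} already confirms that for distinct leaves $x,y$ the vertex $\lca_T(x,y)$ is an inner vertex (never $0_T$ and never a leaf). Given these facts the corollary is an immediate formal consequence, which is why it is stated as a corollary rather than proved from scratch.
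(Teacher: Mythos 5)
Your proposal is correct and matches the paper exactly: the paper dispenses with this corollary by noting it is an immediate consequence of Lemma~\ref{lem:tm-subgraph-tt} together with the complementarity of the orthology and paralogy relations, which is precisely the inclusion-reversal-under-complementation argument you spell out. Your additional pairwise check that $\lca_T(x,y)$ is an inner vertex labeled in $\{\SPEC,\DUPL\}$ is a harmless elaboration of the same reasoning.
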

Lemma \ref{lem:tm-subgraph-tt}, in particular, implies that none of the
labelings $t_{\mu}$ (provided by any reconciliation map $\mu$) can yield
more speciation events in $T$, than the extremal labeling $\tT$. Moreover,
it is easy to see that $t_{\mu}(v) = \SPEC$ always implies 
$\tT(v) = \SPEC$, while $\tT(v) = \DUPL$ implies $t_{\mu}(v) = \DUPL$.

We briefly compare the formalism introduced here with the literatur on
maximum parsimony reconciliations. There, one considers reconciliation maps
$\eta: V(T)\to V(S)$ that map duplication events in $T$ also to vertices of
$S$. The mapping $\eta$ is then interpreted in such a way that the
duplication event $u$ took place along an edge in $S$ that is ancestral to
$\eta(u)$. The map $\eta$ in this setting does not completely determine the
event labeling. The last common ancestor map
\begin{equation}
  \hat\eta(v):= \lca_{S}(\sigma(L(T(v))))\,.
  \label{eq:eta-LCA}
\end{equation}
corresponds to one of the ``most parsimonious reconciliations''
\cite{Gorecki:06,Doyon:09} and can be obtained in polynomial time.  A
closely related reconciliation map can be defined in our setting. The
\emph{LCA-reconciliation map} introduced by \citet{Hellmuth2017} satisfies
the additional axiom
\begin{description}
\item[\AX{(LCA)}] $\mu(u) = v\lca_S(\sigma(L(T(u))))\in E(S)$ for all
  $u\in V(T)$ with $t(u) = \DUPL$, where $v$ denotes the unique parent of
  $\lca_S(\sigma(L(T(u))))\in V(S)$ in $S$.
\end{description}
The Axiom \AX{(LCA)} is the analog of Equ.(\ref{eq:eta-LCA}) for
duplication vertices in $T$, which in our formalism are necessarily mapped
to edges.  For speciation events, the corresponding condition is expressed
by \AX{(R3.i)}. \citet{Hellmuth2017} showed that the existence of a
reconciliation map from $(T,t,\sigma)$ implies also the existence of an
LCA-reconciliation map. Fig.\ \ref{fig:counterex_extreme} shows that an
LCA-reconciliation map does not necessarily have $\tT$ as its event
labeling. Even if $t_{\mu}=\tT$, then $\mu$ is not necessarily an
LCA-reconciliation map, see Fig.\ \ref{fig:counterex_extreme-2}.

\begin{figure}
  \centering
  \includegraphics[width=0.4\textwidth]{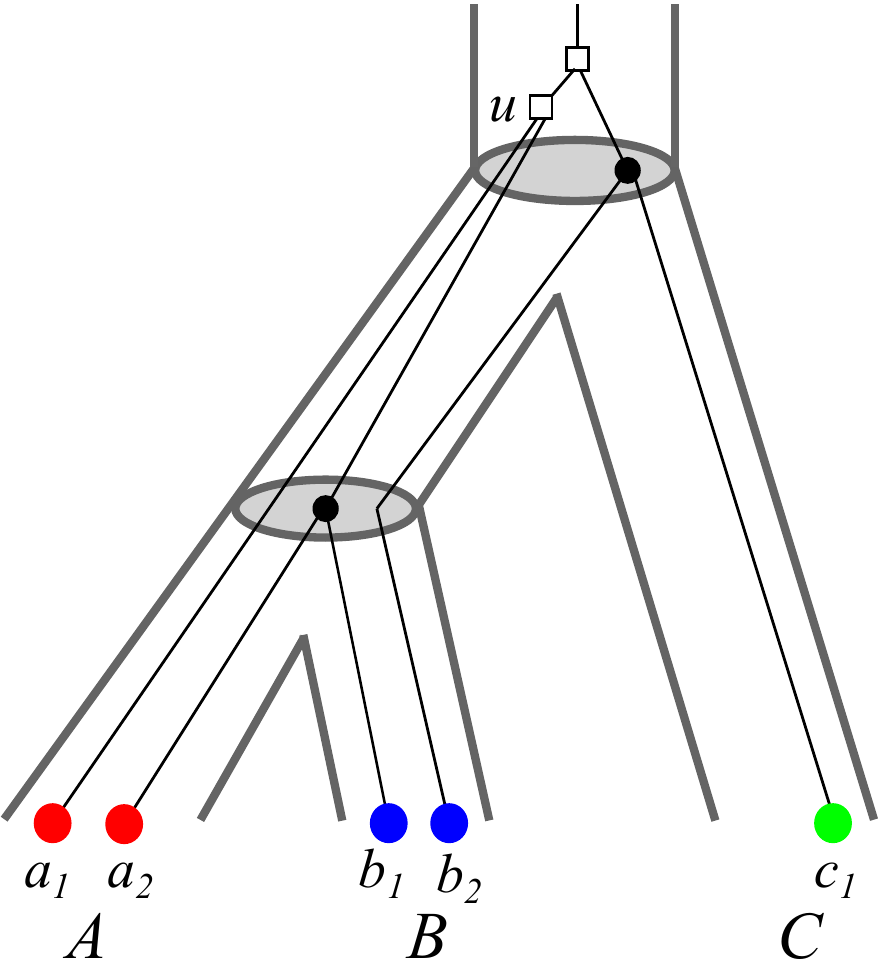}
  \caption{Reconciliation map $\mu$ from $(T,\sigma)$ to the (tube-like)
    species tree $S$.  The map $\mu$ is given implicitly by drawing
    $(T,\sigma)$ into $S$.  The map $\mu$ is not an LCA-reconciliation map
    since $\mu(u)$ does not map $u$ to the edge $v\lca_S(A,B)\in E(S)$
    where $v$ denotes the unique parent of $\lca_S(A,B)$ in $S$. However,
    $t_{\mu}$ and the extremal map $\tT$ coincide. }
  \label{fig:counterex_extreme-2}
\end{figure}

\section{Orthology and Reciprocal Best Matches}\label{sec:OrthologyBMG}

In this section, we further clarify the relationship between the orthology
relation and (reciprocal) best matches. As a main result, we find that the
reciprocal best match graph contains any possible orthology relation.

\begin{lemma}
  If $(T,\sigma)$ with leaf set $L$ explains the RBMG $(G,\sigma)$ and
  $\tT$ is the extremal event labeling of $(T,\sigma)$, then
  $\Theta(T,\tT)$ is a subgraph of the RBMG $G(T,\sigma)$.
  \label{lem:tt-subgraph-rbmg}
\end{lemma}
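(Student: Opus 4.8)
The plan is to show directly that every pair that is orthologous with respect to the extremal labeling is in fact a reciprocal best match. So let $\{x,y\}\in\Theta(T,\tT)$ and set $u\coloneqq\lca_T(x,y)$. By Definition~\ref{def:fitch} we have $\tT(u)=\SPEC$, and hence, by Definition~\ref{def:event-rbmg2} of the extremal event labeling, the color sets $\sigma(L(T(v')))$ and $\sigma(L(T(v'')))$ are disjoint for any two distinct children $v',v''\in\child(u)$. Write $v_x,v_y\in\child(u)$ for the children of $u$ with $x\in L(T(v_x))$ and $y\in L(T(v_y))$; since $u=\lca_T(x,y)$ these children are distinct, so the disjointness of the child color sets immediately yields $\sigma(x)\neq\sigma(y)$, which is the first requirement of the best-match definition.

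The heart of the argument is to verify that $y$ is a best match of $x$; the reciprocal statement will then follow verbatim by exchanging the roles of $x$ and $y$. By the best-match definition it remains to show $u=\lca_T(x,y)\preceq_T\lca_T(x,y')$ for every leaf $y'$ with $\sigma(y')=\sigma(y)$. I would distinguish two cases according to whether $y'$ lies in $T(u)$ or not. If $y'\in L(T(u))$, then because $\sigma(y')=\sigma(y)$ belongs to $\sigma(L(T(v_y)))$ and the child color sets of $u$ are pairwise disjoint, $y'$ must already lie in $L(T(v_y))$; as $x\in L(T(v_x))$ with $v_x\neq v_y$, this forces $\lca_T(x,y')=u$. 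If $y'\notin L(T(u))$, then $x\in L(T(u))$ while $y'\notin L(T(u))$, so their last common ancestor is a proper ancestor of $u$, i.e.\ $\lca_T(x,y')\succ_T u$. In both cases $\lca_T(x,y)\preceq_T\lca_T(x,y')$, as required, so $y$ is a best match of $x$.

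Combining the two symmetric directions shows that $x$ and $y$ are reciprocal best matches, hence $xy\in E(G(T,\sigma))$. Since this holds for every edge of $\Theta(T,\tT)$, and both graphs have the same vertex set $L$, we conclude $\Theta(T,\tT)\subseteq G(T,\sigma)$. I expect the only delicate point to be the case distinction in the second paragraph; the whole argument hinges on translating $\tT(u)=\SPEC$ into the disjointness of the child color sets of $u$, which is precisely what pins $u$ down as the lowest possible last common ancestor between $x$ and any copy of $y$'s color. Note that the hypothesis that $(T,\sigma)$ \emph{explains} $(G,\sigma)$ is not actually needed for this inclusion, as the conclusion is phrased directly in terms of $G(T,\sigma)$.
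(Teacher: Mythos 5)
Your proof is correct and follows essentially the same route as the paper's: both translate $\tT(u)=\SPEC$ into the pairwise disjointness of the color sets $\sigma(L(T(v)))$ over the children $v$ of $u=\lca_T(x,y)$, and conclude directly that no leaf of color $\sigma(y)$ can have a last common ancestor with $x$ strictly below $u$ (and symmetrically), so $x,y$ are reciprocal best matches. You merely spell out the case distinction ($y'$ inside versus outside $L(T(u))$) that the paper leaves implicit, and your closing remark that the ``explains'' hypothesis is not needed is accurate.
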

\begin{proof}
  Consider a vertex $u\in V^0(T)$ with $\child(u) = \{u_1,\dots, u_k\}$.
  If $\tT(u) = \DUPL$, then none of the edges $xy$ in $G$ with
  $x\in L(T(u_i))$ and $y\in L(T(u_j))$, $1\leq i<j\leq k$ is contained
  in $\Theta(T,\tT)$.\\
  Now suppose $\tT(u) = \SPEC$. For $x\in L(T(u_i))$ and $y\in L(T(u_j))$
  with $1\leq i<j\leq k$, we have $xy\in \Theta(T,\tT)$ and, by
  construction of $\tT$, $\sigma(x)\neq \sigma(y)$.  In particular,
  $\tT(u) = \SPEC$ implies that all distinct children
  $u_i,u_j\in \child(u)$ satisfy
  $\sigma(L(T(u_i)))\cap \sigma(L(T(u_j))) =\emptyset$.  Thus,
  $\lca_T(x,y) = u\preceq_T \lca_T(x',y)$ for all $x'\neq x$ with
  $\sigma(x') = \sigma(x)$ and $\lca_T(x,y) = u\preceq_T\lca_T(x,y')$ for
  all $y'\neq y$ with $\sigma(y') = \sigma(y)$, i.e., $x$ and $y$ are
  reciprocal best matches. Hence, $xy\in E(G)$ and thus
  $\Theta(T,\tT)\subseteq G(T,\sigma)$.  \qed
\end{proof}

Lemma \ref{lem:tm-subgraph-tt} and \ref{lem:tt-subgraph-rbmg} immediately imply 
\begin{theorem}
  Let $T$ and $S$ be planted trees, $\sigma: L(T)\to L(S)$ a surjective
  map, and $\mu$ a reconciliation map from $(T,\sigma)$ to $S$. If
  $xy\in \Theta(T,t_{\mu})$, then $x$ and $y$ are reciprocal best matches
  in $(T,\sigma)$.
	\label{thm:orthoIMPLYrbm}
\end{theorem}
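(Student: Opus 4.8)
The plan is to prove the statement by composing the two preceding lemmas, since the theorem is precisely the transitive chaining of their containments. First I would invoke Lemma~\ref{lem:tm-subgraph-tt}: applied to the reconciliation map $\mu$ from $(T,\sigma)$ to $S$, it gives $\Theta(T,t_{\mu})\subseteq \Theta(T,\tT)$, where $\tT$ is the extremal event labeling determined by $(T,\sigma)$. This step transfers the hypothesis $xy\in\Theta(T,t_\mu)$, phrased in terms of the event labeling induced by $\mu$, to the purely combinatorial labeling $\tT$ that depends only on $(T,\sigma)$.

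Second, I would apply Lemma~\ref{lem:tt-subgraph-rbmg}. To match its hypothesis I set $G:=G(T,\sigma)$, the RBMG whose edges are exactly the reciprocal best matches in $(T,\sigma)$. Then $(T,\sigma)$ trivially explains $(G,\sigma)$, so the lemma applies and yields $\Theta(T,\tT)\subseteq E(G(T,\sigma))$. Combining the two inclusions by transitivity gives $\Theta(T,t_{\mu})\subseteq E(G(T,\sigma))$, so any edge $xy\in\Theta(T,t_\mu)$ lies in the RBMG; unwinding the definition of $E(G(T,\sigma))$ then says exactly that $x$ and $y$ are reciprocal best matches in $(T,\sigma)$, which is the claim.

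There is essentially no obstacle here: all the substantive work has already been carried out in Lemmas~\ref{lem:tm-subgraph-tt} and~\ref{lem:tt-subgraph-rbmg}, and what remains is a one-line composition of set inclusions. The only point requiring a moment's care is the bookkeeping about which RBMG is intended when applying Lemma~\ref{lem:tt-subgraph-rbmg}; this is resolved immediately by taking the RBMG induced by $(T,\sigma)$ itself, for which the ``explains'' hypothesis holds by construction. Hence I expect the final proof to be just the remark that the two lemmas imply $\Theta(T,t_{\mu})\subseteq E(G(T,\sigma))$, followed by reading off the definition of reciprocal best matches.
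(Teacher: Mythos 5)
Your proposal is correct and is exactly the paper's argument: the authors state that Lemma~\ref{lem:tm-subgraph-tt} and Lemma~\ref{lem:tt-subgraph-rbmg} ``immediately imply'' the theorem, i.e.\ the same two-step chaining $\Theta(T,t_{\mu})\subseteq \Theta(T,\tT)\subseteq E(G(T,\sigma))$ that you describe. Your added remark about instantiating the ``explains'' hypothesis with $G:=G(T,\sigma)$ is the right (and only) bookkeeping point.
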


\begin{fact}
Reciprocal best matches therefore cannot produce false negative orthology
assignments as long as the evolution of a gene family proceeds via
duplications, losses, and speciations only. 
\label{fact:false-pos}
\end{fact}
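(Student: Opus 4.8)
The plan is to read this Observation as a direct translation of Theorem~\ref{thm:orthoIMPLYrbm} into the language of orthology prediction, so the work lies almost entirely in fixing the right interpretation of the informal phrase ``false negative'' rather than in any new argument. First I would make the ground-truth objects precise. In a duplication/loss scenario the data consist of a gene tree $(T,\sigma)$, a species tree $S$, and a reconciliation map $\mu$ from $(T,\sigma)$ to $S$; such a $\mu$ exists precisely because evolution is assumed to proceed only via duplications, losses, and speciations (and by Lemma~\ref{lem:norestriction} one is always available). The \emph{true} orthology relation is then $\Theta(T,t_\mu)$, determined by the event labeling $t_\mu$ induced by $\mu$ (Definition~\ref{def:event-rbmg}). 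The reciprocal-best-match heuristic, by contrast, uses only $(T,\sigma)$ and declares a pair $\{x,y\}$ orthologous precisely when $xy$ is an edge of the RBMG $G(T,\sigma)$.

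With this setup the notion of a false negative becomes formal: it is a pair $\{x,y\}$ that is genuinely orthologous, $xy\in\Theta(T,t_\mu)$, yet is missed by the heuristic, $xy\notin E(G(T,\sigma))$. I would then rule such pairs out by invoking Theorem~\ref{thm:orthoIMPLYrbm}: any $xy\in\Theta(T,t_\mu)$ forces $x$ and $y$ to be reciprocal best matches in $(T,\sigma)$, hence $xy\in E(G(T,\sigma))$. Equivalently, one chains the two inclusions $\Theta(T,t_\mu)\subseteq\Theta(T,\tT)$ (Lemma~\ref{lem:tm-subgraph-tt}) and $\Theta(T,\tT)\subseteq G(T,\sigma)$ (Lemma~\ref{lem:tt-subgraph-rbmg}) to obtain $\Theta(T,t_\mu)\subseteq E(G(T,\sigma))$, so no orthologous pair lies outside the RBMG and a false negative cannot occur.

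I expect no genuine mathematical obstacle here, since the statement is an interpretive corollary rather than a new result; the only points requiring care are modeling ones. First, the inclusion holds for \emph{one fixed} true scenario $\mu$, so the argument must not be misread as a claim about all conceivable labelings simultaneously. Second, the restriction to duplication/loss scenarios is essential: it guarantees that $\mu$ (and therefore $t_\mu$ and $\Theta(T,t_\mu)$) is defined at all, and Lemma~\ref{lem:disjointSpecies}---which underlies Lemma~\ref{lem:tt-subgraph-rbmg}---relies on the speciation axiom \AX{(R3.ii)} that horizontal transfer would violate. I would therefore flag explicitly that once HGT is admitted these guarantees break, which is exactly the situation deferred to Section~\ref{sect:HGT}.
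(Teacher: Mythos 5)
Your proposal is correct and matches the paper's reasoning exactly: the Observation is stated as an immediate verbal consequence of Theorem~\ref{thm:orthoIMPLYrbm}, which the paper itself obtains by chaining Lemma~\ref{lem:tm-subgraph-tt} and Lemma~\ref{lem:tt-subgraph-rbmg}, just as you do. Your explicit formalization of ``false negative'' and the caveat about HGT are sensible framing but introduce nothing beyond the paper's own argument.
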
 

The ``false positive'' edges in the RBMG compared to the orthology relation
are the consequence of a particular class of duplication events:
\begin{theorem}
  Let $(T,t,\sigma)$ be a leaf- and event-labeled gene tree, $G(T,\sigma)$
  and $\Theta(T,t)$ its corresponding RBMG and orthology relation,
  respectively. Moreover, let $a,b\in L(T)$, $v\coloneqq \lca_T(a,b)$, and
  $v_a,v_b\in\child_T(v)$ such that $a\preceq v_a\prec v$,
  $b\preceq v_b\prec v$. Then,
  $ab\in E(G(T,\sigma))\setminus E(\Theta(T,t))$ if and only if
  $t(v)=\DUPL$ and
  $\sigma(a),\sigma(b)\in\sigma(L(T(v_a)))\symdiff \sigma(L(T(v_b)))$,
  where ``$\symdiff$'' denotes the usual symmetric set difference.
  \label{thm:loss-edges}
\end{theorem}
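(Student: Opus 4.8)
The plan is to prove both directions of the biconditional by carefully unpacking what it means for $ab$ to be an edge of the RBMG but not of the orthology relation. First observe that $ab\in E(G(T,\sigma))\setminus E(\Theta(T,t))$ forces two things simultaneously: that $a$ and $b$ are reciprocal best matches (so $ab\in E(G)$), and that $\{a,b\}$ is \emph{not} orthologous, which by Definition~\ref{def:fitch} means $t(v)=t(\lca_T(a,b))=\DUPL$ rather than $\SPEC$. Thus the condition $t(v)=\DUPL$ is essentially immediate from the ``$\setminus E(\Theta(T,t))$'' part, and the real content of the theorem is the equivalence, \emph{given} $t(v)=\DUPL$, between ``$a,b$ are reciprocal best matches'' and the symmetric-difference condition on the species sets. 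So I would reduce the statement to: assuming $t(v)=\DUPL$ with $a\preceq v_a\prec v$ and $b\preceq v_b\prec v$, we have $ab\in E(G(T,\sigma))$ if and only if $\sigma(a),\sigma(b)\in\sigma(L(T(v_a)))\symdiff\sigma(L(T(v_b)))$.

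For the forward direction, suppose $a$ and $b$ are reciprocal best matches. By definition $\sigma(a)\neq\sigma(b)$, and $b$ is a best match of $a$ means $\lca_T(a,b)\preceq_T\lca_T(a,y')$ for every $y'$ with $\sigma(y')=\sigma(b)$; symmetrically with the roles swapped. The key geometric fact is that $\lca_T(a,b)=v$ while $a$ lies in the subtree $T(v_a)$ and $b$ in $T(v_b)$. I would argue that if $\sigma(b)$ also occurred as a color inside $T(v_a)$, say realized by some leaf $y'\in L(T(v_a))$ with $\sigma(y')=\sigma(b)$, then $\lca_T(a,y')\preceq_T v_a\prec_T v=\lca_T(a,b)$, contradicting that $b$ is a best match of $a$. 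Hence $\sigma(b)\notin\sigma(L(T(v_a)))$; combined with $\sigma(b)\in\sigma(L(T(v_b)))$ (witnessed by $b$ itself), this gives $\sigma(b)\in\sigma(L(T(v_b)))\setminus\sigma(L(T(v_a)))\subseteq\sigma(L(T(v_a)))\symdiff\sigma(L(T(v_b)))$. The symmetric argument using that $a$ is a best match of $b$ yields $\sigma(a)\in\sigma(L(T(v_a)))\setminus\sigma(L(T(v_b)))$, so both colors lie in the symmetric difference as claimed.

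For the converse, assume $\sigma(a),\sigma(b)$ both lie in $\sigma(L(T(v_a)))\symdiff\sigma(L(T(v_b)))$. Since $a$ itself realizes $\sigma(a)$ in $T(v_a)$ and $b$ realizes $\sigma(b)$ in $T(v_b)$, membership in the symmetric difference forces $\sigma(a)\notin\sigma(L(T(v_b)))$ and $\sigma(b)\notin\sigma(L(T(v_a)))$; in particular $\sigma(a)\neq\sigma(b)$. I then want to verify that $b$ is a best match of $a$: for any $y'$ with $\sigma(y')=\sigma(b)$, I must show $\lca_T(a,b)=v\preceq_T\lca_T(a,y')$. The crucial observation is that because $\sigma(b)\notin\sigma(L(T(v_a)))$, no leaf of color $\sigma(b)$ lies in $T(v_a)$, so every such $y'$ lies outside $T(v_a)$; since $a\in L(T(v_a))$, the path from $a$ to any such $y'$ must pass through $v_a$ and hence $\lca_T(a,y')\succeq_T v_a$, and in fact $\succeq_T v$ whenever $y'\notin L(T(v))$, while if $y'\in L(T(v))\setminus L(T(v_a))$ it lies in some sibling subtree so $\lca_T(a,y')=v$. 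Either way $\lca_T(a,y')\succeq_T v=\lca_T(a,b)$, so $b$ is a best match of $a$. The symmetric argument shows $a$ is a best match of $b$, so $ab\in E(G(T,\sigma))$.

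The step I expect to require the most care is the converse direction, specifically the case analysis on where a competing leaf $y'$ of color $\sigma(b)$ can sit relative to $v$ and $v_a$: I need that the absence of color $\sigma(b)$ inside $T(v_a)$ genuinely rules out any $y'$ with $\lca_T(a,y')\prec_T v$, using that the only way to get an ancestor strictly below $v$ is to stay within $T(v_a)$. This is where the hypotheses $a\preceq v_a\prec v$ and $b\preceq v_b\prec v$ (which pin $a$ and $b$ into \emph{distinct} children of $v$) and the symmetric-difference condition do the real work; everything else is a direct translation between the best-match inequalities and the ancestor order on $T$.
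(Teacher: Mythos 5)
Your proposal is correct and follows essentially the same route as the paper: the $t(v)=\DUPL$ part is read off directly from the definition of $\Theta(T,t)$, and the equivalence between the reciprocal-best-match property and the symmetric-difference condition is established exactly as in the paper's proof, by noting that a competing leaf of color $\sigma(b)$ with $\lca_T(a,\cdot)\prec_T v$ would have to lie in $T(v_a)$ (and symmetrically for $\sigma(a)$ and $T(v_b)$). Your explicit up-front reduction and the case analysis on the location of $y'$ merely spell out steps the paper leaves implicit.
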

\begin{proof}
  Suppose first $ab\in E(G(T,\sigma))\setminus E(\Theta(T,t))$. By
  definition of $\Theta(T,t)$, we immediately find $t(v)=\DUPL$. Since
  $ab\in E(G(T,\sigma))$, i.e., $a$ and $b$ are reciprocal best matches, it
  must hold $v\preceq_T\lca_T(a,b')$ for any $b'$ of color
  $\sigma(b)$. Hence, $\sigma(b)\notin\sigma(L(T(v_a)))$. Analogously, we
  conclude $\sigma(a)\notin\sigma(L(T(v_b)))$ and thus,
  $\sigma(a),\sigma(b)\in\sigma(L(T(v_a)))\symdiff \sigma(L(T(v_b)))$.
 
  Conversely, assume $t(v)=\DUPL$ and
  $\sigma(a),\sigma(b)\in\sigma(L(T(v_b)))\symdiff
  \sigma(L(T(v_a)))$. Since $t(v)=\DUPL$, $a$ and $b$ cannot be orthologs,
  i.e., $ab\notin E(\Theta(T,t))$. Moreover,
  $\sigma(a)\in\sigma(L(T(v_b)))\symdiff \sigma(L(T(v_a)))$ in particular
  implies $\sigma(a)\notin \sigma(L(T(v_b)))$ and therefore,
  $v\preceq_T \lca_T(a,b')$ for any $b'$ with
  $\sigma(b')=\sigma(b)$. Hence, $b$ is a best match for $a$ in species
  $\sigma(b)$. One similarly concludes that $a$ is a best match for
  $b$. Hence, $a$ and $b$ are reciprocal best matches, which concludes the
  proof.  \qed
\end{proof}

In practical application we usually do not know the event-labeled gene
tree. It is possible, however, to compute the reciprocal best matches
directly from sequence data. Therefore, it is of interest to investigate
the relationship of reciprocal best match graphs and orthology
relations.

\begin{definition} \cite{Geiss:19x} A tree $(T,\sigma)$ is \emph{least
    resolved} (w.r.t.\ the RBMG $G(T,\sigma)$ that it explains) if the
  contraction of any inner edge $e\in E(T)$ implies
  $G(T_e,\sigma)\ne G(T,\sigma)$.
\end{definition}
Since $G(T,\sigma)$ is completely determined by $(T,\sigma)$ we can drop
the reference to $G(T,\sigma)$ and often simply speak about a ``least
  resolved tree''.

\begin{lemma}
  Let $(G,\sigma)$ be an RBMG that is explained by $(T,\sigma)$.  If
  $(T,\sigma)$ is least resolved w.r.t.\ $(G,\sigma)$, then every inner
  edge $e=uv\in E(T)$ satisfies
  $\sigma(L(T(v)))\cap \sigma(L(T(u))\setminus L(T(v))) \neq \emptyset$.
\label{lem:edge-leastRes}
\end{lemma}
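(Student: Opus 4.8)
The plan is to prove the contrapositive: assuming some inner edge $e=uv$ (with $u$ the parent and $v$ the child, per the convention that in $e=uv$ the vertex $u$ is closer to the root) satisfies $\sigma(L(T(v)))\cap\sigma(L(T(u))\setminus L(T(v)))=\emptyset$, I would show that contracting $e$ leaves the best match digraph $\G(T,\sigma)$ — and hence its symmetric part $G(T,\sigma)$ — unchanged, so that $(T,\sigma)$ cannot be least resolved. Let $T_e$ denote the contracted tree and $w$ the vertex into which $u$ and $v$ are merged. The elementary observation driving everything is that $\lca$-values barely move: for any two leaves $x,y$ we have $\lca_{T_e}(x,y)=\lca_T(x,y)$ unless $\lca_T(x,y)\in\{u,v\}$, in which case $\lca_{T_e}(x,y)=w$; moreover $w$ occupies in the ancestor order exactly the position $u$ had, i.e.\ $z\prec_T u \Leftrightarrow z\prec_{T_e} w$ and $u\prec_T z\Leftrightarrow w\prec_{T_e} z$ for every $z\notin\{u,v\}$.

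First I would dispose of every leaf $x\notin L(T(v))$. For such an $x$ no pair $(x,y)$ can have $\lca_T(x,y)=v$, so the only $\lca$-values involving $x$ that move are those equal to $u$, and they all move uniformly to the same relative height $w$. Consequently $\lca_{T_e}(x,y)\preceq_{T_e}\lca_{T_e}(x,y')$ holds if and only if $\lca_T(x,y)\preceq_T\lca_T(x,y')$ for all $y,y'$, so the out-neighbourhood of $x$ in the best match digraph is identical in $T$ and $T_e$. This settles both $x\notin L(T(u))$ and $x\in L(T(u))\setminus L(T(v))$ without using the hypothesis.

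Next, for $x\in L(T(v))$ I would fix a target color $s$ and invoke the disjointness hypothesis, which guarantees that $s$ cannot occur simultaneously in $\sigma(L(T(v)))$ and in $\sigma(L(T(u))\setminus L(T(v)))$. This yields three clean cases. If $s\notin\sigma(L(T(u)))$, every color-$s$ leaf lies outside $T(u)$ and its $\lca$ with $x$ is unchanged. If $s\in\sigma(L(T(u))\setminus L(T(v)))$, then no color-$s$ leaf lies in $T(v)$, so no color-$s$ pair has $\lca_T(x,\cdot)=v$ and again only $u$-valued $\lca$'s move, uniformly to $w$. The remaining case $s\in\sigma(L(T(v)))$ is the decisive one: here all color-$s$ leaves of $T(u)$ already lie in $T(v)$, so the competing $\lca$-values for $x$ are $\preceq_T v$ (from inside $T(v)$) or $\succ_T u$ (from outside $T(u)$), with nothing at $u$; contraction lifts the $v$-values to $w$ but introduces no $u$-level competitor to tie with them, so the minimizers are preserved. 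In every case the best matches of $x$ coincide in $T$ and $T_e$, giving $\G(T,\sigma)=\G(T_e,\sigma)$ and therefore $G(T_e,\sigma)=G(T,\sigma)$.

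I expect the main obstacle to be precisely this last case, namely verifying that contraction creates no spurious best match edge. The danger is that a pair realizing the minimum at $v$ and a pair realizing a value at $u$ both collapse to $w$ and thereby tie, promoting a former non-best-match to a best match; the entire role of the hypothesis $\sigma(L(T(v)))\cap\sigma(L(T(u))\setminus L(T(v)))=\emptyset$ is to forbid a single color from supplying both a $v$-level and a $u$-level candidate for the same $x$. Once this is pinned down, the conclusion that $T_e$ explains the same RBMG as $T$, contradicting least-resolvedness, is immediate.
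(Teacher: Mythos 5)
Your proof is correct and follows the same overall strategy as the paper's: contraposition, showing that contracting an inner edge violating the color condition leaves the explained graph unchanged, contradicting least-resolvedness. The difference is one of completeness rather than of route. The paper's proof explicitly verifies only that the pairs $x\in L(T(v))$, $y\in L(T(u))\setminus L(T(v))$ are reciprocal best matches in both $T$ and $T_e$, and then disposes of everything else with a single sentence about unchanged relative orderings of $\lca$'s; in particular the inclusion $E(G(T_e,\sigma))\subseteq E(G(T,\sigma))$ --- that contraction creates no spurious best matches --- is left implicit. Your color-by-color case analysis for $x\in L(T(v))$ is precisely the argument needed to close that point: you correctly identify that the only danger is a $v$-level and a $u$-level candidate of the same color being merged into a tie at $w$, and that the disjointness hypothesis is exactly what forbids this. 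You also prove the stronger statement $\G(T_e,\sigma)=\G(T,\sigma)$ for the directed BMG, from which equality of the symmetric parts is immediate; this is more than the lemma requires but costs nothing and makes the bookkeeping cleaner.
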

\begin{proof}
  For contraposition, assume that there is an inner edge $e=uv\in E(T)$
  with $\sigma(L(T(v)))\cap \sigma(L(T(u))\setminus L(T(v))) =\emptyset$.
  Hence, for all $x\in L(T(v))$ and $y\in L(T(u))\setminus L(T(v))$ we have
  $\lca_T(x,y) = u$ and $\sigma(x)=X\neq \sigma(y)=Y$.  It is easy to see
  that all such $x$ and $y$ form a reciprocal best match and thus,
  $xy\in E(G)$.  Clearly, $x$ and $y$ form also reciprocal best match in
  $(T_e,\sigma)$ and thus, each edge $xy\in E(G)$ with $x\in L(T(v))$ and
  $y\in L(T(u))\setminus L(T(v))$ is contained in $G(T_e,\sigma)$.  Since
  we have not changed the relative ordering of the $\lca's$ of the
  remaining vertices, all edges in $E(G)$ are contained in $G(T_e,\sigma)$.
  \qed
\end{proof}

The converse of  Lemma~\ref{lem:edge-leastRes} is not necessarily true.  As
an example, consider an inner edge $e=uv\in E(T)$ with
$\sigma(L(T(u))) = \sigma(L(T(v))) =\{c\}$.  It is easy to see that $e$ can
be contracted.

Lemma~\ref{lem:edge-leastRes} implies that if $(T,\sigma)$ is least
resolved w.r.t.\ $G(T,\sigma)$ and $u\in V^0(T)$ such that $u$ is incident
to some other inner vertex $v\in \child(u)$, then there is a child
$v'\neq v$ of $u$ which satisfies
$\sigma(L(T(v')))\cap \sigma(L(T(v)))\neq \emptyset$.  By construction of
$\tT$ we have $\tT(u) = \DUPL$.  The latter observation also implies the
following:
\begin{corollary}
  Suppose that $(T,\sigma)$ is least resolved w.r.t.\ $G(T,\sigma)$ and let
  $\tT$ be the extremal event labeling for $(T,\sigma)$. Then
  $\tT(u)= \SPEC$ if and only if all children of $u$ are leaves that are
  from pairwise distinct species.
\label{cor:SPEC}
\end{corollary}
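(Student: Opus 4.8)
The plan is to prove Corollary~\ref{cor:SPEC} as a biconditional, establishing each direction separately while leaning on Lemma~\ref{lem:edge-leastRes} and the definition of the extremal labeling $\tT$.

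First I would prove the easy direction, namely that if all children of $u$ are leaves from pairwise distinct species, then $\tT(u)=\SPEC$. In this case every child $u_i\in\child(u)$ is a leaf, so $L(T(u_i))=\{u_i\}$ and $\sigma(L(T(u_i)))=\{\sigma(u_i)\}$ is a singleton. Since the species $\sigma(u_i)$ are pairwise distinct by assumption, we have $\sigma(L(T(u_i)))\cap\sigma(L(T(u_j)))=\emptyset$ for all distinct $i,j$. By Definition~\ref{def:event-rbmg2} this is exactly the condition that places $u$ in the ``otherwise'' case, so $\tT(u)=\SPEC$. This direction does not even require the least-resolved hypothesis.

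The harder direction is the converse: assuming $(T,\sigma)$ is least resolved and $\tT(u)=\SPEC$, I must show every child of $u$ is a leaf and that these leaves carry pairwise distinct species labels. The distinctness of the colors follows immediately from $\tT(u)=\SPEC$, because by Definition~\ref{def:event-rbmg2} a $\SPEC$-label forces $\sigma(L(T(u_i)))\cap\sigma(L(T(u_j)))=\emptyset$ for all distinct children $u_i,u_j$, and in particular the colors of any leaf children are distinct. The crux is therefore to rule out that $u$ has an inner vertex as a child. Here I invoke the remark established just before the corollary statement: Lemma~\ref{lem:edge-leastRes} implies that whenever a least-resolved tree has an inner vertex $u$ incident to an inner child $v\in\child(u)$, there must exist another child $v'\neq v$ of $u$ with $\sigma(L(T(v')))\cap\sigma(L(T(v)))\neq\emptyset$, which by Definition~\ref{def:event-rbmg2} forces $\tT(u)=\DUPL$. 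Contrapositively, $\tT(u)=\SPEC$ together with least resolvedness forbids $u$ from having any inner child, so every child of $u$ is a leaf.

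The main obstacle is really just marshalling the preceding remark cleanly, since the substantive combinatorial content has already been extracted in Lemma~\ref{lem:edge-leastRes}. One point to handle carefully is the edge case where an inner child $v$ exists but the color overlap it forces comes from $v$ itself rather than a sibling; this is precisely why the statement of Lemma~\ref{lem:edge-leastRes} is phrased in terms of $L(T(u))\setminus L(T(v))$ rather than all of $L(T(u))$, guaranteeing the overlapping color sits in a \emph{different} child subtree and genuinely triggers the $\DUPL$ condition at $u$. Once both implications are in place, the biconditional follows, and no delicate estimates or case analyses beyond these are needed.
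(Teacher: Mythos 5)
Your proof is correct and follows essentially the same route as the paper: the paper derives the corollary as an immediate consequence of the remark that Lemma~\ref{lem:edge-leastRes} forces $\tT(u)=\DUPL$ whenever a least-resolved tree has an inner vertex $u$ with an inner child, combined with the definition of $\tT$. You merely spell out both directions of the biconditional (including the easy converse, which indeed needs no least-resolvedness) more explicitly than the paper does, and your observation about why the lemma is phrased via $L(T(u))\setminus L(T(v))$ is exactly the point that makes the argument go through.
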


\begin{lemma}\label{lem:reconc-extremal}
  Let $(T,\sigma)$ be some least resolved tree (w.r.t.\ some RBMG) with
  extremal event map $\tT$ and let $S(W,F)$ be a species tree with
  $L(S)=\sigma(L(T))$. Then there is a reconciliation map
    $\mu: V(T)\to V(S)\cup E(S)$ such that $t_{\mu}=\tT$.
\end{lemma}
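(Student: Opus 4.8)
The plan is to invoke Proposition~\ref{prop:inftriple}, which reduces the existence of a reconciliation map $\mu$ with $t_\mu=\tT$ to two conditions: that $\tT$ be well-formed, and that $S$ display the triple set $\mathcal{S}(T,\tT,\sigma)$. The first condition is immediate and requires no hypothesis on $T$: by Definition~\ref{def:event-rbmg2}, $\tT(u)=\SPEC$ holds exactly when no two distinct children $u_1,u_2\in\child(u)$ satisfy $\sigma(L(T(u_1)))\cap\sigma(L(T(u_2)))\neq\emptyset$, which is precisely the requirement of Definition~\ref{def:well-formed}. Hence $\tT$ is well-formed for every leaf-labeled tree $(T,\sigma)$.

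The crux is therefore to control $\mathcal{S}(T,\tT,\sigma)$, and this is where the least-resolved hypothesis enters, through Corollary~\ref{cor:SPEC}. I claim that $\mathcal{S}(T,\tT,\sigma)=\emptyset$. Suppose to the contrary that $\sigma(a)\sigma(b)|\sigma(c)\in\mathcal{S}(T,\tT,\sigma)$, arising from a triple $ab|c$ displayed by $T$ with pairwise distinct species and with $\tT(w)=\SPEC$ at the root $w\coloneqq\lca_T(a,b,c)$. Because $ab|c$ is a genuine triple, $\lca_T(a,b)\prec_T w$, so $a$ and $b$ lie in a common child subtree $T(v)$ of $w$ while $c$ lies in a different one. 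But since $(T,\sigma)$ is least resolved and $\tT(w)=\SPEC$, Corollary~\ref{cor:SPEC} forces every child of $w$ to be a leaf; in particular $v$ is a leaf, so $L(T(v))=\{v\}$ can contain both $a$ and $b$ only if $a=b$, contradicting that $ab|c$ is a triple. Hence no displayed triple has a $\SPEC$-labeled root under $\tT$, and $\mathcal{S}(T,\tT,\sigma)$ is empty.

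With $\mathcal{S}(T,\tT,\sigma)=\emptyset$, the displayability condition of Proposition~\ref{prop:inftriple} is satisfied vacuously by \emph{every} species tree $S$ with $L(S)=\sigma(L(T))$, which is exactly why the statement holds for an arbitrary $S$. Combined with the well-formedness of $\tT$, Proposition~\ref{prop:inftriple} yields a reconciliation map $\mu\colon V(T)\to V(S)\cup E(S)$ whose induced event labeling coincides with $\tT$, i.e.\ $t_\mu=\tT$, as asserted.

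The only substantive content of the argument is the emptiness of $\mathcal{S}(T,\tT,\sigma)$; everything else is bookkeeping on top of Proposition~\ref{prop:inftriple}. Accordingly, I expect the main obstacle to be making the reduction to Corollary~\ref{cor:SPEC} airtight—specifically, verifying carefully that a speciation vertex in a least-resolved tree cannot sit strictly above any proper triple, because all of its children are leaves and therefore cannot host two distinct leaves $a,b$ in a single child subtree.
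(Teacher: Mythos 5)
Your proof is correct and follows essentially the same route as the paper's: both invoke Corollary~\ref{cor:SPEC} to conclude that $\mathcal{S}(T,\tT,\sigma)=\emptyset$ (you spell this step out in more detail) and then apply Proposition~\ref{prop:inftriple} with the well-formed labeling $\tT$ to obtain $\mu$ with $t_\mu=\tT$ for an arbitrary species tree. The only difference is that the paper additionally writes down an explicit such map $\mu$, whereas you rely on the proposition's guarantee that the induced event labeling agrees with $\tT$; this is a presentational rather than a substantive divergence.
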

\begin{proof}
  By Cor.\ \ref{cor:SPEC}, every inner vertex $u$ with $\tT(u)=\SPEC$ is
  only incident to leaves from pairwise distinct species.  However, this
  implies that the set of informative species triples
  $\mathcal{S}(T,\tT,\sigma)$ is empty, and thus, compatible. Hence,
  Proposition \ref{prop:inftriple} implies that there is a reconciliation
  map $\mu$ from $(T,\tT,\sigma)$ to any species tree $S$, defined by
  $\mu(0_T)=0_S$, $\mu(v) = 0_S\rho_S$ for every inner vertex $v\in V^0(T)$
  that is incident to another inner vertex in $T$, and
  $\mu(v) = x =\lca_S(\sigma(L(T(v))))$ for any inner vertex $v$ that is
  only incident to leaves that are from pairwise distinct species, and
  $\mu(v) = \sigma(v)$ for all leaves of $T$.  By construction of $\mu$, we
  have $\tT(u) = t_{\mu}(u)$ with $t_{\mu}(u)$ specified by Def.\
  \ref{def:event-rbmg} for all $u\in V(T)$. \qed
\end{proof}

\begin{corollary} 
  Let $(T,\sigma)$ be a least resolved tree explaining a co-RBMG
  $(G,\sigma)$. Then $(\Theta(T,\tT),\sigma)$ is a disjoint union of
  cliques.
  \label{cor:cliques}
\end{corollary}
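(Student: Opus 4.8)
The plan is to describe the connected components of $\Theta(T,\tT)$ explicitly and show that each of them is complete; recall that a graph is a disjoint union of cliques precisely when every connected component is complete (equivalently, when it contains no induced $P_3$). The engine of the whole argument is Corollary~\ref{cor:SPEC}, which for a least resolved tree guarantees that $\tT(u)=\SPEC$ forces every child of $u$ to be a leaf, and that these leaf-children carry pairwise distinct colors.

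First I would establish the crucial localization of orthology edges: if $xy\in E(\Theta(T,\tT))$ and $u:=\lca_T(x,y)$, then $x$ and $y$ are both children of $u$. Indeed, $xy\in\Theta(T,\tT)$ forces $\tT(u)=\SPEC$, so by Corollary~\ref{cor:SPEC} every child of $u$ is a leaf. The child of $u$ lying on the path from $u$ to $x$ is then a leaf that is an ancestor of $x$, hence equals $x$; the same reasoning gives $y\in\child(u)$. Since a leaf has a unique parent, this yields $u=\parent(x)=\parent(y)$, so orthology edges can only join leaves that are siblings under a common speciation vertex.

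Next I would read off the components. For a leaf $x$ set $p=\parent(x)$, which is an inner vertex because $|L(T)|\ge 2$. If $\tT(p)=\DUPL$, the localization step shows that $x$ has no orthology partner whatsoever, so $x$ is an isolated vertex, i.e.\ a clique of size one. If $\tT(p)=\SPEC$, then Corollary~\ref{cor:SPEC} forces the subtree $T(p)$ to consist of $p$ together with its leaf-children, so that $L(T(p))=\child(p)$; any two of these children have $\lca$ equal to $p$ and are thus orthologs, making $\child(p)$ a clique, while the localization step forbids any edge from $\child(p)$ to a vertex outside it. Hence the connected component of $x$ is exactly $\child(p)$.

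Finally I would note that, as $x$ ranges over $L(T)$, the sets $\child(p)$ for speciation vertices $p$ together with the singletons $\{x\}$ for leaves $x$ below a duplication partition $L(T)$, since every leaf has a unique parent. Each block is a clique and no edges run between blocks, so $(\Theta(T,\tT),\sigma)$ is a disjoint union of cliques. No step is genuinely hard once Corollary~\ref{cor:SPEC} is available; the only point requiring care is the localization claim, where one must use that the children of a speciation vertex are leaves in order to conclude that $x$ and $y$ are actual siblings of $u$ rather than merely deeper co-descendants.
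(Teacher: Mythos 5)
Your proof is correct and takes essentially the same route as the paper's: both arguments rest entirely on Corollary~\ref{cor:SPEC}, identify the cliques as the child sets $L(T(u))=\child(u)$ of speciation vertices together with isolated leaves whose parent is a duplication, and your ``localization'' step is just a more explicit rendering of the paper's observation that no ancestor of a speciation vertex can itself be a speciation, so no edge leaves $L(T(u))$.
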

\begin{proof}
  By Cor.\ \ref{cor:SPEC} all children of a speciation node $u$ w.r.t.\
  $\tT$ are leaves from pairwise distinct species. Thus the leaves
  $L(T(u))$ form a complete subgraph in $(\Theta(T,\tT),\sigma)$.  On the
  other hand, no ancestor of $u$ is a speciation, i.e., there is no edge
  $ab$ with $a\in L(T(u))$ and $b\notin L(T(u))$. Thus
  $(\Theta(T,\tT),\sigma)$ is a disjoint union of the cliques formed by the
  $L(T(u))$ with $\tT(u)=\SPEC$ possibly together with isolated vertices
  that are not children of any speciation node in $(T,\tT)$.
\end{proof}

Suppose that we know the orthology relation $\Theta(T,\tT)$ that is
obtained from a least resolved tree $(T,\sigma)$ that explains the RBMG
$(G,\sigma)$.  Lemma \ref{lem:reconc-extremal} implies that there is always
a reconciliation map $\mu$ from $(T,\sigma)$ to any species tree $S$ with
$L(S)=\sigma(L(T))$ such that $\tT$ is determined by $\mu$ as in Def.\
\ref{def:event-rbmg}.  Now we can apply Theorem \ref{thm:orthoIMPLYrbm} to
conclude that all orthologous pairs in $\Theta(T,\tT)$ are reciprocal best
matches.  In other words, all complete subgraphs of $\Theta(T,\tT)$ are
also induced subgraphs of the underlying RBMG $(G,\sigma)$.  Hence,
$\Theta(T,\tT)$ is obtained from $(G,\sigma)$ by removing edges such that
the resulting graph is the disjoint union of cliques, see the top-right
tree in Fig.\ \ref{fig:hc-triples} for an example.  However, Fig.\
\ref{fig:hc-triples} also shows that many edges have to be removed to
obtain $\Theta(T,\tT)$. 

This observation establishes the precise relationship of orthology
detection and clustering, since (graph) clustering can be interpreted as
the graph editing problem for disjoint unions of complete graphs
\cite{Boecker:11}. In many orthology prediction tools, such as e.g.\
\texttt{OMA} \cite{Roth:08}, orthologs are summarized as \emph{clusters of
  orthologous groups (COGs)} \cite{Tatusov1997} that are obtained from
reciprocal best matches. 

The results above show that the RBMGs contain the orthology
relation. Equivalently, RBMGs imply constraints on the event labeling. We
also observe that the RBMGs cannot provide conclusive evidence regarding
edges that \emph{must} correspond to orthologous pairs. In the following
sections we consider the constraints implied by the detailed structure of
RBMGs or BMGs in more detail.

\section{Classification of RBMGs}
\label{sec:classi-rbmg}

The structure of RBMGs has been studied in extensive detail by
\citet{Geiss:19x}. Although we do not have an algorithmically useful
complete characterization of RBMGs, there are partial results that can be
used to identify different subclasses of RBMGs based on the structure of
the connected components of the 3-colored subgraphs \cite[Thm.\
7]{Geiss:19x}. Let $\mathscr{C}(G,\sigma)$ be the set of the connected
components of the induced subgraphs on three colors of an RBMG
$(G,\sigma)$. Then every $(C,\sigma)\in\mathscr{C}(G,\sigma)$ is precisely
of one of the three types \cite[Thm.\ 5]{Geiss:19x}:
\begin{description}
\item[\textbf{Type (A)}] $(C,\sigma)$ contains a $K_3$ on three colors
  but no induced $P_4$.
\item[\textbf{Type (B)}] $(C,\sigma)$ contains an induced $P_4$ on three
  colors whose endpoints have the same color, but no induced 
  cycle $C_n$ on $n\geq 5$ vertices.
\item[\textbf{Type (C)}] $(C,\sigma)$ contains an induced cycle $C_6$,
  called \emph{hexagon}, such
  that any three consecutive vertices have pairwise distinct colors.
\end{description}
The graphs for which all $(C,\sigma)\in\mathscr{C}(G,\sigma)$ are of Type
(A) are exactly the RBMGs that are cographs, or co-RBMGs for short
\cite[Thm.\ 8 and Remark 2]{Geiss:19x}. Intuitively, these have a close
connection to orthology graphs because orthology graphs are cographs.

Connected components of Type (B) and Type (C), on the other hand, contain
induced $P_4$s and thus are neither cographs nor connected components of
cographs.  Obs.\ \ref{fact:false-pos} implies that RBMGs that contain
connected components of Type (B) and Type (C) introduce false positive
edges into estimates of the orthology relation. In Section
\ref{sect:goodbadugly} below we will address the question to what extent
and how such false-positives edges can be identified. We distinguish here
co-RBMGs, \emph{(B)-RBMGs}, and \emph{(C)-RBMGs} depending on whether
$\mathscr{C}(G,\sigma)$ contains only Type (A) components, at least one
Type (B) but not Type (C) component, or at least one Type (C) component.

Co-RBMGs have a convenient structure that can be readily understood in
terms of \emph{hierarchically colored cographs} (\hc-cographs) introduced
by \citet[Section 7]{Geiss:19x}.
\begin{definition}
  An undirected colored graph $(G,\sigma)$ is a \emph{hierarchically
    colored cograph (\hc-cograph)} if 
 \begin{description}
 \item[\AX{(K1)}] $(G,\sigma)=(K_1,\sigma)$, i.e., a colored vertex, or
 \item[\AX{(K2)}] $(G,\sigma)= (H_1,\sigma_{H_1}) \join (H_2,\sigma_{H_2})$
   and $\sigma(V(H_1))\cap \sigma(V(H_2))=\emptyset$, or
 \item[\AX{(K3)}]
   $(G,\sigma)= (H_1,\sigma_{H_1}) \cupdot (H_2,\sigma_{H_2})$ and
   $\sigma(V(H_1))\cap \sigma(V(H_2)) \in
   \{\sigma(V(H_1)),\sigma(V(H_2))\}$,
\end{description}
where both $(H_1,\sigma_{H_1})$ and $(H_2,\sigma_{H_2})$ are \hc-cographs
and $\sigma(x)=\sigma_{H_i}(x)$ for any $x\in V(H_i)$ for $i\in\{1,2\}$.
\label{def:hc-cograph}
\end{definition}
Not all properly colored cographs are \hc-cographs, see e.g.\
\citet{Geiss:19x} for counterexamples. However, for each  cograph $G$, there
exists a coloring $\sigma$ (with a sufficient number of colors) such that
$(G,\sigma)$ is an \hc-cograph. 
\begin{proposition}[Thm.\ 9 in \cite{Geiss:19x}]
A graph $(G,\sigma)$ a co-RBMG if and only if it is an \hc-cograph.
\label{prop:co-hc-equi}
\end{proposition}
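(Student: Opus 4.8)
The plan is to prove the two implications separately, in both cases exploiting the recursive structure of the objects: the forward direction by induction on the \hc-construction of $(G,\sigma)$, and the converse by induction on an explaining tree. For ``\hc-cograph $\Rightarrow$ co-RBMG'' observe first that the underlying uncoloured graph of an \hc-cograph is assembled exclusively by joins and disjoint unions, so by Definition~\ref{def:cograph} it is automatically a cograph; the only thing to establish is that it is an RBMG. I would turn the recursion of Definition~\ref{def:hc-cograph} into a leaf-coloured tree $(T,\sigma)$ (each (K1) a leaf, each (K2)/(K3) an inner vertex, labelled $\SPEC$ resp.\ $\DUPL$) and verify $G(T,\sigma)=G$ by induction. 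The crucial step is the interface between the two pieces $H_1,H_2$ combined at an inner vertex $v$, where $\lca_T(x,y)=v$ for all $x\in V(H_1)$, $y\in V(H_2)$. At a (K2)-step the colour sets are disjoint, so no $y'$ of colour $\sigma(y)$ lies in $V(H_1)$ and no $x'$ of colour $\sigma(x)$ lies in $V(H_2)$; hence $v$ realizes the minimal $\lca$ for both and $x,y$ become reciprocal best matches, reproducing the join. At a (K3)-step one colour set contains the other, say $\sigma(V(H_1))\subseteq\sigma(V(H_2))$, so every $x\in V(H_1)$ has a same-coloured copy strictly inside $V(H_2)$ that beats it as a match for any $y\in V(H_2)$; thus no cross edge arises, reproducing the disjoint union. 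Finally, any candidate match in the opposite piece has $\lca$ equal to $v$, strictly above every internal $\lca$, so combining $H_1$ and $H_2$ neither creates nor destroys best matches inside a piece, which makes the induction go through.

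For the converse I would fix an explaining tree $(T,\sigma)$ with $G=G(T,\sigma)$ and induct on $T$, which sidesteps the question of whether arbitrary induced subgraphs of RBMGs are RBMGs. If $T$ is a single leaf we are in case (K1). Otherwise let $v_1,\dots,v_k$ be the children of the root and $L_i\coloneqq L(T(v_i))$. For $x,y$ in the same $L_i$, any same-coloured competitor outside $L_i$ has its $\lca$ with $x$ strictly above $v_i$ and can never be closer, so $G[L_i]=G(T(v_i),\sigma)$ is again an RBMG; being an induced subgraph of the cograph $G$ it is a cograph, hence a co-RBMG, and by the induction hypothesis an \hc-cograph. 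For a cross pair $x\in L_i$, $y\in L_j$ with $i\neq j$ one has $\lca_T(x,y)=\rho_T$, so $x,y$ are reciprocal best matches if and only if $\sigma(x)\neq\sigma(y)$, $\sigma(y)\notin\sigma(L_i)$ and $\sigma(x)\notin\sigma(L_j)$; the adjacency pattern between two parts is thus dictated entirely by their colour sets. In particular $\sigma(L_i)\cap\sigma(L_j)=\emptyset$ forces a complete join (the (K2)-pattern), whereas $\sigma(L_i)\subseteq\sigma(L_j)$ forces the absence of all cross edges (the (K3)-pattern).

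The crux is therefore to show that $\sigma(L_1),\dots,\sigma(L_k)$ form a \emph{laminar} family, i.e.\ any two are nested or disjoint; granting this, one reads off the combination along the containment forest and assembles $(G,\sigma)$ by applying (K2) to disjoint-coloured siblings and (K3) to nested pairs, using associativity of $\join$ and $\cupdot$ to handle $k>2$. I expect laminarity to be the main obstacle. If $\sigma(L_i)$ and $\sigma(L_j)$ were crossing, with $a\in\sigma(L_i)\setminus\sigma(L_j)$, $b\in\sigma(L_j)\setminus\sigma(L_i)$ and a shared colour $c$, then the colour rule above yields an edge between the $a$- and $b$-witnesses but non-edges between each of them and the $c$-witness in the opposite part, which is exactly the seed of an induced $P_4$ on the three colours $a,b,c$. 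The delicate point is that the within-part edges are only controlled inductively, so this seed cannot be completed to a forbidden $P_4$ by the colour rule alone; here I would invoke the classification of the connected components of three-coloured subgraphs from \cite{Geiss:19x} to exclude every crossing configuration as a Type~(B) or Type~(C) component, thereby forcing laminarity and closing the induction.
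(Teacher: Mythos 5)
You should first note that the paper itself contains no proof of this statement: it is imported verbatim as Thm.~9 of \cite{Geiss:19x}, so your argument has to stand on its own. Your forward direction does. Turning the recursion of Definition~\ref{def:hc-cograph} into a leaf-coloured tree (this is precisely the \hc-cotree of Section~\ref{sec:classi-rbmg}), checking that the colour-disjointness in \AX{(K2)} makes every cross pair a reciprocal best match while the containment in \AX{(K3)} plants a same-coloured competitor that destroys every cross pair, and observing that enlarging the competitor set by leaves whose $\lca$ sits strictly above a piece can neither create nor destroy best matches inside that piece, is a complete induction.

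The converse, however, has a genuine gap exactly where you flag it, and your proposed repair does not close it. To refute a crossing pair $\sigma(L_i)$, $\sigma(L_j)$ you must exhibit an induced $P_4$; the colour rule hands you only the middle edge $x_a x_b$ and the non-edges $x_a x_c^{(j)}$, $x_b x_c^{(i)}$, $x_c^{(i)}x_c^{(j)}$, while the two outer edges live \emph{inside} $L_i$ and $L_j$ and need not exist for an arbitrary choice of witnesses: a vertex of an RBMG may have no reciprocal best match of a given colour at all (its best matches may each prefer a closer partner), so a badly chosen $x_a$ can be isolated from colour $c$ within $L_i$. Invoking the Type \AX{(A)}/\AX{(B)}/\AX{(C)} classification does not manufacture the missing edges either --- that theorem classifies components that already contain an induced $P_4$; it does not produce one. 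What closes the gap is the fact that whenever two colours both occur in $L(T(v))$, \emph{some} pair of leaves of those colours below $v$ is a reciprocal best match (Lemma~10 of \cite{Geiss:19x}, used in the appendix of this paper; proof: take a $\preceq_T$-minimal vertex whose subtree contains both colours and pick leaves from two of its necessarily monochromatic child subtrees of different colours). Choosing $p_ap_c\in E(G)$ inside $L_i$ and $q_bq_c\in E(G)$ inside $L_j$ in this way, $\langle p_c\, p_a\, q_b\, q_c\rangle$ is the required induced $P_4$ and laminarity follows. Finally, the assembly step for $k>2$ children --- peeling off parts whose colour set is maximal via \AX{(K3)} and joining colour-disjoint blocks via \AX{(K2)} --- is correct but needs its own small induction on the number of parts; ``associativity'' alone does not decide in which order nested and disjoint colour sets must be combined.
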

Since orthology relations are necessarily cographs we can interpret
Proposition \ref{prop:co-hc-equi} as necessary condition for an RBMG to
correctly represent orthology.

The recursive construction of $(G,\sigma)$ in Def.\ \ref{def:hc-cograph}
also defines a corresponding \hc-cotree $(T^G_{\hc},t_{\hc},\sigma)$ whose
leaves are the vertices of $(G,\sigma)$, i.e., the $(K_1,\sigma)$ appearing
in \AX{(K1)}.  Each internal node $u$ of $T^G_{\hc}$ corresponds to either
a join \AX{(K2)} or a disjoint union \AX{(K3)} and is labeled by
$t_{\hc}:V(T^G_{\hc})\setminus L\to \{\DUPL,\SPEC\}$ such that
$t_{\hc}(u)=\SPEC$ if $u$ represents a join, and $t_{\hc}(u)=\DUPL$ if $u$
corresponds to a disjoint union.  Each inner vertex $u$ of $T^G_{\hc}$
represents the induced subgraph $(G,\sigma)[L(T^G_{\hc}(u))]$.
\begin{proposition}[Thm.\ 10 in \cite{Geiss:19x}]
  Every co-RBMG $(G,\sigma)$ is explained by its \hc-cotree
  $(T^G_{\hc},t_{\hc},\sigma)$.
\label{prop:co-hc-tree}
\end{proposition}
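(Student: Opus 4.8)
The plan is to prove that $G(T^G_{\hc},\sigma)=(G,\sigma)$, i.e.\ that the \hc-cotree, read as a leaf-colored tree, produces exactly the reciprocal best matches of $(G,\sigma)$. By Proposition~\ref{prop:co-hc-equi} the co-RBMG $(G,\sigma)$ is an \hc-cograph, so it is built by the recursive rules \AX{(K1)}--\AX{(K3)} of Definition~\ref{def:hc-cograph}, and $T^G_{\hc}$ is the corresponding binary construction tree. I would therefore argue by induction on this construction. In the base case \AX{(K1)} the tree is a single leaf and both graphs equal $K_1$. For the inductive step write $(G,\sigma)=(H_1,\sigma_{H_1})\star(H_2,\sigma_{H_2})$, where $\star$ is the join \AX{(K2)} or the disjoint union \AX{(K3)}; let $T_1,T_2$ be the corresponding subtrees, which by the induction hypothesis explain $(H_1,\sigma_{H_1})$ and $(H_2,\sigma_{H_2})$, and let $\rho$ be the root of $T^G_{\hc}$ joining them. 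The single structural fact I would use throughout is that $\lca_T(x,y)=\rho$ whenever $x\in L(T_1)$ and $y\in L(T_2)$, whereas $\lca_T(p,q)\prec_T\rho$ for any two leaves $p,q$ lying in the same subtree; thus a cross-subtree pair always realizes the globally maximal (i.e.\ root) value of the ancestor order.

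For the join case \AX{(K2)} the color sets of $H_1$ and $H_2$ are disjoint. Hence for a leaf $y$ in $T_i$ every leaf of color $\sigma(y)$ lies in the same subtree $T_i$, so the set of same-color competitors entering the best-match condition is unchanged when passing from $T_i$ to $T$; consequently the best-match (and reciprocal best-match) relation inside each $T_i$ is exactly the one inside $(T_i,\sigma_{H_i})$, and the internal edges agree with $H_1$ and $H_2$ by induction. For a cross pair $x\in L(T_1)$, $y\in L(T_2)$ we have $\sigma(x)\neq\sigma(y)$ and $\lca_T(x,y)=\rho=\lca_T(x,y')$ for every $y'$ of color $\sigma(y)$ (all such $y'$ lie in $T_2$); hence $y$ is a best match of $x$, and symmetrically $x$ of $y$, so $xy\in E(G(T,\sigma))$. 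This reproduces precisely the edges added by the join.

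For the disjoint-union case \AX{(K3)} the containment condition gives, without loss of generality, $\sigma(V(H_2))\subseteq\sigma(V(H_1))$. The internal edges are again preserved: for two leaves in the same subtree, any same-color competitor lying in the other subtree has $\lca$ equal to $\rho$, the maximal value, so it never improves on an internal competitor, and the best-match condition inside $T$ collapses to the one inside the relevant subtree, so induction applies. For a cross pair $x\in L(T_1)$, $y\in L(T_2)$ I would show no edge arises: if $\sigma(x)=\sigma(y)$ this is immediate, and otherwise, since $\sigma(y)\in\sigma(V(H_2))\subseteq\sigma(V(H_1))$, there is a leaf $w\in L(T_1)$ with $\sigma(w)=\sigma(y)$ and $\lca_T(x,w)\prec_T\rho=\lca_T(x,y)$; thus $y$ is not a best match of $x$, so $xy\notin E(G(T,\sigma))$, matching the disjoint union. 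Assembling the internal and cross contributions in both cases yields $G(T^G_{\hc},\sigma)=(G,\sigma)$.

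The delicate point is entirely in case \AX{(K3)}, where colors are shared between the two subtrees, and the containment condition plays a dual role that I would emphasize. On the one hand it guarantees a strictly closer same-color leaf that \emph{blocks} every cross pair from being a reciprocal best match, which is what keeps the two parts disconnected; on the other hand one must check that these shared colors do not \emph{corrupt} the internal best-match relations, and this is exactly what the maximality of the root value $\lca_T(x,y)=\rho$ for cross-subtree pairs secures. Making both observations precise -- and keeping the two directions of the best-match condition straight under the (w.l.o.g.) choice of which color set contains the other -- is the main thing to get right; the join case \AX{(K2)} is comparatively routine.
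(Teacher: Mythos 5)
Your argument is correct, but note that the paper itself offers no proof of this statement: Proposition~\ref{prop:co-hc-tree} is imported verbatim as Thm.~10 of \cite{Geiss:19x}, so there is no in-paper proof to compare against. Your structural induction along the recursive construction \AX{(K1)}--\AX{(K3)} is the natural argument and all the essential points are in place: in the join case the disjointness of the color sets confines every same-colored competitor to the subtree containing it, so the internal best-match relations are untouched and every cross pair realizes the common value $\lca_T(x,y)=\rho=\lca_T(x,y')$, yielding exactly the join edges; in the disjoint-union case the containment $\sigma(V(H_2))\subseteq\sigma(V(H_1))$ supplies, for every cross pair of distinct colors, a strictly closer same-colored witness $w\in L(T_1)$ (and when $|L(T_1)|=1$ the containment forces $\sigma(x)=\sigma(y)$, so the same-color exclusion applies), while the maximality of $\rho$ in the ancestor order guarantees that competitors from the other subtree never invalidate an internal best match. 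The only points worth making explicit in a polished write-up are (i) the reduction of the induction hypothesis via Proposition~\ref{prop:co-hc-equi}, i.e.\ that $H_1$ and $H_2$ are again co-RBMGs explained by their hc-cotrees, which you implicitly use, and (ii) the bookkeeping that attaching $T_1$ and $T_2$ below a fresh root (suppressing their planted roots) does not alter any $\lca$ among leaves; neither affects the validity of the argument.
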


Now let $(T^G_{\hc},t_{\hc},\sigma)$ be the \hc-cotree of a co-RBMG
$(G,\sigma)$. Note, the structure of $T^G_{\hc}$ is solely determined by
the \hc-cograph structure of $(G,\sigma)$.  Somehwat surprisingly, the
mathematical structure of the \hc-cotree $(T^G_{\hc},t_{\hc},\sigma)$ and,
in particular, its coloring $t_{\hc}$ has a simple biological
interpretation.  Consider $\{v',v''\}=\child(u)$. If $t_{\hc}(u)=\SPEC$ in
the \hc-cotree, then
$\sigma(L(T^G_{\hc}(v')))\cap\sigma(L(T^G_{\hc}(v'')))=\emptyset$ in
agreement with Lemma~\ref{lem:disjointSpecies}. On the other hand, if
$t_{\hc}(u)=\DUPL$, then \AX{(K3)} implies
$\sigma(L(T^G_{\hc}(v')))\cap\sigma(L(T^G_{\hc}(v'')))\ne\emptyset$, in
which case $u$ indeed must be a duplication from the biological point of
view (contraposition of Lemma~\ref{lem:disjointSpecies}).

The \hc-cotree $(T^G_{\hc},t_{\hc},\sigma)$ of $(G,\sigma)$ will in general
not be discriminating and it is not necessarily possible to reduce
$(T^G_{\hc},t_\hc,\sigma)$ to a discriminating \hc-cotree
$(\hat T^G_{\hc},\hat t,\sigma)$ that still explains $(G,\sigma)$.
Although it is always possible to contract edges $uv$ of
$(T^G_{\hc},t_{\hc},\sigma)$ with $t_{\hc}(u) = t_{\hc}(u) =\SPEC$ (cf.\
\cite[Cor.\ 11]{Geiss:19x}), there are examples where edges $uv$ with
$t_{\hc}(u) = t_{\hc}(u) =\DUPL$ cannot be contracted to obtain a tree that
still explains $(G,\sigma)$ (cf.\ \cite[Fig.\ 15]{Geiss:19x}). We refer to
\cite{Geiss:19x} for more details and a characterization of edges that are
contractable. It is of interest, therefore, to ask whether there are true
orthology relations $\Theta$ that are not \hc-cographs, or equivalently,
when does a discriminating \hc-cotree $(\hat T,\hat t,\sigma)$ that is
obtained by edge-contraction from a given \hc-cotree
$(T^G_{\hc},t_{\hc},\sigma)$ still explains an RBMG $(G,\sigma)$? To answer
this question we provide first
\begin{definition}
  A tree $(T,t,\sigma)$ contains \emph{no losses}, if for all $x\in V(T)$
  with $t(x)=\DUPL$ we have $\sigma(L(T(v'))) = \sigma(L(T(v'')))$ for all
  $ v', v''\in\child(x)$.
\label{def:loss}
\end{definition}

\begin{theorem}\label{thm:cobmg-lossfree}
  Let $(T,\sigma)$ be a leaf-labeled tree such that there is a
    reconciliation map $\mu$ to some species tree and assume that
    $(T,t_{\mu},\sigma)$ does not contain losses. Then
  \begin{enumerate}
  \item The RBMG $G(T,\sigma)$ explained by $(T,\sigma)$ equals the
      colored cograph $(\Theta(T,t_{\mu}),\sigma)$.
  \item The unique disciminating cotree $(\hat T, \hat t, \sigma)$
      of $(\Theta(T,t_{\mu}),\sigma)$ explains the RBMG $(G,\sigma)$.
  \end{enumerate}
\end{theorem}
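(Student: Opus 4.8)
The plan is to prove both parts by exploiting the ``no losses'' hypothesis to show that every reciprocal best match edge must sit at a speciation vertex, so that the RBMG contains no false-positive edges beyond the orthology relation.

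First I would handle part (1) by establishing the two inclusions $G(T,\sigma)\subseteq\Theta(T,t_\mu)$ and $\Theta(T,t_\mu)\subseteq G(T,\sigma)$. The inclusion $\Theta(T,t_\mu)\subseteq G(T,\sigma)$ is immediate from Theorem~\ref{thm:orthoIMPLYrbm}, which already tells us that orthologous pairs are reciprocal best matches. For the reverse inclusion, let $ab\in E(G(T,\sigma))$ with $v=\lca_T(a,b)$, and suppose toward a contradiction that $ab\notin E(\Theta(T,t_\mu))$, i.e.\ $t_\mu(v)=\DUPL$. Here is where Theorem~\ref{thm:loss-edges} does the work: it tells us that an RBMG edge at a duplication vertex forces $\sigma(a),\sigma(b)\in\sigma(L(T(v_a)))\symdiff\sigma(L(T(v_b)))$, where $v_a,v_b$ are the children of $v$ above $a,b$ respectively. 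But the ``no losses'' hypothesis (Definition~\ref{def:loss}) says precisely that at any duplication vertex the children have \emph{equal} color sets, so $\sigma(L(T(v_a)))=\sigma(L(T(v_b)))$ and the symmetric difference is empty. This contradiction shows no RBMG edge can lie at a duplication vertex, hence $t_\mu(v)=\SPEC$ and $ab\in E(\Theta(T,t_\mu))$. Combining the two inclusions gives $G(T,\sigma)=(\Theta(T,t_\mu),\sigma)$ as colored graphs, noting that the colorings agree by construction since both graphs have vertex set $L(T)$ colored by $\sigma$.

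For part (2), I would first observe that $(\Theta(T,t_\mu),\sigma)$ is a cograph by Proposition~\ref{prop:ortho-cograph}, so it possesses a unique discriminating cotree $(\hat T,\hat t)$, obtained from $(T,t_\mu)$ by contracting every inner edge $uv$ with $t_\mu(u)=t_\mu(v)$. By Proposition~\ref{prop:ortho-cograph}, this discriminating cotree satisfies $\Theta(\hat T,\hat t)=\Theta(T,t_\mu)$, with $\SPEC$ assigned to joins and $\DUPL$ to disjoint unions. The remaining task is to verify that $(\hat T,\sigma)$ \emph{explains} the RBMG, i.e.\ that $G(\hat T,\sigma)=(G,\sigma)$. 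Since part (1) already gives $G(T,\sigma)=(\Theta(T,t_\mu),\sigma)$, and since $\hat T$ is a cotree for the same relation $\Theta(T,t_\mu)$, the natural route is to show directly that $G(\hat T,\sigma)=(\Theta(\hat T,\hat t),\sigma)=(\Theta(T,t_\mu),\sigma)=(G,\sigma)$.

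The main obstacle, and the step I expect to require the most care, is the first equality $G(\hat T,\sigma)=(\Theta(\hat T,\hat t),\sigma)$: contracting edges can in principle change which pairs are reciprocal best matches, so I cannot simply invoke part (1) applied to $\hat T$ unless I know the ``no losses'' property is inherited. The clean strategy is to check that contracting non-discriminating edges preserves the no-losses condition — contracting a $\SPEC$--$\SPEC$ edge merges two speciation children and keeps color sets disjoint, while a $\DUPL$--$\DUPL$ edge merges duplication children all of which already share the same color set, so the merged vertex still has children with equal color sets. Once $(\hat T,\hat t,\sigma)$ is verified to be loss-free with a valid reconciliation (inherited via the same contractions, using Lemma~\ref{lem:reconc-extremal}-style reasoning or directly the fact that contraction of non-discriminating edges preserves reconcilability), part (1) applies verbatim to $\hat T$ and yields $G(\hat T,\sigma)=(\Theta(\hat T,\hat t),\sigma)=(G,\sigma)$, completing the proof.
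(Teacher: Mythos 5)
Your proposal is correct, and for Statement (1) it is essentially the paper's own argument in different packaging: the paper rederives by hand that an RBMG edge cannot sit at a loss-free duplication vertex (it constructs a witness $a'$ with $\sigma(a')=\sigma(a)$ in the sibling subtree, contradicting the best-match property), whereas you reach the same contradiction by invoking Theorem~\ref{thm:loss-edges} and observing that under Definition~\ref{def:loss} the symmetric difference $\sigma(L(T(v_a)))\symdiff\sigma(L(T(v_b)))$ is empty; both are sound, and yours is the more economical citation. For Statement (2) you take a genuinely different route. The paper works directly on $\hat T$: it shows by induction on the number of contracted $\DUPL$--$\DUPL$ edges that every duplication vertex of $\hat T$ still has children with identical color sets (so no reciprocal best match can arise there), and it handles the $\SPEC$--$\SPEC$ contractions by citing Lemma~47 of \citet{Geiss:19x}. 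You instead show that the hypotheses of Statement (1) are inherited by $(\hat T,\hat t,\sigma)$ and apply Statement (1) verbatim to $\hat T$; this is cleaner, more self-contained, and avoids the external lemma for the speciation case. The one thin spot is your parenthetical claim that a reconciliation map for $(\hat T,\hat t,\sigma)$ is ``inherited via the same contractions'': Lemma~\ref{lem:reconc-extremal} does not apply here (it concerns least resolved trees and the extremal labeling $\tT$), so you must argue this from Proposition~\ref{prop:inftriple}. The argument is short --- contracting an edge can only shrink the set of displayed triples, and a non-discriminating contraction does not change the event label at the root of any surviving informative triple, so $\mathcal{S}(\hat T,\hat t,\sigma)\subseteq\mathcal{S}(T,t_{\mu},\sigma)$ remains compatible, and well-formedness at the merged vertex is checked exactly as you sketch for the color sets --- but it must be written out, since without a reconciliation map $\hat\mu$ satisfying $t_{\hat\mu}=\hat t$ you have no license to apply Statement (1), or Theorem~\ref{thm:orthoIMPLYrbm}, to $\hat T$.
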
 
\begin{proof}
  To simplify the notation, we set $(G,\sigma) = G(T,\sigma)$ and
  $(H,\sigma)=(\Theta(T,t_{\mu}),\sigma)$.
    
  We start with proving Statement (1).  By Theorem \ref{thm:orthoIMPLYrbm},
  $(H,\sigma)$ is a subgraph of $(G,\sigma)$ and $V(H)=V(G)$, hence it
  suffices to show that every edge $ab\in E(G)$ is also contained in
  $E(H)$. Assume, for contradiction, that this is not the case, i.e.,
  $ab \notin E(H)$, and thus $t_{\mu}(x)=\DUPL$ for
  $x\coloneqq \lca_T(a,b)$. Since $(T,t,\sigma)$ has no losses, we have
  $\sigma(L(T(v'))) = \sigma(L(T(v'')))$ for all $v', v''\in\child(x)$, and
  thus $a\in L(T(v'))$ and $b\in L(T(v''))$ for some pair of distinct
  children $v',v''\in\child(x)$ of $x$. From
  $\sigma(L(T(v'))) = \sigma(L(T(v'')))$ we know that there is a vertex
  $a'\in L(T(v''))$ with $\sigma(a')=\sigma(a)$.  Thus,
  $\lca_T(a,b)=x\succ_T \lca_T(a',b)$ for some $a'\in L(T(v''))$, which
  implies that $ab\notin E(G)$; a contradiction.  We conclude that
  $ab\in E(G)$ if and only if $ab\in E(H)$ and thus
  $(G,\sigma) = (H,\sigma)$.

  Let us now turn to Statement (2).  In order to show that
  $(\hat T, \hat t, \sigma)$ explains the RBMG $(G,\sigma)$ we first note
  that, since $(G,\sigma)$ is a cograph by Statement (1), there is a unique
  discriminating cotree $(\hat T, \hat t, \sigma)$ for
  $(G,\sigma)$. Furthermore, $(\hat T, \hat t, \sigma)$ is obtained from
  any cotree $(T,t_{\mu},\sigma)$ for $(G,\sigma)$ by contracting all edges
  $uv$ in $T$ with $t_{\mu}(u)=t_{\mu}(v)$ \cite{Hellmuth:13a}. It remains
  to show that $ab$ is an edge in $(G,\sigma)$ if and only if $ab$ forms a
  reciprocal best match in $(\hat T, \sigma)$.
  \par\noindent
  First consider duplications. Suppose, we have contracted the edge $xv$
  with $t_{\mu}(x)=t_{\mu}(v) = \DUPL$. By assumption, for all children
  $v',v''$ of $v$ we have $\sigma(L(T(v'))) = \sigma(L(T(v'')))$. Moreover,
  since $\sigma(L(T(v)))$ is the union of species $\sigma(L(T(w))))$ of its
  children $w$, we have
  $\sigma(L(T(v))) = \sigma(L(T(v')))=\sigma(L(T(v'')))$. Hence, after
  contraction of $xv$, the vertices $v'$ and $v''$ are now children of $x$
  and still satisfy $\sigma(L(\hat T(v'))) = \sigma(L(\hat T(v'')))$. In
  particular, $\sigma(L(\hat T(v'))) = \sigma(L(\hat T(w)))$ for every
  child $w$ of $x$.  By induction on the number of contracted edges, every
  vertex $x$ in $\hat T$ with $\hat t(x)=\DUPL$ still satisfies
  $\sigma(L(\hat T(v'))) = \sigma(L(\hat T(v'')))$ for all children
  $v',v''$ of $x$ in $\hat T$. Thus, the same argument as in the proof of
  Statement (1) implies that $ab$ cannot be a reciprocal best match in
  $\hat T$ for all $a\in L(T(v'))$ and $b\in L(T(v''))$. We also have
  $\lca_{\hat T}(a,b)=x$ for $a\in L(T(v'))$ and $b\in L(T(v''))$, and thus
  $\hat t(\lca_{\hat T}(a,b)) = \DUPL$.  Since $(\hat T, \hat t, \sigma)$
  is a cotree for the cograph $(G,\sigma)$,
  $\hat t(\lca_{\hat T}(a,b)) = \DUPL$ implies $ab\notin E(G)$. Therefore,
  $ab\notin E(G)$ unless $a$ and $b$ form a reciprocal best match in
  $(\hat T,\sigma)$.
  Let us now turn to speciation vertices. Lemma 47 in \cite{Geiss:19x}
  states, in particular, that all non-discriminating edges $uv$ with
  $t_{\mu}(u)=t_{\mu}(v)=\SPEC$ can be contracted to obtain a tree that
  still explains $(G,\sigma)$.  
  Thus, if $a$ and $b$ are reciprocal best matches in $(\hat T,\sigma)$,
  then $ab\in E(G)$. We conclude, therefore, that $ab\in E(G)$ if and only
  if $a$ and $b$ are reciprocal best matches in $(\hat T,\sigma)$.  \qed
\end{proof}
Prop.\ \ref{prop:co-hc-equi} shows that if the \emph{no loss} condition of
Def.\ \ref{def:loss} holds, then $(\Theta(T,t_{\mu}),\sigma)=G(T,\sigma)$
is a co-RBMG, an \hc-cograph, and an orthology relation.
  
The \emph{no loss} condition of Def.\ \ref{def:loss} is very restrictive,
however, and thus in general is will not be satisfied in real-life
data. Theorem~\ref{thm:well-formed-cograph} shows that orthology relations
correspond to properly colored cographs with compatible sets of the
informative triples. The characterization of co-RBMGs in \cite{Geiss:19x},
on the other hand, shows that only \hc-colorings may appear.  Since the
requirement that $\sigma$ is a proper coloring already implies disjointness
of the color sets for join operations, we can interpret the \hc-coloring
condition as a condition on duplication vertices. The offending vertices
are exactly those for which (i) $t(u)=\DUPL$ and (ii) there are two
children $v',v''\in\child(u)$ such that both
$\sigma(L(T(v')))\setminus \sigma(L(T(v'')))\ne\emptyset$ and
$\sigma(L(T(v'')))\setminus \sigma(L(T(v')))\ne\emptyset$. In this case,
there is a pair of species such that a different ``paralog group'' (that
is, a lineage of genes descending from a duplication) is missing in each of
them.  Every pair of vertices $a\in L(T(v'))$ with
$\sigma(a)\notin\sigma(L(T(v'')))$ and $b\in L(T(v''))$ with
$\sigma(b)\notin\sigma(L(T(v')))$ forms a best match and thus a false
positive orthology assignment. Since an RBMG is a cograph only if it is
hierarchically colored, the presence of such duplications implies that the
RBMG is not a cograph. At least in principle, therefore, it should be
possible to identify the false positive edges by means of a suitable
cograph-editing approach.

Before closing this section, we briefly return to the existence of
reconciliation maps.  Since every hc-cograph is a properly colored cograph,
Theorem \ref{thm:well-formed-cograph} immediately implies
\begin{corollary}
  Let $\Theta$ be an \hc-cograph with vertex set $L$ and associated
  \hc-cotree $(T^{\Theta}_{\hc},t_{\hc},\sigma)$ with leaf set $L$.  Then
  there exists a reconciliation map $\mu$ from
  $(T^{\Theta}_{\hc},t_{\hc},\sigma)$ to some species tree $S$ if and only
  if $\mathcal{S}(T_{\Theta},t_{\Theta},\sigma)$ is compatible.
  \label{cor:well-formed-hc-cograph}
\end{corollary}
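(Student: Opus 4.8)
The plan is to deduce the corollary from Theorem~\ref{thm:well-formed-cograph}, whose two hypotheses are (i) compatibility of the informative triples and (ii) proper coloring of $(\Theta,\sigma)$. First I would dispose of (ii): in an \hc-cograph every edge is produced by a join \AX{(K2)}, and \AX{(K2)} demands $\sigma(V(H_1))\cap\sigma(V(H_2))=\emptyset$, so the endpoints of any edge carry distinct colors and $(\Theta,\sigma)$ is properly colored. The same observation shows that $t_{\hc}$ is well-formed in the sense of Definition~\ref{def:well-formed}: every speciation vertex of $T^{\Theta}_{\hc}$ corresponds to a join, whence its children have disjoint color sets. Consequently Proposition~\ref{prop:inftriple} applies verbatim to $(T^{\Theta}_{\hc},t_{\hc},\sigma)$ and yields that a reconciliation map from the \hc-cotree exists if and only if $\mathcal{S}(T^{\Theta}_{\hc},t_{\hc},\sigma)$ is compatible.

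It then remains to relate $\mathcal{S}(T^{\Theta}_{\hc},t_{\hc},\sigma)$ to the set $\mathcal{S}(T_{\Theta},t_{\Theta},\sigma)$ that appears in the statement. Both $T^{\Theta}_{\hc}$ and $T_{\Theta}$ are cotrees for the same cograph $\Theta$ and hence induce the same orthology relation; the discriminating cotree $T_{\Theta}$ is obtained from $T^{\Theta}_{\hc}$ by contracting the remaining non-discriminating edges. Contracting a $\DUPL$--$\DUPL$ edge leaves $\mathcal{S}$ unchanged, since every triple whose root it affects is rooted at a duplication and thus never informative, while every triple rooted above or below the contracted edge keeps both its displayed status and its root; contracting a $\SPEC$--$\SPEC$ edge makes the tree less resolved and can only delete triples, all of them rooted at speciations. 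This gives the inclusion $\mathcal{S}(T_{\Theta},t_{\Theta},\sigma)\subseteq \mathcal{S}(T^{\Theta}_{\hc},t_{\hc},\sigma)$ and with it the easy direction of the corollary: a reconciliation map for the \hc-cotree forces $\mathcal{S}(T^{\Theta}_{\hc},t_{\hc},\sigma)$ compatible, hence its subset $\mathcal{S}(T_{\Theta},t_{\Theta},\sigma)$ compatible as well.

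The hard part is the converse, that compatibility of the smaller set $\mathcal{S}(T_{\Theta},t_{\Theta},\sigma)$ forces compatibility of the possibly larger set $\mathcal{S}(T^{\Theta}_{\hc},t_{\hc},\sigma)$; equivalently, that the extra triples produced by the binary join-chains of $T^{\Theta}_{\hc}$ --- which $T_{\Theta}$ collapses into single multifurcating speciation vertices --- cannot destroy compatibility. I see two routes. The cleanest is to pass to the $\SPEC$-reduced \hc-cotree, contracting every $\SPEC$--$\SPEC$ edge; this is legitimate by \cite[Cor.\ 11]{Geiss:19x} (cf.\ Lemma~47 there), as it preserves the explained RBMG, and afterwards the only non-discriminating edges are of type $\DUPL$--$\DUPL$, so by the previous paragraph $\mathcal{S}(T^{\Theta}_{\hc},t_{\hc},\sigma)=\mathcal{S}(T_{\Theta},t_{\Theta},\sigma)$ and Proposition~\ref{prop:inftriple} gives the claim outright. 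Should one insist on the unreduced \hc-cotree, the alternative is a refinement argument: take a species tree $S$ displaying $\mathcal{S}(T_{\Theta},t_{\Theta},\sigma)$ together with the reconciliation map supplied by Theorem~\ref{thm:well-formed-cograph}, and refine $S$ below each speciation image $\mu(u)$ so that the pairwise-incomparable subtrees carrying the children's species sets are arranged in the same binary order as in $T^{\Theta}_{\hc}$. The disjointness of those color sets, guaranteed by \AX{(K2)} and Lemma~\ref{lem:disjointSpecies}, is exactly what keeps the local refinements at distinct speciation vertices mutually consistent and yields a species tree displaying all of $\mathcal{S}(T^{\Theta}_{\hc},t_{\hc},\sigma)$. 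The main obstacle is precisely this global consistency check, and it is simplest to sidestep it by arguing on the $\SPEC$-reduced \hc-cotree as in the first route.
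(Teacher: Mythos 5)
Your first paragraph already contains everything the paper offers as a proof: the paper's argument is the single observation that an \hc-cograph is a properly colored cograph, followed by an appeal to Theorem~\ref{thm:well-formed-cograph}. Deriving proper coloring and well-formedness of $t_{\hc}$ from \AX{(K2)} and then applying Proposition~\ref{prop:inftriple} directly to $(T^{\Theta}_{\hc},t_{\hc},\sigma)$, as you do, is the correct way to make that one-liner precise, and it yields: a reconciliation map from the \hc-cotree exists if and only if $\mathcal{S}(T^{\Theta}_{\hc},t_{\hc},\sigma)$ is compatible. That is where the intended proof ends. The subscript $T_{\Theta}$ in the corollary's statement is evidently a slip for the \hc-cotree; the paper's own use of the corollary immediately afterwards (Fig.~\ref{fig:hc-triples}) reads the conflicting triples $AB|C$ and $AC|B$ off the \hc-cotree, not off the discriminating cotree.

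The remaining two thirds of your proposal, which try to replace $\mathcal{S}(T^{\Theta}_{\hc},t_{\hc},\sigma)$ by $\mathcal{S}(T_{\Theta},t_{\Theta},\sigma)$, therefore attack a statement the paper never proves --- and one that is in fact false, so the difficulty you sense in the converse direction cannot be overcome. Take $\Theta$ to be two disjoint triangles on $\{a_1,b_1,c_1\}$ and $\{a_2,b_2,c_2\}$ with $\sigma(a_i)=A$, $\sigma(b_i)=B$, $\sigma(c_i)=C$, constructed as $\bigl((a_1\join b_1)\join c_1\bigr)\cupdot\bigl((a_2\join c_2)\join b_2\bigr)$; all of \AX{(K1)}--\AX{(K3)} are satisfied. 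The resulting \hc-cotree displays $a_1b_1|c_1$ and $a_2c_2|b_2$, both rooted at speciations, so $\mathcal{S}(T^{\Theta}_{\hc},t_{\hc},\sigma)=\{AB|C,\;AC|B\}$ is incompatible and, by Proposition~\ref{prop:inftriple}, no reconciliation map from $(T^{\Theta}_{\hc},t_{\hc},\sigma)$ exists; yet the discriminating cotree consists of a root duplication above two trifurcating speciations whose children are single leaves of pairwise distinct colors, so $\mathcal{S}(T_{\Theta},t_{\Theta},\sigma)=\emptyset$ is compatible. Your inclusion $\mathcal{S}(T_{\Theta},t_{\Theta},\sigma)\subseteq\mathcal{S}(T^{\Theta}_{\hc},t_{\hc},\sigma)$ is correct and gives the unproblematic ``only if'' direction, but both of your routes for the converse break down exactly on this example. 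In the first route, Cor.~11 of \cite{Geiss:19x} only guarantees that the $\SPEC$-reduced tree explains the same RBMG; it says nothing about reconciliation maps, and the asserted equality of triple sets after the reduction contradicts your own (correct) observation that $\SPEC$--$\SPEC$ contractions delete informative triples --- those triples remain in the triple set of the original \hc-cotree, which is the tree the corollary asks you to reconcile. In the second route, the ``global consistency check'' you flag is precisely what fails: distinct speciation vertices may impose contradictory binary orders on the same colors.
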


\begin{figure}
  \begin{center}
    \includegraphics[width=1\textwidth]{./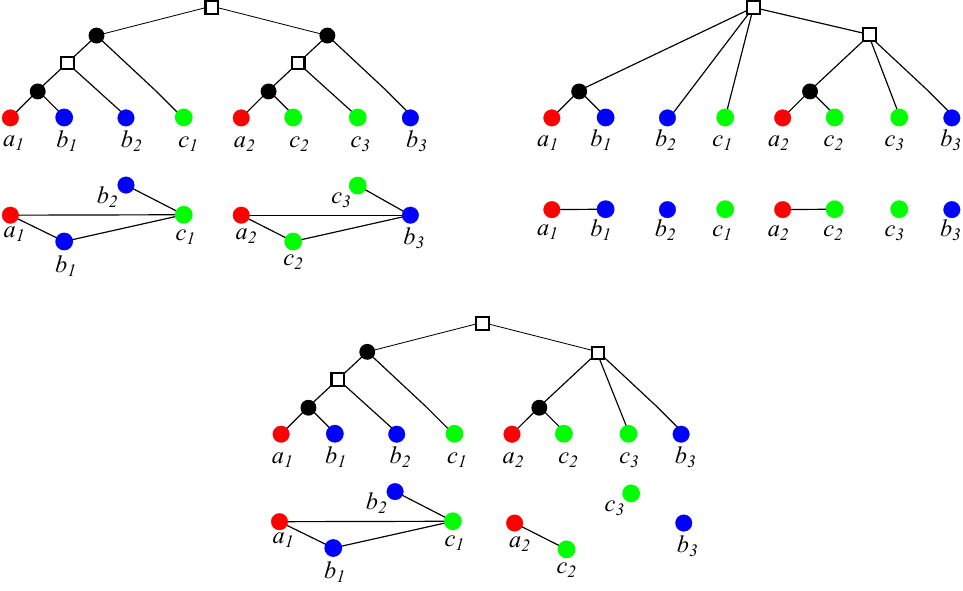}      
  \end{center}
  \caption{\emph{Top Left:} A (discriminating) \hc-cotree
    $(T^G_\hc,t_{\hc},\sigma)$.  Its corresponding \hc-cograph
    $(G,\sigma) = (\Theta(T^G_\hc,t_{\hc}),\sigma)$ is drawn below
    $(T^G_\hc,t_{\hc},\sigma)$. In fact, Prop.\ \ref{prop:co-hc-equi}
    implies that $(G,\sigma)$ is an RBMG. \emph{Top Right:} A tree
    $(T^*,\tT,\sigma)$ that is least resolved w.r.t.\ the RBMG $(G,\sigma)$
    together with extremal labeling $\tT$ and the resulting orthology
    relation $\Theta(T^*,\tT)$, where $(T^*,\tT)$ is not discriminating.
    \emph{Below:} A tree $(T,\tT,\sigma)$ together with extremal labeling
    $\tT$ that explains the RBMG $(G,\sigma)$ but is not least resolved
    w.r.t.\ $(G,\sigma)$.  The resulting orthology relation $\Theta(T,\tT)$
    is drawn below $(T,\tT,\sigma)$. }
  \label{fig:hc-triples}
\end{figure}

By Cor.\ \ref{cor:well-formed-hc-cograph}, it is not necessarily possible
to reconcile a (discriminating) \hc-cotree with any species tree. An
example is shown in Fig.\ \ref{fig:hc-triples}.  To be more precise, the
\hc-cotree $(T^G_{\hc},t_{\hc},\sigma)$ in Fig.\ \ref{fig:hc-triples}
yields the conflicting species triples $AB|C$ and $AC|B$. Hence, Prop.\
\ref{prop:inftriple} implies that $(T^G_{\hc},t_{\hc},\sigma)$ cannot be
reconciled with any species tree even though $(T^G_{\hc},\sigma)$ explains
the RBMG $(G,\sigma)$.  One can contract edges of $(T^G_\hc,\sigma)$ to
obtain a least resolved tree $(T^*,\sigma)$ that still explains
$(G,\sigma)$, see Fig.\ \ref{fig:hc-triples} (top right).  In agreement
with Lemma \ref{lem:reconc-extremal},
$\mathcal{S}(T^*,t_{\mu},\sigma) = \emptyset$ and thus, there is always a
reconciliation map $\mu$ from $(T^*,t_{\mu},\sigma)$ to any species tree
$S$ with $L(S)=\sigma(L(T))$. Moreover, in agreement with Theorem
\ref{thm:orthoIMPLYrbm}, all orthologous pairs in $\Theta(T^*,\tT,\sigma)$
are best matches. Although $(T^*,\sigma)$ explains $(G,\sigma)$, the two
graphs $(G,\sigma) = (\Theta(T^G_\hc,t),\sigma)$ and
$(\Theta(T^*,\tT),\sigma)$ are very different. In particular, by Corollary
\ref{cor:cliques}, $\Theta(T^*,\tT)$ is the disjoint union of cliques.

\begin{fact}
  In general it is not necessary to edit $(G,\sigma)$ to a disjoint union
  of cliques to obtain a valid orthology relation.
\end{fact}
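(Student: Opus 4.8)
The plan is to establish this Observation by exhibiting a single RBMG $(G,\sigma)$ that is already a valid orthology relation in the sense of Definition~\ref{def:fitch}, yet fails to be a disjoint union of cliques. A single such witness suffices, since it shows that one need not edit the RBMG all the way to a disjoint union of cliques in order to recover a legitimate orthology relation from it. The natural witness is precisely the co-RBMG displayed in Fig.~\ref{fig:hc-triples}.

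First I would recall the identification $(G,\sigma)=(\Theta(T^G_{\hc},t_{\hc}),\sigma)$ built into the \hc-cotree construction: labeling every join $\SPEC$ and every disjoint union $\DUPL$ turns $(T^G_{\hc},t_{\hc})$ into an event-labeled tree whose orthology relation is exactly $(G,\sigma)$, so by Proposition~\ref{prop:ortho-cograph} the graph $(G,\sigma)$ is a valid orthology relation with no editing at all. Next I would observe that $(G,\sigma)$ is in general not a disjoint union of cliques: in Fig.~\ref{fig:hc-triples} the \hc-cotree carries a $\SPEC$ node lying above a nontrivial $\DUPL$ substructure, which forces some vertex to be adjacent to two mutually non-adjacent vertices within one connected component, i.e.\ an induced $P_3$, so that component is not complete. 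Finally I would contrast this with the clique route: routing the same $(G,\sigma)$ through a least resolved tree $(T^*,\sigma)$ with its extremal labeling $\tT$ yields, by Corollary~\ref{cor:cliques}, the orthology relation $\Theta(T^*,\tT)$, which \emph{is} a disjoint union of cliques and differs substantially from $(G,\sigma)$. Since $(G,\sigma)$ itself is a valid orthology relation that is not a disjoint union of cliques, editing all the way down to such a union is not required; more generally, whenever the RBMG contains induced $P_4$s, cograph editing rather than editing to a disjoint union of cliques already suffices.

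The main obstacle is conceptual rather than computational: one must keep straight which notion of ``valid orthology relation'' is in play. Here validity means $\Theta=\Theta(T,t)$ for \emph{some} event-labeled tree $(T,t)$, as guaranteed by Proposition~\ref{prop:ortho-cograph}, and \emph{not} that the labeling arises from a reconciliation map to a species tree. Indeed, by Corollary~\ref{cor:well-formed-hc-cograph} the \hc-cotree of Fig.~\ref{fig:hc-triples} cannot be reconciled with any species tree, so the witness is valid only in the weaker, purely combinatorial sense of Definition~\ref{def:fitch}. Once these two senses are kept apart, the Observation follows immediately from the example.
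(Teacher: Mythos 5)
There is a genuine gap here, and it lies exactly where you locate the ``conceptual obstacle'': the reading of \emph{valid orthology relation}. You take validity in the purely combinatorial sense of Proposition~\ref{prop:ortho-cograph} (any cograph is $\Theta(T,t)$ for \emph{some} event-labeled tree) and then use $(G,\sigma)$ itself as the witness, conceding that by Corollary~\ref{cor:well-formed-hc-cograph} this witness cannot be reconciled with any species tree. Under that weak reading the Observation collapses into a triviality: every co-RBMG is by definition an orthology relation, so the statement would reduce to ``some co-RBMGs are not disjoint unions of cliques,'' and no discussion of editing, triples, or reconciliation would be needed. That is not what the paper is claiming. Its own justification says explicitly that the exhibited relation ``is already \emph{biologically feasible}'' because Prop.~\ref{prop:inftriple} yields a reconciliation map to any species tree displaying $AB|C$; feasibility in the reconciliation sense is the whole point, and it is precisely the property your witness lacks --- the paper itself shows that $(T^G_{\hc},t_{\hc},\sigma)$ yields the conflicting triples $AB|C$ and $AC|B$ and hence cannot be reconciled with any species tree.

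The paper's actual witness is different: it is the tree $(T,\tT,\sigma)$ at the bottom of Fig.~\ref{fig:hc-triples}, which explains $(G,\sigma)$ but is \emph{not} least resolved. Its extremal labeling gives $\Theta(T,\tT)$, a proper edge-subset of $(G,\sigma)$ that is visibly not a disjoint union of cliques, and whose only informative species triple is $AB|C$; by Prop.~\ref{prop:inftriple} this single compatible triple guarantees a reconciliation map to any species tree displaying $AB|C$. So the intended content is: even insisting on an orthology relation that arises from an actual reconciliation, one can stop the editing of $(G,\sigma)$ strictly before reaching the clique-union extreme of Corollary~\ref{cor:cliques}. To repair your argument you need to replace your witness $(G,\sigma)$ by $\Theta(T,\tT)$ for this intermediate, non-least-resolved tree and verify the compatibility of $\mathcal{S}(T,\tT,\sigma)$, rather than retreating to the weaker notion of validity.
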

An example is provided by the tree $(T,\tT,\sigma)$ in Fig.\
\ref{fig:hc-triples}.  Obviously, $\Theta(T,\tT)$ is not the disjoint union
of cliques. Moreover, $AB|C$ is the only informative triple displayed by
$(T,\tT,\sigma)$ where $A$, $B$, and $C$ correspond to the red, blue and
green species, respectively.  Prop.\ \ref{prop:inftriple} implies that
$(T,\tT,\sigma)$ can be reconciled with any species tree that displays
$AB|C$.  In other words, $\Theta(T,\tT)$ is already ``biologically
feasible'' and there is no need to remove further edges from
$\Theta(T,\tT)$.

\section{Non-Orthologous Reciprocal Best Matches}
\label{sect:goodbadugly}

In this section we investigate to what extent false positive orthology
assignments in the reciprocal best match graph can be identified.  Since
the orthology relation $\Theta$ must be a cograph, it is natural to
consider the smallest obstructions, i.e., induced $P_4$s in more
detail. First we note that every induced $P_4$ in an RBMG contains either
three or four distinct colors \cite[Sect.\ E]{Geiss:19x}. Each $P_4$ in an
RBMG $(G,\sigma)$ spans an induced subgraph of every BMG $(\G,\sigma)$ that
contains $(G,\sigma)$ as its symmetric part. These these induced subgraphs
of a BMG $(\G,\sigma)$ with four vertices are known as
\emph{quartets}. With respect to a fixed BMG, every induced $P_4$ belongs
to one of three distinct types which are defined in terms of its coloring
and the quartet in which it resides.  An induced $P_4$ with edges $ab$,
$bc$, and $cd$ is denoted by $\langle abcd \rangle$ or, equivalently,
$\langle dcba \rangle$.

\begin{definition}
  Let $(\G,\sigma)$ be a BMG explained by the tree $(T,\sigma)$, with
  symmetric part $(G,\sigma)$ and let
  $Q\coloneqq \{x,x',y,z\} \subseteq L(T)$ with $\sigma(x)=\sigma(x')$
  and pairwise distinct colors $\sigma(x)$, $\sigma(y)$, and $\sigma(z)$.
  The set $Q$, resp., the induced subgraph $(\G_{|Q},\sigma_{|Q})$ is
  \begin{itemize}
  \item a \emph{good quartet} if (i) $\langle xyzx'\rangle$ is an induced
    $P_4$ in $(G,\sigma)$ and (ii) $(x,z),(x',y)\in E(\G)$ and
    $(z,x),(y,x')\notin E(\G)$,
  \item a \emph{bad quartet} if (i) $\langle xyzx'\rangle$ is an induced
    $P_4$ in $(G,\sigma)$ and (ii) $(z,x),(y,x')\in E(\G)$ and
    $(x,z),(x',y)\notin E(\G)$, and 
  \item an \emph{ugly quartet} if $\langle xyx'z\rangle$ is an induced
    $P_4$ in $(G,\sigma)$.
  \end{itemize}
  \label{def:GoodBadUgly}
\end{definition}

If $Q$ is a good, bad, or ugly quartet we will refer to the underlying
induced $P_4$ as a good, bad, or ugly quartet, respectively.  Lemma 32 of
\cite{Geiss:19x} states that every quartet $Q$ in an RBMG $(G,\sigma)$ that
is contained in a BMG $(\G,\sigma)$ is either good, bad, or ugly. An
example of an RBMG containing good, bad, and ugly quartets is shown in
Fig.\ \ref{fig:P_4-classes}. Note that good, bad, and ugly quartets cannot
appear in RBMGs of Type \AX{(A)}. These are cographs and thus by definition
do not contain induced $P_4$s.

\begin{figure}
  \begin{center}
    \includegraphics[width=1\textwidth]{./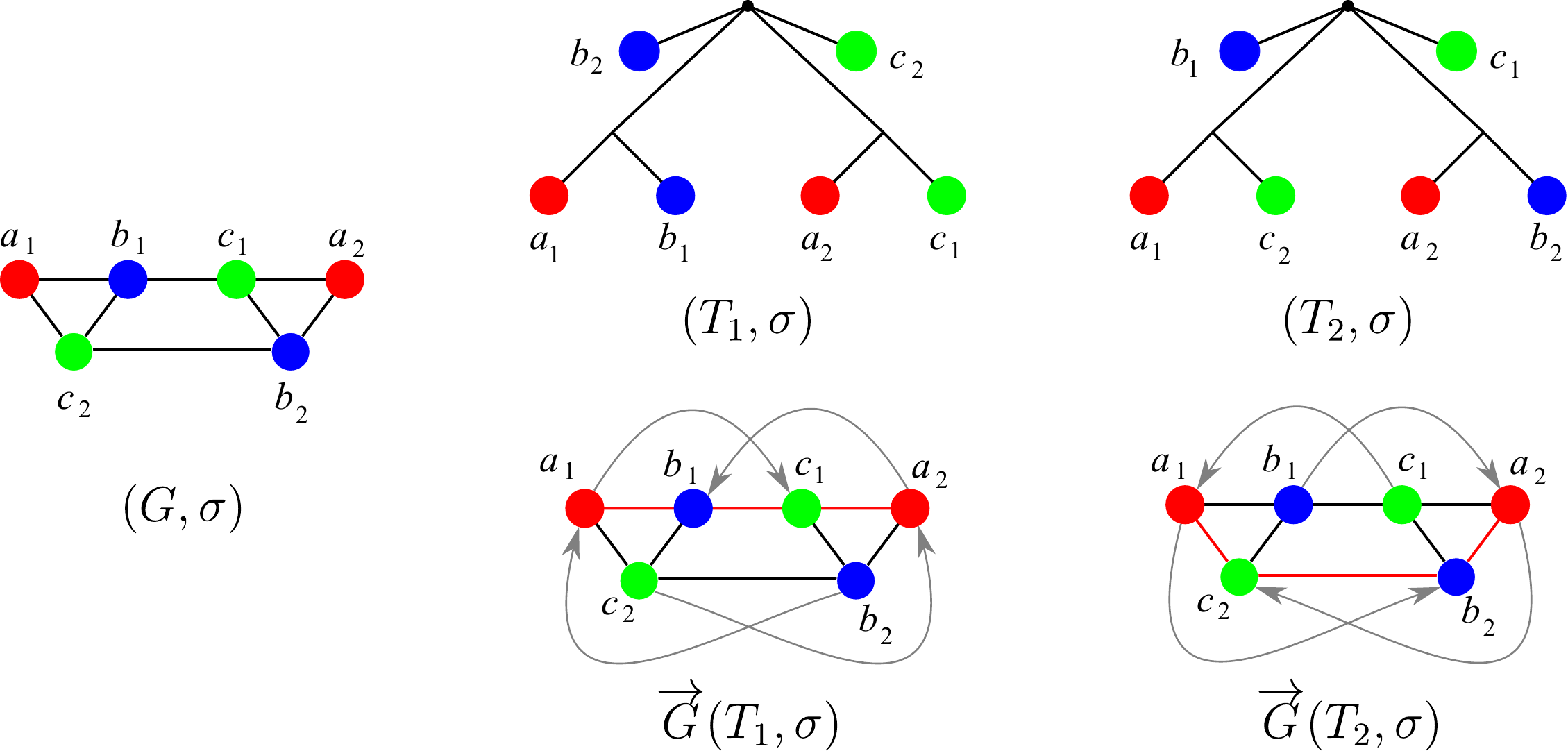}
  \end{center}
  \caption{The 3-RBMG $(G,\sigma)$ is explained by two trees
    $(T_1,\sigma)$ and $(T_2,\sigma)$. These induce distinct BMGs
    $\G(T_1,\sigma)$ and $\G(T_2,\sigma)$. In $\G(T_1,\sigma)$,
    $P^1 =\langle a_1b_1c_1a_2\rangle$ defines a good quartet, while
    $P^2 =\langle a_1c_2b_2a_2\rangle$ induces a bad quartet. In
    $\G(T_2,\sigma)$ the situation is reversed. The good quartets in
    $\G(T_1,\sigma)$ and $\G(T_2,\sigma)$ are indicated by red edges. The
    induced paths $\langle a_1 b_1 c_1 b_2\rangle$ and
    $\langle a_2 c_1 b_1 c_2\rangle$ are examples of ugly quartets.
    \hfill\break
    Figure reused from \cite{Geiss:19x}, \copyright Springer}
  \label{fig:P_4-classes}
\end{figure}

The location of good quartets (in contrast to bad and ugly quartets) turns
out to be strictly constrained. This fact can be used to show that the
``middle'' edge of any good quartet must be a false positive orthology
assignment:
\begin{lemma}
  Let $(T,\sigma)$ be some leaf-labeled tree and $\hat t_T$ the extremal
  event labeling for $(T,\sigma)$.  If $\langle xyzx'\rangle$ is a good
  quartet in the BMG $\G(T,\sigma)$, then $\tT(v)=\DUPL$ for
  $v\coloneqq \lca(x,x',y,z)$.
  \label{lem:dupl-extr}
\end{lemma}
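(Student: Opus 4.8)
The plan is to reduce the statement to the purely structural fact that $x$ and $x'$ lie in different child-subtrees of $v=\lca_T(x,x',y,z)$; granting this, $\tT(v)=\DUPL$ follows at once from Definition~\ref{def:event-rbmg2}. To reach this reduction I would first extract two strict ancestor inequalities from the arc pattern of the good quartet. Since $xy$ is an edge of the symmetric part $(G,\sigma)$, the reciprocal best match yields the arc $(y,x)\in E(\G)$, i.e.\ $x$ is a best match of $y$, so $\lca_T(y,x)\preceq_T\lca_T(y,x')$ (recall $\sigma(x)=\sigma(x')$). The good-quartet condition gives $(y,x')\notin E(\G)$, i.e.\ $x'$ is \emph{not} a best match of $y$; hence equality is impossible, because $\lca_T(y,x)=\lca_T(y,x')$ would force $x'$ to also attain the minimal $\lca$ with $y$ over all leaves of color $\sigma(x)$ and thus be a best match of $y$. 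Therefore $\lca_T(y,x)\prec_T\lca_T(y,x')$. Symmetrically, the reciprocal edge $zx'$ gives $(z,x')\in E(\G)$, so $x'$ is a best match of $z$, while $(z,x)\notin E(\G)$ says $x$ is not; the same argument yields $\lca_T(z,x')\prec_T\lca_T(z,x)$.

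Next I would establish the structural claim by contradiction. First note that $v$ is an inner vertex, being the $\lca$ of four distinct leaves. Suppose $x$ and $x'$ lay in the same subtree $T(v_1)$ for some child $v_1$ of $v$. If both $y$ and $z$ also lay in $T(v_1)$, then all four leaves would, forcing $\lca_T(x,x',y,z)\preceq_T v_1\prec_T v$, a contradiction; hence at least one of $y,z$ lies outside $L(T(v_1))$. If $y\notin L(T(v_1))$, then $y$ lies in a child-subtree of $v$ distinct from $T(v_1)$, so both $x$ and $x'$ are separated from $y$ precisely at $v$, giving $\lca_T(y,x)=v=\lca_T(y,x')$ and contradicting $\lca_T(y,x)\prec_T\lca_T(y,x')$. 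If instead $z\notin L(T(v_1))$, then analogously $\lca_T(z,x)=v=\lca_T(z,x')$, contradicting $\lca_T(z,x')\prec_T\lca_T(z,x)$. Either way we reach a contradiction, so $x$ and $x'$ belong to distinct child-subtrees of $v$.

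Finally, since $x$ and $x'$ are distinct leaves of equal color lying in $T(v_1)$ and $T(v_2)$ for two distinct children $v_1,v_2\in\child(v)$, we have $\sigma(x)\in\sigma(L(T(v_1)))\cap\sigma(L(T(v_2)))$, whence $\tT(v)=\DUPL$ by Definition~\ref{def:event-rbmg2}. I expect the only delicate point to be the justification of \emph{strictness} in the two $\lca$-inequalities: one must argue from the definition of best match that a non-strict inequality would force $x'$ (resp.\ $x$) to also be a best match, contradicting the non-arcs prescribed by the good quartet. The remainder is a routine case distinction on whether $y$ or $z$ falls outside $T(v_1)$.
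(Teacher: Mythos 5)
Your proof is correct. It follows the same overall strategy as the paper's --- establish that the equally colored leaves $x$ and $x'$ lie below two distinct children of $v$, and then conclude $\tT(v)=\DUPL$ directly from Definition~\ref{def:event-rbmg2} --- but it reaches the structural step by a genuinely different route. The paper simply cites Lemma~36 of \citet{Geiss:19x}, which delivers the stronger placement that there are distinct $v_1,v_2\in\child(v)$ with $x,y\preceq_T v_1$ and $x',z\preceq_T v_2$; the duplication label is then immediate. You instead derive, from first principles, the two strict inequalities $\lca_T(y,x)\prec_T\lca_T(y,x')$ and $\lca_T(z,x')\prec_T\lca_T(z,x)$ out of the arc/non-arc pattern of the good quartet (the strictness argument via ``equality would make $x'$ a best match of $y$'' is exactly right under the paper's convention that $(a,b)\in E(\G)$ means $b$ is a best match of $a$), and then rule out $x,x'$ sharing a child-subtree by the case distinction on whether $y$ or $z$ escapes that subtree. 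What your version buys is self-containedness: it needs only the definitions of best match and good quartet, not the companion paper's structural lemma, at the cost of settling for the weaker (but fully sufficient) conclusion that only $x$ and $x'$ are separated at $v$. The one point worth making explicit in a final write-up is the observation you already note in passing, namely that $v$ is neither a leaf nor $0_T$ (it is the last common ancestor of four distinct leaves, hence lies in $V^0(T)$), so that the $\DUPL$ clause of Definition~\ref{def:event-rbmg2} indeed applies.
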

\begin{proof}
  Lemma 36 of \citet{Geiss:19x} implies that for a good quartet
  $\langle xyzx' \rangle$ in $\G(T,\sigma)$ with
  $v\coloneqq \lca(x,x',y,z)$ there are two distinct children
  $v_1, v_2\in \child(v)$ such that $x,y \preceq_T v_1$ and
  $x',z\preceq_T v_2$.  Thus, in particular, $v_1$ and $v_2$ must be inner
  vertices in $(T,\sigma)$. Since $\sigma(x)=\sigma(x')$ by definition of a
  good quartet, we have
  $\sigma(L(T(v_1)))\cap \sigma(L(T(v_2)))\neq \emptyset$. Hence,
  $\tT(v)\neq \SPEC$ by definition of $\tT$ (cf.\ Definition
  \ref{def:event-rbmg2}). \qed
\end{proof}
As an immediate consequence of Lemma \ref{lem:dupl-extr} and Cor.\
\ref{cor:para-eventmap}, an analogous statement is true for event labelings
$t_\mu$ for a given reconciliation map:
\begin{corollary}\label{cor:P4-Editing}
  Let $T$ and $S$ be planted trees, $\sigma: L(T)\to L(S)$ a surjective
  map, and $\mu$ a reconciliation map from $(T,\sigma)$ to $S$.
  If $\langle xyzx'\rangle$ is a good quartet in the BMG $\G(T,\sigma)$,
  then $t_\mu(v)=\DUPL$ for $v\coloneqq \lca(x,x',y,z)$.
\end{corollary}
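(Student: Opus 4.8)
The plan is to combine the extremal-labeling result of Lemma~\ref{lem:dupl-extr} with the paralogy inclusion of Corollary~\ref{cor:para-eventmap}, exploiting the fact that a $\DUPL$-labeled last common ancestor is precisely what makes a pair of leaves paralogous. First I would apply Lemma~\ref{lem:dupl-extr} directly to the good quartet $\langle xyzx'\rangle$ in $\G(T,\sigma)$, obtaining $\tT(v)=\DUPL$ for $v\coloneqq\lca(x,x',y,z)$. This settles the claim for the \emph{extremal} labeling $\tT$; the remaining work is to transfer the $\DUPL$ label from $\tT$ to the labeling $t_\mu$ induced by the given reconciliation map.

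To build the bridge, I would next exhibit a pair of leaves among $\{x,x',y,z\}$ whose last common ancestor is exactly $v$, so that this pair is paralogous with respect to $\tT$. Here I would reuse the structural description already invoked inside the proof of Lemma~\ref{lem:dupl-extr}, namely Lemma 36 of \cite{Geiss:19x}: there are two distinct children $v_1,v_2\in\child(v)$ with $x,y\preceq_T v_1$ and $x',z\preceq_T v_2$. Since $x$ and $x'$ descend from two \emph{different} children of $v$, they are incomparable and separated at $v$, so $\lca_T(x,x')=v$. Because $\tT(v)=\DUPL$, this means $\{x,x'\}\in\overline{\Theta}(T,\tT)$ by the definition of the paralogy relation.

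Finally, I would invoke Corollary~\ref{cor:para-eventmap}, which asserts $\overline{\Theta}(T,\tT)\subseteq\overline{\Theta}(T,t_\mu)$. Consequently $\{x,x'\}\in\overline{\Theta}(T,t_\mu)$, i.e.\ $t_\mu(\lca_T(x,x'))=\DUPL$. Since $\lca_T(x,x')=v$, this is precisely $t_\mu(v)=\DUPL$, which is the desired conclusion.

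The only step requiring any care — and the closest thing to an obstacle in an otherwise one-line deduction — is confirming that the chosen pair genuinely has $v$ as its $\lca$ rather than some proper descendant, since the inclusion of Corollary~\ref{cor:para-eventmap} is stated pointwise in terms of $\lca$'s of pairs. This is guaranteed by the two-children split $x\preceq_T v_1$, $x'\preceq_T v_2$ with $v_1\neq v_2$ noted above; the same reasoning applies verbatim to the pairs $\{x,z\}$, $\{y,x'\}$, and $\{y,z\}$, each of which straddles $v_1$ and $v_2$ and hence also has $\lca_T$ equal to $v$, so the argument does not depend on which straddling pair one selects.
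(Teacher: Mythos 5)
Your proof is correct and follows essentially the same route as the paper, which derives the corollary as an immediate consequence of Lemma~\ref{lem:dupl-extr} and Corollary~\ref{cor:para-eventmap}. You merely spell out the (correct and necessary) detail that some pair straddling the two children $v_1,v_2$ has $\lca$ exactly $v$, so that the pointwise paralogy inclusion actually transfers the $\DUPL$ label from $\tT$ to $t_\mu$ at $v$.
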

Given an RBMG $(G,\sigma)$ that contains a good quartet
$\langle xyzx' \rangle$ (w.r.t.\ to the underlying BMG $(\G,\sigma)$), the
edge $yz$ therefore always corresponds to a false positive orthology
assignment, i.e., it is not contained in the true orthology relation
$\Theta$.

Not all false positives can be identified in this way from good quartets,
however. The RBMG $G(T_1,\sigma)$ in Fig.\ \ref{fig:P4-classes_2}, for
instance, contains only one good quartet, that is
$\langle a_1c_2b_2a_2 \rangle$. After removal of the false positive edge
$c_2b_2$, the remaining undirected graph still contains the bad quartet
$\langle a_1b_1c_1a_2 \rangle$, hence, in particular, it still contains an
induced $P_4$ and is, therefore, not an orthology relation.

Neither bad nor ugly quartets can be used to unambiguously identify false
positive edges. For an example, consider Fig.\ \ref{fig:P4-classes_2}. The
two 3-RBMGs $G(T_1,\sigma)$ and $G(T_2,\sigma)$ both contain the bad
quartet $\langle a_1 b_1 c_1 a_2 \rangle$. As a consequence of Lemma
\ref{lem:disjointSpecies}, neither the root of $T_1$ nor the root of $T_2$
can be labeled by a speciation event. Hence, as $a_1,b_1,c_1,a_2$ reside
all in different subtrees below the root of $T_1$, all edges
$a_1b_1, b_1c_1, c_1a_2$ in $G(T_1,\sigma)$ correspond to false positive
orthology assignments.  On the other hand, the vertices $b_1$ and $c_1$
reside within the same 2-colored subtree below the root of $T_2$ and are
incident to the same parent in $T_2$. Therefore, one easily checks that
there exist reconciliation scenarios where $b_1$ and $c_1$ are orthologous,
hence the edge $b_1c_1$ must indeed be contained in the orthology
relation. Similarly, $\langle a_1 b_1c_1b_2 \rangle$ and
$\langle a_1 b_1a_3c_2 \rangle$ are ugly quartets in $G(T_1,\sigma)$ and
$G(T_2,\sigma)$, respectively. By the same argumentation as before, the
edges $a_1b_1$, $b_1c_1$, and $c_1b_2$ are false positives in
$G(T_1,\sigma)$. For $(T_2,\sigma)$, however, there exist reconciliation
scenarios, where $a_3$ and $c_2$ are orthologs.

\begin{figure}
  \begin{tabular}{lcr}
    \begin{minipage}{0.6\textwidth}
      \begin{center}
        \includegraphics[width=\textwidth]{./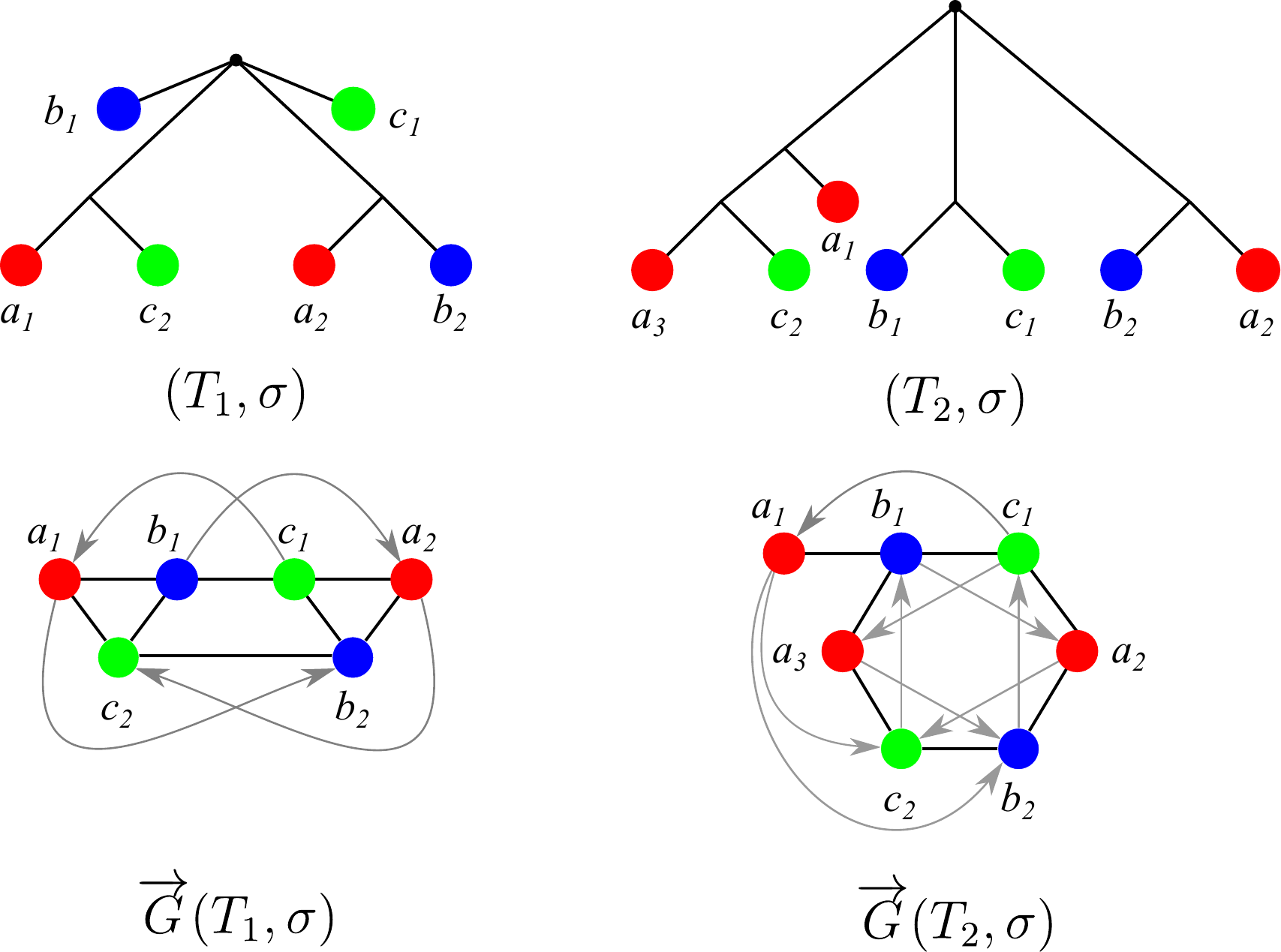}
      \end{center}
    \end{minipage}
    &    &
    \begin{minipage}{0.3\textwidth}
      \caption{Not all false positive orthology assignments can be
        identified using good quartets. Conversely, bad and ugly quartets
        do not unambiguously identifiy false positive edges.  See the text
        below Cor.\ \ref{cor:P4-Editing} for explanation.}
      \label{fig:P4-classes_2}
    \end{minipage}
  \end{tabular}
\end{figure}

Cor.\ 9 of \citet{Geiss:19x}, finally, implies that every (B)-RBMG and
every (C)-RBMG contains at least one good quartet. In particular,
therefore, there is at least one false positive orthology assignment that
can be identified with the help of good quartets. We shall see in
Section~\ref{ssect:simresDL}, using simulated data, that in practice the
overwhelming majority of false positive orthology assignments is already
identified by good quartets.

From a theoretical point of view it is interesting nevertheless that it is
possible to identify even more false positive orthology assignments
starting from Lemma~\ref{lem:disjointSpecies}. It implies that
  $t(\lca(x,y))=\DUPL$ whenever $x$ and $y$ are located in two distinct
  leaf sets defined for the the same connected component of an induced
  3-RBMG of Type (B) or (C).  Details can be found in \cite[Lemma
  25]{Geiss:19x} and the Supplemental Material. At least in our simulation
  data scenarios of this type that are not covered already by a good
  quartet seem to be exceedingly rare, and hence of little practical
  relevance.

\section{Simulations}

Although the edges in the RMBG cannot identify orthologous pairs with
certainty (as a consequence to Lemma~\ref{lem:norestriction}), there is a
close resemblance in practice, i.e., for empirically determined
scenarios. In order to explore this connection in more detail, we consider
simulated evolutionary scenarios $(T,S,\mu)$. These uniquely determine both
the (reciprocal) best match graph $\G(T,\sigma)$ and $G(T,\sigma)$, resp.,
and the orthology graph $\Theta$, thus allowing a direct comparison of
these graphs. Since we only analyze scenarios $(T,S,\mu)$, we did not use
simulations tools such as \texttt{ALF} \cite{ALF:11} that are designed to
simulate sequence data.

\subsection{Simulation Methods}

In order to simulate evolutionary scenarios $(T,S,\mu)$ we employ a
stepwise procedure:
\begin{itemize}
\item[(1)] \textbf{Construction of the species tree $S$.} We regard $S$ as
  an ultrametric tree, i.e., its branch lengths are interpreted as
  real-time.  Given a user-defined number of species $N$ we generate $S$
  under the \emph{innovations model} as described by \citet{Keller:2012}.
  The binary trees generated by this model have similar depth and
  imbalances as those of real phylogenetic trees from databases. 
\item[(2)] \textbf{Construction of the true gene tree $\tilde T$.}
  Traversing the species tree $S$ top-down, one gene tree $\tilde T$ is
  generated with user-defined rates $r_D$ for duplications, $r_L$ for
  losses, and $r_H$ for horizontal transfer events.  The number of events
  along each edge of the species tree, of each type of event, is drawn from
  a Poisson distribution with parameter $\lambda = \ell r_e$, where $\ell$
  is the length of the edge $e$ and $r_e$ is the rate of the event type.
  Duplication and horizontal transfer events duplicate an active lineage
  and occur only inside edges of $S$. For duplications, both offspring
  lineages remain inside the same edge of the species tree as the parental
  gene. In contrast, one of the two offsprings of an HGT event is
  transferred to another, randomly selected, branch of the species tree at
  the same time.  At speciation nodes all branches of the gene tree are
  copied into each offspring. Loss events terminate branches of $\tilde
  T$. Loss events may occur only within edges of the species tree that
  harbor more than one branch of the gene tree. Thus every leaf of $S$ is
  reached by at least one branch of the gene tree $\tilde T$.  All vertices
  $v$ of $\tilde T$ are labeled with their event type $t(v)$, in
  particular, there are different leaf labels for extant genes and lost
  genes.  The simulation explicitly records the reconciliation map, i.e.,
  the assignment of each vertex of $\tilde T$ to a vertex or edge of $S$.
\item[(3)] \textbf{Construction of the observable gene tree $T$ from
    $\tilde T$}.  The leaves of $\tilde T$ are either observable extant
  genes or unobservable losses.  As described by
  \citet{HernandezRosales:12a}, we prune $\tilde T$ in bottom-up order by
  removing all loss events and omitting all inner vertices with only a
  single remaining child.
\end{itemize}

Using steps (1) and (2), we simulated 10,000 scenarios for species trees
with 3 to 100 species (=leaves) and additional 4000 scenarios for species
trees with 3 to 50 leaves, drawn from a uniform distribution.  For each of
these species trees, exactly one gene tree was simulated as described
above.  The rate parameters were varied between $0.65$ and $0.99$ in steps
of $0.01$ for duplication and loss events. For HGTs, a rate in the range between $0.1$ and $0.24$, again in steps of $0.01$,
was used. A detailed list of all simulated scenarios can be found in the
Supplemental Material. For each of the 14,000 true gene trees $\tilde T$
the total number $S_n$ of speciation events, $L_n$ of losses, $D_n$ of
duplications, and $H_n$ of HGTs was determined.  Summary statistics of the
simulated scenarios are compiled in the Supplemental Material.

From each true gene tree $\tilde T$ we extracted the observable gene tree
$T$ as described in Step (3). For all retained vertices the
  reconciliation map $\mu$ and thus the event labeling $t=t_{\mu}$ remains
unchanged.  Since $\lca_T(x,y)=\lca_{\tilde T}(x,y)$ for all extant genes
$x,y\in L(T)$, it suffices to consider $T$. The leaf coloring map
$\sigma:L(T)\to L(S)$ is obtained from its definition, i.e., setting
$\sigma(v)=\mu(v)$ for all $v\in L(T)$. We can now extract the orthology
relation and reciprocal best match relation from each scenario.

The orthology relation $\Theta(T,t)$ is easily constructed from the event
labeled gene tree $(T,t)$, since $xy \in \Theta(T,t)$ if and only if
$t(\lca_T(x,y)) = \SPEC$. An efficient way to compute $\Theta(T,t)$
  and the RBMG $(G,\sigma)$ that avoids the explicit evaluation of
  $\lca_T()$ is described in the Supplemental Material. For each
reconciliation scenario $(T,S,\mu)$, we also identify all good quartets in
the BMG $(\G,\sigma)$ and then delete the middle edge of the corresponding
$P_4$ from the RBMG $(G,\sigma)$. The resulting graph will be referred to
as $(G_4,\sigma_4)$.

\subsection{Simulation Results for Duplication/Loss Scenarios}
\label{ssect:simresDL}

In order to assess the practical relevance of co-RBMGs we measured the
abundance of non-cograph components in the simulated RBMGs. More precisely,
we determined for each simulated RBMG the connected components of its
restrictions to any three distinct colors and determined whether these
components are cographs, graphs of Type (B), or graphs of Type (C). In
order to identify these graph types, we used algorithms of \cite{Hoang:13}
to first identify an induced $P_4$ belonging to a good quartet. If one
exists, we check for the existence of an induced $P_5$ and then test
whether its endpoints are connected, thus forming a hexagon characteristic
for the a Type (C) graph. Otherwise, the presence of the $P_4$ implies Type
(B), while the absence of induced $P_4$s guarantees that the component is a
cograph.

We did not encounter a single Type (C) component in 14,000 simulated
scenarios. As we shall see this is a consequence of the fact that all
simulated trees are binary. To see this, we consider the structure of
connected 3-RBMG of Type (C) in some more detail, generalizing some
technical results by \citet{Geiss:19x}:
\begin{lemma}
  Let $(G,\sigma)$ be a connected 3-RBMG containing the induced $C_6$
  $\langle x_1 y_1 z_1 x_2 y_2 z_2\rangle$ with three distinct colors $r$,
  $s$, and $t$ such that $\sigma(x_1)=\sigma(x_2)=r$,
  $\sigma(y_1)=\sigma(y_2)=s$, and $\sigma(y_1)=\sigma(y_2)=t$. Then, every
  tree $(T,\sigma)$ that explains $(G,\sigma)$ must satisfy the following
  property: There exist distinct $v_1,v_2,v_3\in \child(v)$ where
  $v\coloneqq \lca_T(x_1,x_2,y_1,y_2,z_1,z_2)$ such that either
  $x_1,y_1\preceq_T v_1$, $x_2,z_1\preceq_T v_2$, $y_2,z_2\preceq_T v_3$ or
  $y_1,z_1\preceq_T v_1$, $x_2,y_2\preceq_T v_2$, $x_1,z_2\preceq_T v_3$.
  \label{lem:C6}
\end{lemma}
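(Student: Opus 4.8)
The plan is to analyze the structure of a tree $(T,\sigma)$ explaining the 3-RBMG $(G,\sigma)$ by examining how the six cycle-vertices distribute among the children of their least common ancestor $v$. First I would fix $v\coloneqq \lca_T(x_1,x_2,y_1,y_2,z_1,z_2)$ and let $v^{(1)},\dots,v^{(k)}$ denote its children in $T$. Each of the six vertices lies below exactly one child, so this induces a partition of the cycle-vertices into \emph{blocks} according to which child-subtree contains them. The core observation I would establish is that two cycle-vertices can occupy the same block only if they are non-adjacent in the $C_6$: indeed, if $a$ and $b$ are adjacent in the cycle (hence $ab\in E(G)$, a reciprocal best match) but both lie below a common child $v^{(i)}\prec_T v$, then $\lca_T(a,b)\preceq_T v^{(i)}\prec_T v$, which one can combine with the cycle's connectivity and the fact that $v$ must be the $\lca$ of \emph{all} six vertices to derive a contradiction with the best-match condition relative to vertices in other blocks. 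Symmetrically, any two vertices in distinct blocks have $\lca$ exactly equal to $v$.

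\textbf{The combinatorial heart.}

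Once the "adjacent vertices lie in different blocks, non-adjacent vertices \emph{may} share a block" principle is in place, the argument becomes a question about independent sets in $C_6$. The induced $6$-cycle $\langle x_1 y_1 z_1 x_2 y_2 z_2\rangle$ has exactly two maximum independent sets, namely $\{x_1,z_1,y_2\}$ and $\{y_1,x_2,z_2\}$ (the two "alternating" triples), and these are precisely the two perfect matchings of the cycle into its three pairs of opposite-type vertices. I would argue that the blocks containing cycle-vertices are exactly the parts of a partition of the six vertices into three pairs, each pair being a pair of \emph{antipodal} (distance-$3$) vertices in $C_6$, since any block must be an independent set of size at most two and covering all six vertices with independent blocks forces three blocks of size two. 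The two ways of pairing antipodal vertices in $C_6$ give exactly the two alternatives in the statement: grouping $\{x_1,y_1\},\{x_2,z_1\},\{y_2,z_2\}$ versus $\{y_1,z_1\},\{x_2,y_2\},\{x_1,z_2\}$.

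\textbf{Ruling out the wrong groupings and finishing.}

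The subtle point I must handle carefully is \emph{why} the blocks must be antipodal pairs rather than, say, pairs of vertices at cycle-distance $2$ (the other kind of independent pair in $C_6$). Here I would invoke the color constraint: $\sigma(x_1)=\sigma(x_2)=r$, $\sigma(y_1)=\sigma(y_2)=s$, $\sigma(z_1)=\sigma(z_2)=t$, and the best-match definition forces a same-colored vertex \emph{not} to be a best match unless it is closest. If a distance-$2$ pair such as $\{x_1,z_1\}$ shared a block while the other same-colored partners were distributed so as to make one of the required reciprocal best-match edges fail, I would reach a contradiction with a specific cycle-edge — this is the step I expect to be the main obstacle, since it requires tracking, for each proposed grouping, that every one of the six cycle-edges is realized as a reciprocal best match and every cycle-non-edge is correctly \emph{absent}. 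I anticipate leaning on the technical quartet/best-match lemmas of \citet{Geiss:19x} (in the spirit of the Lemma~36 citation used in Lemma~\ref{lem:dupl-extr}) to certify which arrangements are consistent with $(G,\sigma)$ being exactly the given $C_6$. Having eliminated all groupings except the two antipodal pairings, I would record that the three occupied children $v_1,v_2,v_3$ are distinct (as they host disjoint blocks) and that the membership relations $x_1,y_1\preceq_T v_1$, etc., are exactly the two listed cases, completing the proof.
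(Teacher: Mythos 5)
The central combinatorial claim in your proposal is backwards, and this sinks the argument. You assert that two cycle-vertices can share a child-subtree of $v$ only if they are \emph{non-adjacent} in the $C_6$, and you then try to partition the six vertices into independent (indeed ``antipodal'') pairs. But look at the conclusion of the lemma itself: in the first alternative the pairs sharing a child are $\{x_1,y_1\}$, $\{x_2,z_1\}$, $\{y_2,z_2\}$, which are three \emph{edges} of the hexagon forming a perfect matching (and the second alternative is the complementary matching). Adjacency in $G$ poses no obstruction to sharing a block: if $a,b$ lie below the same child then $\lca_T(a,b)\prec_T v$, which only makes it \emph{easier} for them to be reciprocal best matches, so the contradiction you sketch in your ``core observation'' cannot be derived. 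The actual constraint runs the other way: if $ab\in E(G)$ with $a$ and $b$ in \emph{different} blocks, then no vertex of color $\sigma(b)$ may share $a$'s block. Your proposal also contradicts itself on this point --- you claim the blocks are distance-$3$ (antipodal) pairs, but the antipodal pairs of this hexagon are the monochromatic pairs $\{x_1,x_2\},\{y_1,y_2\},\{z_1,z_2\}$, and there is only one way to pair each vertex with its antipode; the pairs you then list as ``the two alternatives'' are adjacent pairs, not antipodal ones.

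The paper's proof takes a different and essentially unavoidable route through the \emph{directed} BMG: for $|V(G)|=6$ it invokes Cor.~9 of \citet{Geiss:19x} to obtain a good quartet, w.l.o.g.\ $\langle x_1y_1z_1x_2\rangle$, whose asymmetric arcs $(x_1,z_1),(x_2,y_1)\in E(\G)$ and $(z_1,x_1),(y_1,x_2)\notin E(\G)$ force $x_1,y_1\preceq_T v_1$ and $x_2,z_1\preceq_T v_2$ for distinct children $v_1\neq v_2$; a short case analysis on $y_2$ and $z_2$ then places them below a third child $v_3$. (For $|V(G)|>6$ it appeals directly to Lemma~39(iii) of \citet{Geiss:19x}.) The two alternatives in the statement correspond to which of the two candidate quartets of the hexagon is the good one. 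To salvage your approach you would need to replace the independent-set analysis by an argument that the cycle edges realized strictly below $v$ form exactly one of the two perfect matchings of the $C_6$ --- and the undirected RBMG alone does not determine which one, which is precisely why the directional information of the BMG must enter the proof.
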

\begin{proof}
  If $|V(G)|>6$, then, due to the connectedness of $\G$, at least one
  of the six vertices of the induced $C_6$ is adjacent to more than one
  vertex of one of the colors $r,s,t$, hence the first statement
  immediately follows from Lemma 39(iii) in \citet{Geiss:19x}. Now consider
  the special case $|V(G)|=6$.  By Cor.\ 9 of \citet{Geiss:19x},
  $\G(T,\sigma)$ contains a good quartet. W.l.o.g.\ let
  $\langle x_1 y_1 z_1 x_2\rangle$ be a good quartet, thus
  $(x_1,z_1),(x_2,y_1)\in E(\G)$ and $(z_1,x_1),(y_1,x_2)\notin
  E(\G)$. This, in particular, implies
  $\lca_T(x_2,z_1)\prec_T\lca_T(x_1,z_1)$, thus there are distinct children
  $v_1, v_2\in \child(v)$ such that $x_1\preceq_T v_1$ and
  $x_2, z_1\preceq_T v_2$. Moreover, as $x_1y_1\in E(G)$ and
  $(y_1,x_2)\notin E(\G)$, we have
  $\lca_T(x_1,y_1)\prec_T \lca_T(x_2,y_1)$, hence $y_1\preceq_T v_1$. Now
  consider $y_2$. Since $x_1y_2\notin E(G)$ and $x_2y_2\in E(G)$, it must
  hold $\lca_T(x_2,y_2)\preceq_T \lca_T(x_1,y_2)$, hence
  $y_2\notin L(T(v_1))$. Assume, for contradiction, that
  $y_2\preceq_T v_2$. Then, as $y_2z_2\in E(G)$ and
  $\lca_T(y_2,z_1)\preceq_T v_2$, we clearly have $z_2\preceq_T
  v_2$. However, this implies $\lca_T(x_2,z_2)\prec_T\lca_T(x_1,z_2)$,
  contradicting $x_1z_2\in E(G)$. We therefore conclude that there must
  exist a vertex $v_3\in\child(v)\setminus \{v_1,v_2\}$ such that
  $y_2\preceq_T v_3$. One easily checks that this implies
  $z_2\preceq_T v_3$, which completes the proof.  \qed
\end{proof}

\begin{theorem}
  If $(T,\sigma)$ is a binary leaf-labeled tree, then $G(T,\sigma)$ does
  not contain a connected component of Type \AX{(C)}.
\end{theorem}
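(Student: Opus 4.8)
The plan is to argue by contradiction, invoking Lemma~\ref{lem:C6} to force a vertex of out-degree at least three in an explaining tree, which a binary tree cannot provide. First I would suppose that $G(T,\sigma)$ contains a connected component of Type (C). By the definition of Type (C), such a component contains an induced hexagon $\langle x_1 y_1 z_1 x_2 y_2 z_2\rangle$ in which any three consecutive vertices carry pairwise distinct colors; in particular these six vertices use exactly three colors $r,s,t$, with $\sigma(x_1)=\sigma(x_2)=r$, $\sigma(y_1)=\sigma(y_2)=s$, and $\sigma(z_1)=\sigma(z_2)=t$. Since an induced subgraph of an induced subgraph is induced, this hexagon is an induced $C_6$ in $G(T,\sigma)$.

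Next I would pass to the three relevant colors. Let $(C,\sigma)$ be the connected component of the subgraph of $G(T,\sigma)$ induced by the leaves of colors $r,s,t$ that contains the hexagon. Then $(C,\sigma)$ is a connected $3$-RBMG in which $\langle x_1 y_1 z_1 x_2 y_2 z_2\rangle$ remains an induced $C_6$. The restriction $T'$ of $T$ to the leaf set $V(C)$, obtained by deleting all other leaves and suppressing the resulting degree-two vertices, explains $(C,\sigma)$; this is the standard observation that restricting an explaining tree to a subset of leaves yields an explaining tree for the corresponding induced RBMG \cite{Geiss:19a,Geiss:19x}. Reciprocal-best-match adjacency among leaves of colors $r,s,t$ depends only on the $\lca$-structure of $T$ on those leaves, so the adjacencies inside the hexagon are preserved.

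I would then apply Lemma~\ref{lem:C6} to $(C,\sigma)$ and its explaining tree $T'$. It yields a vertex $v\coloneqq\lca_{T'}(x_1,x_2,y_1,y_2,z_1,z_2)$ together with \emph{three} pairwise distinct children $v_1,v_2,v_3\in\child(v)$, with the six hexagon vertices distributed among $T'(v_1),T'(v_2),T'(v_3)$ in one of the two patterns listed in the lemma. The closing step is a degree count: I would observe that $T'$ is again binary. Indeed, restricting a binary tree to a leaf subset and suppressing degree-two vertices leaves every inner vertex with exactly two children, because each such vertex retains at most its two original child-subtrees (those that still contain a retained leaf). Hence the inner vertex $v$ of $T'$ has only two children and cannot admit three distinct ones, contradicting Lemma~\ref{lem:C6}. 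Therefore $G(T,\sigma)$ has no Type (C) component.

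The main obstacle is the bookkeeping in the restriction step, where one must check simultaneously that (i) the hexagon stays induced, (ii) the relevant $3$-colored component stays connected so that Lemma~\ref{lem:C6} is applicable, and (iii) binarity of the explaining tree is preserved under restriction and suppression. Each of these is routine, and once they are secured the incompatibility between the required triple branching at $v$ and the binary structure of $T'$ closes the argument at once.
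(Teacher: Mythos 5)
Your proof is correct and follows essentially the same route as the paper: restrict to the three colors and to the connected component containing the hexagon, invoke Lemma~\ref{lem:C6} to force three distinct children of the last common ancestor, and contradict the binarity of the restricted tree. One small caution: your blanket justification that restriction to \emph{any} leaf subset preserves the explained RBMG is false in general; it holds here only because you restrict to a connected component of a color-induced subgraph, which is exactly the case covered by Obs.~6 and Thm.~2 of \cite{Geiss:19x}, the results the paper's own proof cites.
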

\begin{proof}
  By Obs.\ 6 of \cite{Geiss:19x}, the restriction $(T_{rst},\sigma_{rst})$
  of $(T,\sigma)$ explains the subgraph $(G_{rst},\sigma_{rst})$ of
  $G(T,\sigma)$ that is induced by vertices with color $r$, $s$, or
  $t$. Thm.\ 2 of \cite{Geiss:19x} shows, furthermore, that every connected
  component of $(G_{rst},\sigma_{rst})$ is explained by restriction
  $(T',\sigma')$ of $(T_{rst},\sigma_{rst})$ to the corresponding vertices.
  Now suppose $(T,\sigma)$ is a binary. Then both $(T_{rst},\sigma_{rst})$
  and $(T',\sigma')$ are also binary. By contraposition of Lemma
  \ref{lem:C6}, no $C_6$ as specified in Lemma \ref{lem:C6} can be
  explained by $(T',\sigma')$, and thus $G(T,\sigma)$ cannot contain a
  connected component of Type \AX{(C)}.
\end{proof}

\begin{figure}
  \begin{center}
    \setlength\tabcolsep{2pt} 
    \begin{tabular}{ccc}
      \begin{minipage}{0.315\textwidth}
        \textbf{(a)}\vspace{0.25cm}
        \includegraphics[width=\textwidth]{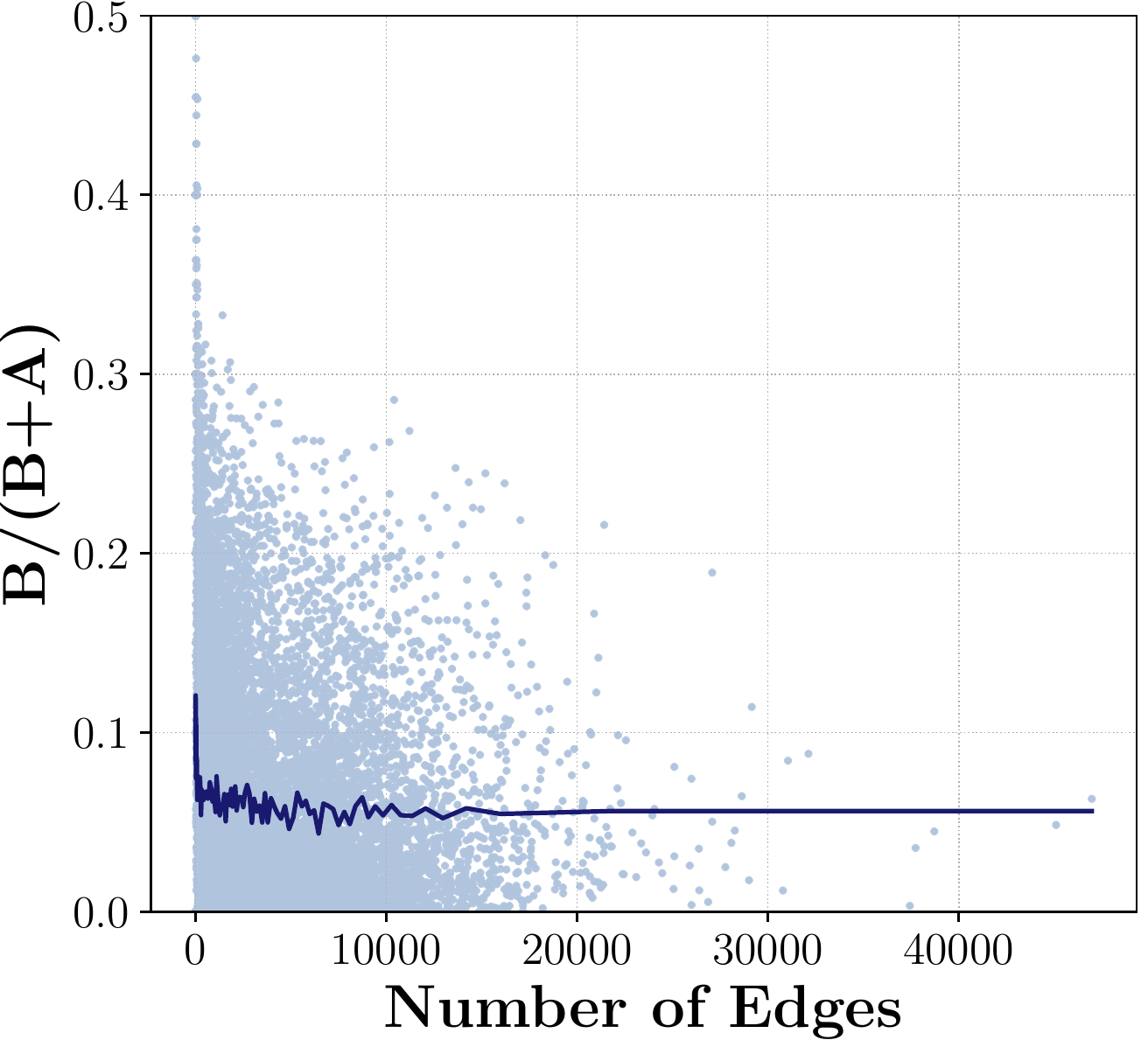} 
      \end{minipage}
      &
      \begin{minipage}{0.29\textwidth}
        \textbf{(b)}\vspace{0.25cm}
        \includegraphics[width=\textwidth]{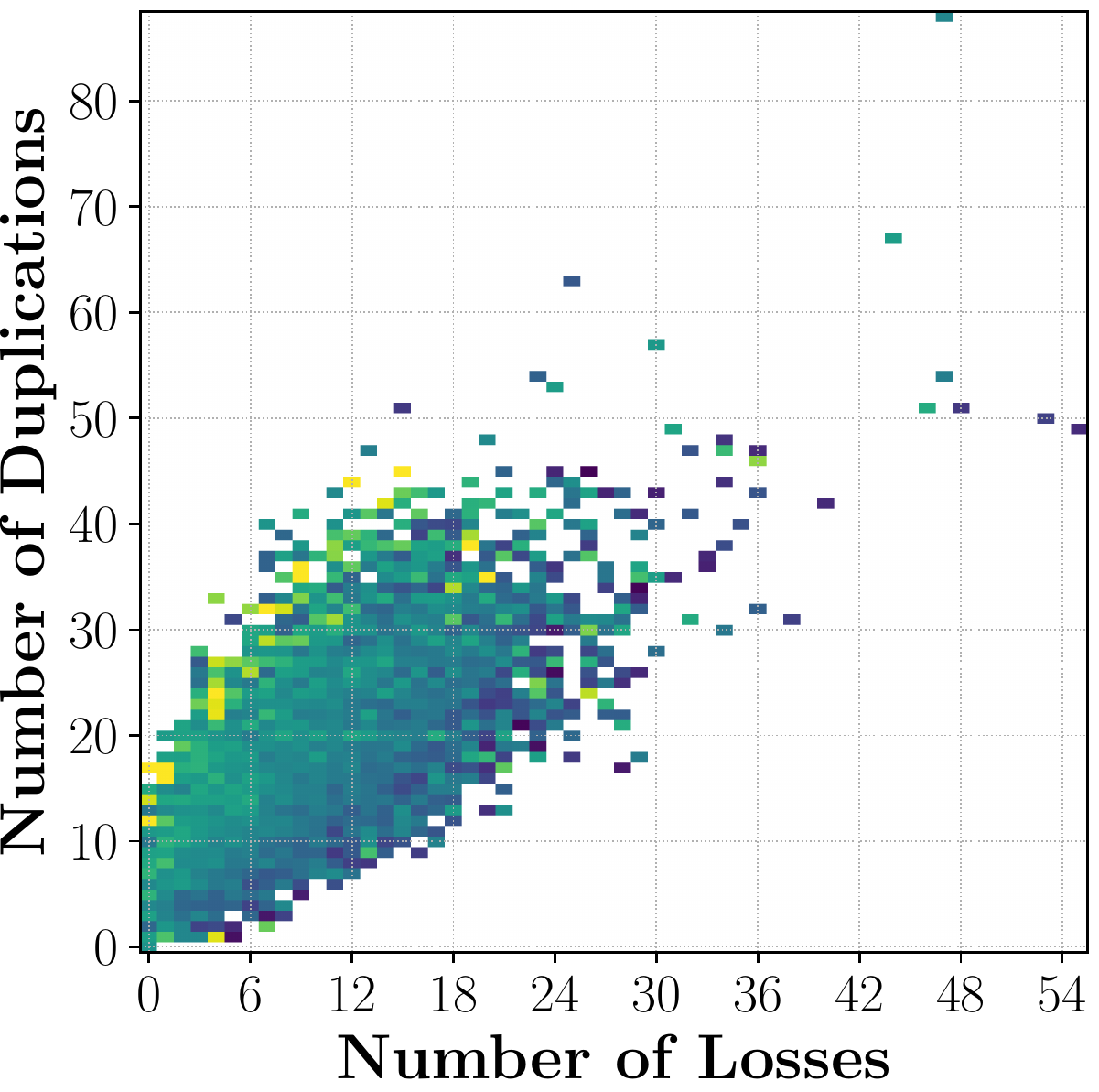}         
      \end{minipage}
      &
      \begin{minipage}{0.365\textwidth}
        \vspace{0.5cm}
        \includegraphics[width=\textwidth]{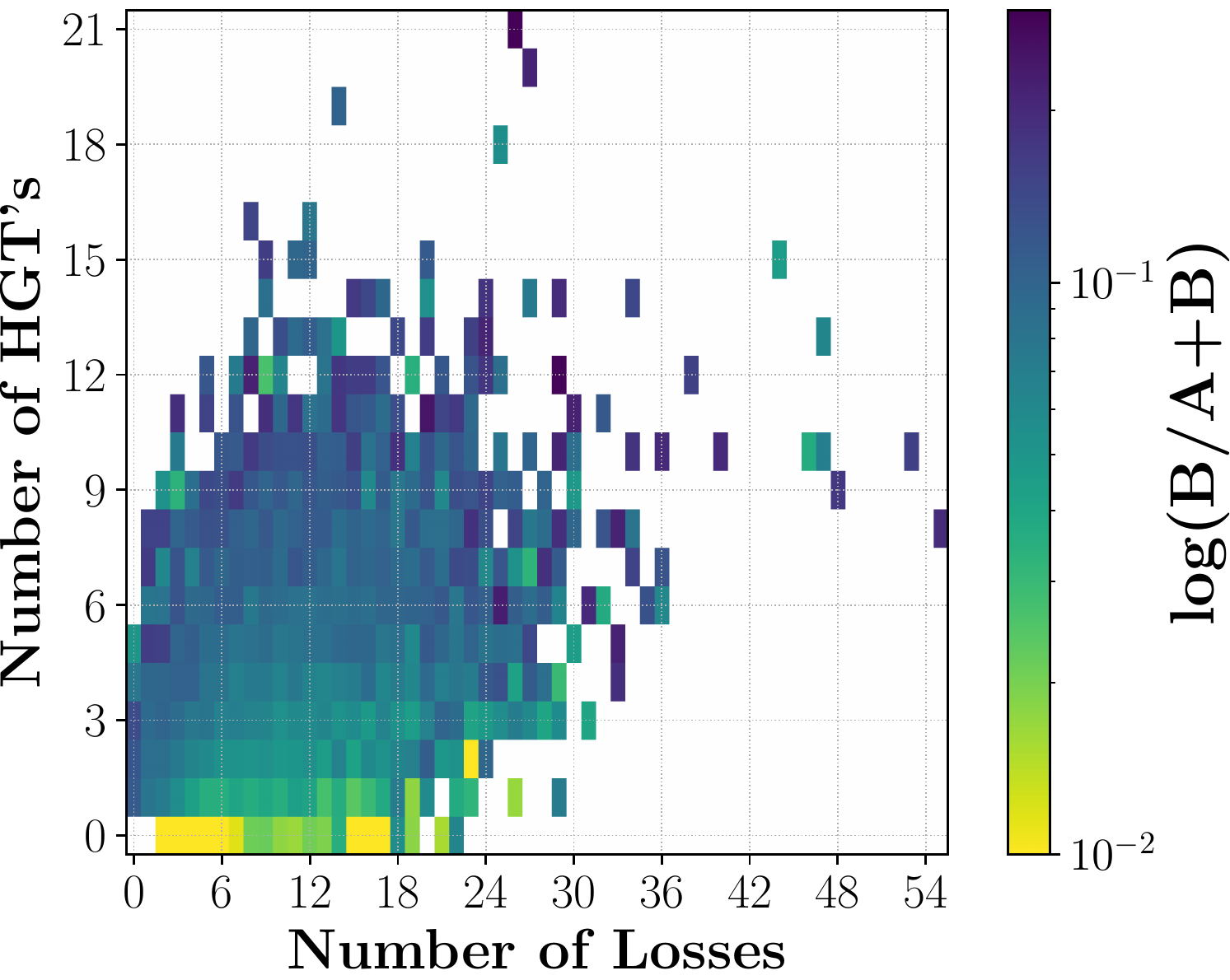}         
      \end{minipage}  
    \end{tabular}
  \end{center}
  \caption{Relative abundance $\eta=\frac{B}{B+A}$ of (B)-RBMGs in the
    simulation data. Panel (a) shows the dependence on the number of edges
    in the BMG in every simulated scenario, and its average depicted by the
    line in darker blue. Scatter plots (b) show the dependence of $\eta$ on
    the number of duplications and losses, and HGTs and losses,
    respectively.}
  \label{fig:B-RBMG}
\end{figure}

Although events that generate more than two offspring lineages are
logically possible in real data, most multifurcations in phylogenetic trees
are considered to be ``soft polytomies'', arising from data that are
  insufficient to produce a fully resolved, binary trees
\cite{Purvis:93,Kuhn:11,Sayyari:18}. Type (C) 3-RBMGs thus should be very
unlikely under biologically plausible assumptions on the model of
evolution. Here we only consider the abundance of Type (B) components
relative to all Type \AX{(A)} and \AX{(B)} components. We denote their
ratio by $\eta$. The results are summarized in Fig.~\ref{fig:B-RBMG}. We
find that $\eta$ is usually below 20\% and increases with the number of
loss and HGT events. More precisely, 83.47\% of the 14,000 scenarios have
at least one Type (B) component and 16.53\% do not have Type (B) components
at all. Among all 3-colored connected components taken from the
restrictions to any three colors, 94.41\% are of Type (A) and 5.59\% are of
Type (B).

A graph $G$ is called $P_4$-sparse if every induced subgraph on five
vertices contains at most one induced $P_4$ \cite{Jamison:92}. The interest
in $P_4$-sparse graphs derives from the fact that the cograph editing
problem is solvable in linear time from $P_4$-sparse graphs
\cite{LIU201245}. It is of immediate practical interest, therefore, to
determine the abundance of $P_4$-sparse RBMGs that are not cographs. Among
the 14,000 simulated scenarios, we found that about 20.9\% of the 3-colored
Type (B) components are $P_4$-sparse, while the majority contains
``overlapping'' $P_4$s. We then investigated the corresponding
$\sthin$-thin graphs. An undirected colored graph $(G,\sigma)$ is called
$\sthin$-thin if no distinct vertices are in relation $\sthin$. Two
vertices $a$ and $b$ are in relation $\sthin$ if $N(a)=N(b)$ and
$\sigma(a)=\sigma(b)$. Somewhat surprisingly, this yields a reversed
situation, where more than two thirds of the $\sthin$-thin 3-colored Type
(B) components are now $P_4$-sparse, while only a minority of 31.32\% is
not $P_4$-sparse.  An example of an undirected colored graph $(G,\sigma)$
and its corresponding $\sthin$-thin version $(G/\sthin, \sigmasthin)$,
which we found during our simluations, is shown in Panel (B) of Fig.\
\ref{fig:P4-sparse}.

\begin{figure}
  \begin{center}
    \begin{tabular}{ccc}
      \begin{minipage}{0.60\textwidth}
        \includegraphics[width=\textwidth]{./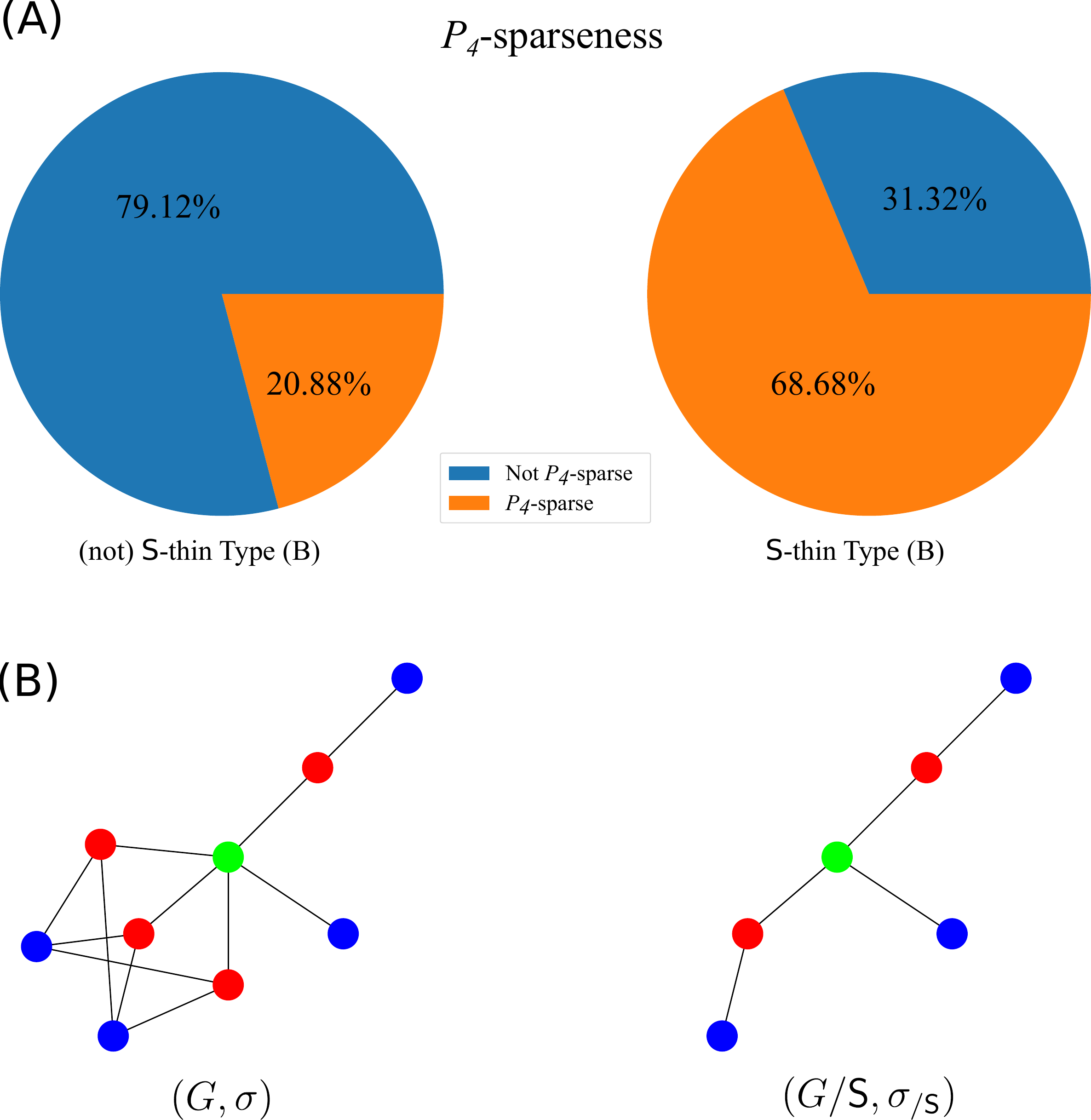}
      \end{minipage}
      &&
      \begin{minipage}{0.30\textwidth}
        \caption{\emph{Top:} Among our 14,000 simulated scenarios we found
          that a majority of 79.12\% of the (not necessarily $\sthin$-thin)
          3-colored Type (B) components are not $P_4$-sparse. For the
          corresponding $\sthin$-version of those 3-colored components only
          31.32\% are not $P_4$-sparse while 68.68\% are
          $P_4$-sparse. \emph{Below:} One of the simulated 3-colored Type
          (B) components $(G,\sigma)$, which is not $\sthin$-thin, and its
          corresponding $\sthin$-thin version $(G/\sthin, \sigmasthin)$.}
        \label{fig:P4-sparse}
      \end{minipage}
    \end{tabular}
  \end{center}
\end{figure}  

Next we investigated the relationship of the RBMG $G(T,\sigma)$ and the
orthology graph $\Theta$ (see Fig.\ \ref{fig:DL}). We empirically confirmed that
$E(\Theta)\subseteq E(G(T,\sigma))$ in the absence of HGT (not shown). Also
following our expectations, the fraction
$|E(G(T,\sigma))\setminus E(\Theta)|/|E(G(T,\sigma))|$ of false-positive
orthology predictions in an RBMG is small as long as duplications and
losses remain moderate (l.h.s.\ panel in Fig.~\ref{fig:DL}). Most of the
false positive orthology calls are associated with large numbers of losses
for a given number of duplication.

\begin{figure}[htbp!]
  \begin{center}
    \begin{tabular}{ccccc}
      \begin{minipage}{0.4\textwidth}
      	\centering $\frac{|E(G(T,\sigma))\setminus E(\Theta)|}{|E(G(T,\sigma))|}$
        \includegraphics[width=\textwidth]{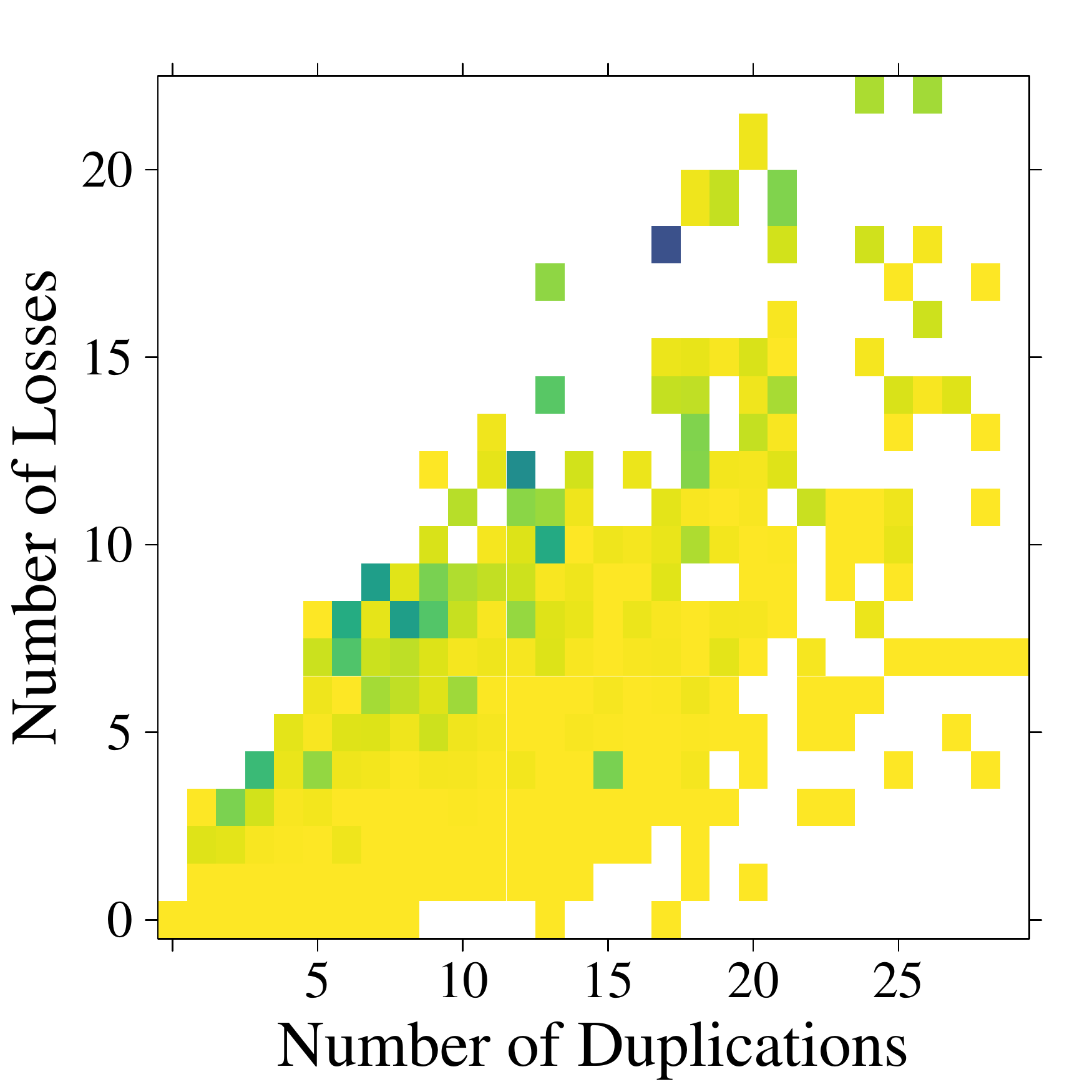}
      \end{minipage}
      & &
      \begin{minipage}{0.4\textwidth}
      	\centering$ \frac{|E(G_4(T,\sigma))\setminus E(\Theta)|}{|E(G_4(T,\sigma))|}$
        \includegraphics[width=\textwidth]{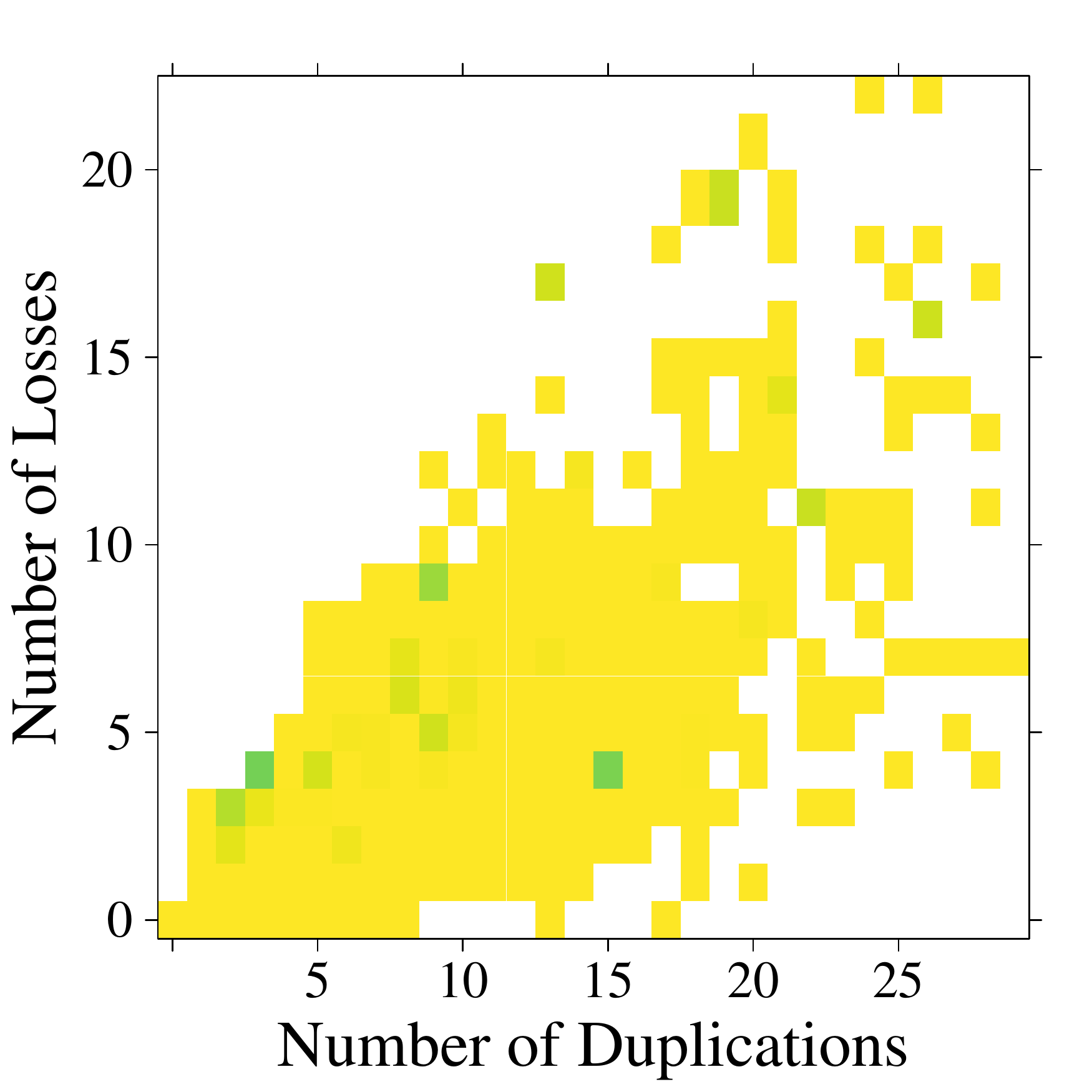}
      \end{minipage}
	  \begin{minipage}{0.1\textwidth} \vspace{0.19cm}
	  	\includegraphics[width=0.69\textwidth]{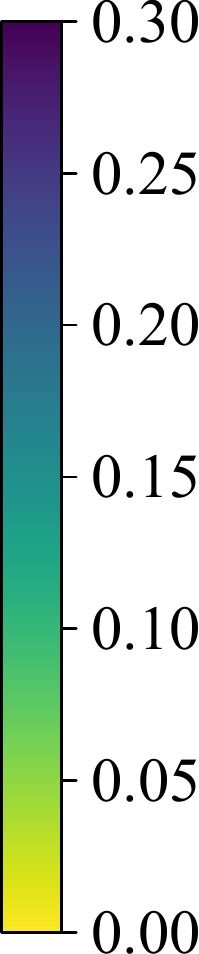}
	  \end{minipage}
    \end{tabular}
  \end{center}
  \caption{Fraction of non-orthology edges in the reciprocal best match
    graph (l.h.s.). The $x$-axis, resp., $y$-axis indicate the total number
    of duplications, resp., losses in the simulated scenarios.  Most of the
    false positive orthology assignments in the l.h.s.\ panel are removed
    by deleting the middle edge of good quartets (r.h.s.\ panel).  White
    background indicates \emph{no data}. }
  \label{fig:DL}
\end{figure}

We find that good quartets eliminate nearly all false positive edges from
the RBMG and leave a nearly perfect orthology graph (r.h.s.\ panel in
Fig.~\ref{fig:DL}).  As we have seen so far, reciprocal best matches indeed
form an excellent approximation of orthology in duplication-loss
scenarios. In particular, the good quartets identify nearly all false
positive edges, making it easy to remove the few remaining $P_4$s using a
generic cograph editing algorithm \cite{LIU201245}.

\section{Outlook: Evolutionary Scenarios with Horizontal Gene Transfer}
\label{sect:HGT}

The benign results above beg the question how robust they are under
HGT. Gene family histories with HGT have been a topic of intense study in
recent years \cite{Doyon2010,THL:11,BAK:12,Nojgaard:18a}. Following the
so-called DTL-scenarios as proposed e.g.\ by \citet{THL:11,BAK:12} we
relax the notion of reconciliation maps, since ancestry is no longer
preserved. We replace Axiom \AX{(R2)} by
\begin{description}
\item[\AX{(R2w)}] \emph{Weak Ancestor Preservation.}\\
  If $x\prec_T y$, then either $\mu(x)\preceq_S\mu(y)$ or $\mu(x)$ and
  $\mu(y)$ are incomparable w.r.t.\ $\prec_S$.
\end{description}
and add the following constraints
\begin{description}
\item[\AX{(R3.iii)}] \emph{Addition to the Speciation Constraint.}\\
  If $\mu(x)\in W^0$, then $\mu(v)\preceq_T\mu(x)$ for all $v\in\child(x)$.
\item[\AX{(R4)}] \emph{HGT Constraint.}\\
  If $x$ has a child $y$ such that $\mu(x)$ and $\mu(y)$ are incomparable, 
  then $x$ also has a child $y'$ with $\mu(y')\preceq_S \mu(x)$.
\end{description}
Property \AX{(R2w)} equivalently states that if $x\prec_T y$, then we must
not have $\mu(y)\prec_S\mu(x)$, which would invert the temporal
order. Property \AX{(R3.iii)} (which follows from \AX{(R2)} but not from
\AX{(R2w)}) ensures that the children of speciation events are still mapped
to positions that are comparable to the image of the speciation
node. Condition \AX{(R4)}, finally, requires that every horizontal transfer
event also has a vertically inherited offspring. Note that condition
\AX{(R4)} is void if \AX{(R2)} holds. In summary the axioms \AX{(R0)},
\AX{(R1)}, \AX{(R2w)}, \AX{(R3.i)}, \AX{(R3.ii)}, \AX{(R3.iii)}, and
\AX{(R4)} are a proper generalization of Def.\ \ref{def:reconc-map}. We
note that these axioms are not sufficient to ensure time consistency,
however. We refer to \citet{Nojgaard:18a} for details.  Our choice of
axioms also rules out some scenarios that may appear in reality (or
simulations), but which are not observable when only evolutionary
divergence is available as measurement. For example, Condition \AX{(R3.ii)}
excludes scenarios in which HGT events have no surviving vertically
inherited offspring.

We furthermore extend the event map $t$ for a gene tree $T$ to include HGT
as an additional event type denoted by the symbol $\HGT$. We define
$t:V(T)\to \{\ROOT,\LEAF,\SPEC,\DUPL,\HGT\}$ such that $t(u)=\HGT$ if and
only if $u$ has a child $v$ such that $\mu(u)$ and $\mu(v)$ are
incomparable. Since the offsprings of an HGT event are not equivalent, it
is useful to introduce an edge labeling $\lambda:E(T)\to\{0,1\}$ such that
$\lambda(uv)=1$ if $\mu(u)$ and $\mu(v)$ are incomparable w.r.t.\
$\prec_S$. This edge labeling is investigated in detail by
\citet{Geiss:18a} as the basis of Fitch's xenology relation. Alternatively,
the asymmetry can be handled by enforcing an ordering of the vertices, see
\cite{Hellmuth:17a}.

Evolutionary scenarios with horizontal transfer may lead to a situation
where two genes $x,y$ in the same species, i.e., with
$\sigma(x)=\sigma(y)$, derive from a speciation, i.e.,
$\lca_T(x,y)=\SPEC$. This is the case when the two lineages underwent an
HGT event that transferred a copy back into the lineage in which the other
gene has been vertically transmitted. We call such genes
\emph{xeno-orthologs} and exclude them from the orthology relation, see
Fig.\ \ref{fig:hgt-scenario}. This choice is motivated (1) by the fact
that, by definition, genes of the same species cannot be recognized as
reciprocal best matches, and (2) from a biological perspective they behave
rather like paralogs.  In scenarios with HGT we therefore modify the
definition of the orthology graph such that $E(G_1 \join G_2)$ is replaced
by
\begin{equation}
  E(G_1 \tilde\join G_2) := 
  E(G_1) \cup E(G_2) \cup \{ uv \mid u \in V(G_1),v \in V(G_2) \text{ and }
  \sigma(u) \neq \sigma(v) \}\,.
\end{equation}

\begin{figure}
  \begin{minipage}{0.5\textwidth}
  \begin{center}
    \includegraphics[width=0.6\textwidth]{./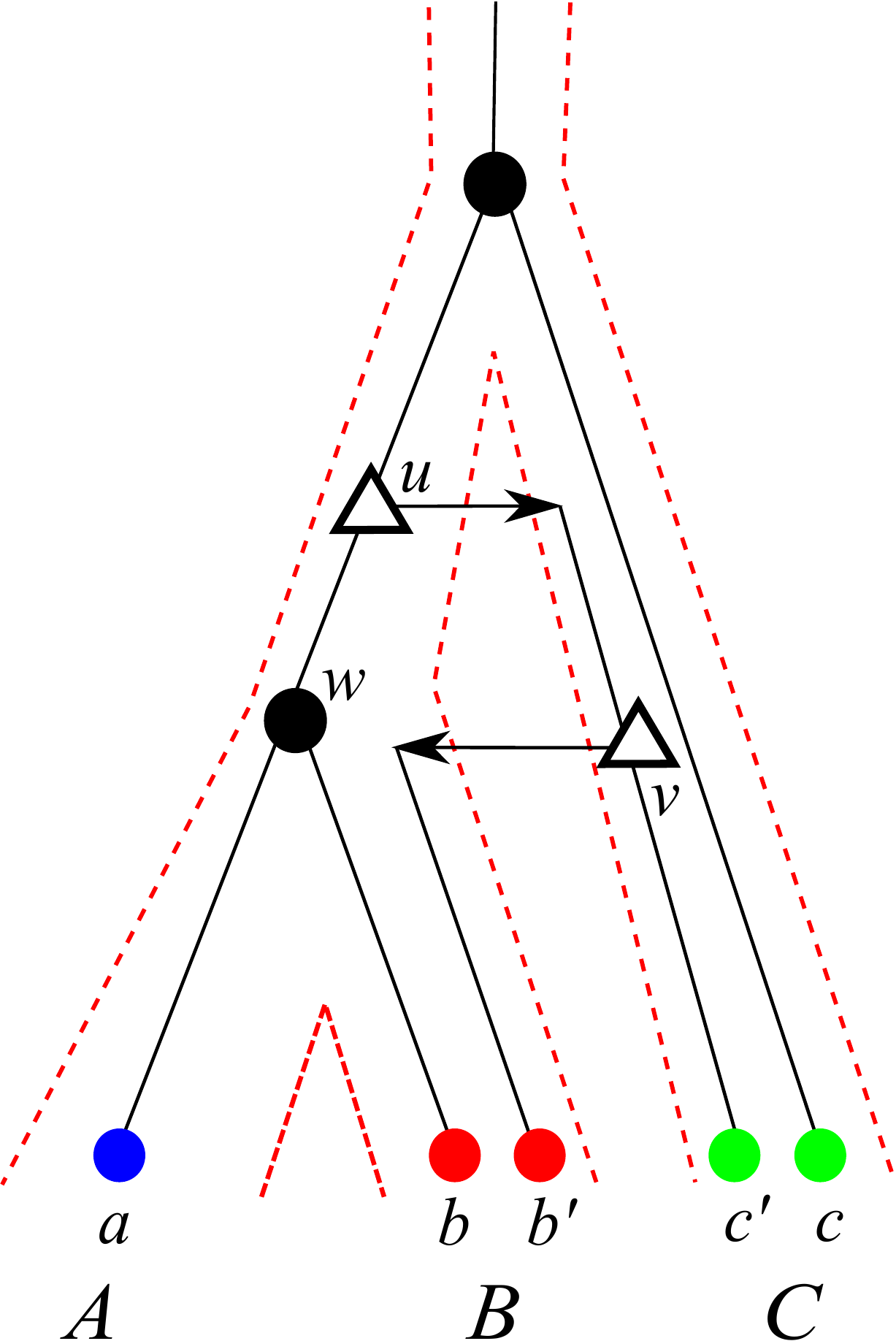}      
  \end{center}
\end{minipage}
\begin{minipage}{0.5\textwidth}
  \caption{A gene tree $(T,t,\lambda,\sigma)$ reconciled with a species
    tree $S$.  Here, we have two transfer edges $uv$ and $vb'$ with
    $t(u)=t(v)=\HGT$.  For the two children $w$ and $v$ of $u$ it holds
    $\sigma(L(T(w)))\cap \sigma(L(T(v)))\neq \emptyset$, a property that is
    shared with duplication vertices.  For the two children $b'$ and $c'$
    of $v$ it holds $\sigma(L(T(b')))\cap \sigma(L(T(c')))= \emptyset$, a
    property that is shared with speciation vertices.  In this example, $c$
    and $c'$ are xeno-orthologs and the pairs $(c,c'),(c',c)$ will be
    excluded from the resulting orthology relation. }
  \label{fig:hgt-scenario}
\end{minipage}
\end{figure}

The extremal map $\tT$ as in Def.\ \ref{def:event-rbmg2} cannot easily be
extended to include HGT, as the events $\SPEC$ and $\DUPL$ on some vertex
$u$ are solely defined on two exclusive cases: either $\sigma(L(T(u_1)))$
and $\sigma(L(T(u_2)))$ are disjoint or not for $u_1,u_2\in \child(u)$.  Both
cases, however, can also appear when we have HGT (see Fig.\
\ref{fig:hgt-scenario} for an example). That is, the fact that
$\sigma(L(T(u_1)))$ and $\sigma(L(T(u_2)))$ are disjoint or not, does not
help to unambiguously identify the event types in the presence of HGT.

Prop.\ \ref{prop:inftriple} can be generalized to the case that
$(T,t,\lambda,\sigma)$ contains HGT events.  The existence of
reconciliation maps from an event-labeled tree $(T,t,\lambda,\sigma)$ to an
\emph{unknown} species tree can be characterized in terms of species
triples $\sigma(a)\sigma(b)|\sigma(c)$ that can be derived from
$(T,t,\lambda,\sigma)$ as follows: Denote by
$\EHGT \coloneqq \{e\in E(T,t,\lambda,\sigma)\mid\lambda(e)=1\}$ the
set of all transfer edges in the labeled gene tree and let $(\Th,t,\sigma)$
be the forest obtained from $(T,t,\lambda,\sigma)$ by removing all
transfer edges.  By definition, $\mu(x)$ and $\mu(y)$ are incomparable for
every transfer edge $xy$ in $T$.  The set
$\mathcal{S}(T,t,\lambda,\sigma)$ is the set of triples
$\sigma(a)\sigma(b)|\sigma(c)$ where $\sigma(a)$, $\sigma(b)$, $\sigma(c)$
are pairwise distinct and either
\begin{enumerate}
\item $ab|c$ is a triple displayed by a connected component $T'$ of $\Th$
  such that the root of the triple is a speciation event, i.e.,
  $t(\lca_{T'}(a,b,c))=\SPEC$.
\item or $a,b\in L(\Th(x))$ and $c\in L(\Th(y))$ for some transfer edge
  $xy$ or $yx$ of $T$.
\end{enumerate}
\begin{proposition} \cite{Hellmuth2017}
  \label{prop:inftriple-new}
  Given an event-labeled, leaf-labeled tree $(T,t,\sigma)$.  Then, there is
  a reconciliation map $\mu:V(T)\to V(S)\cup E(S)$ to some species tree $S$
  if and only if $\mathcal{S}(T,t,\sigma)$ is compatible. In this case,
  $(T,t,\sigma)$ can be reconciled with every species tree $S$ that
  displays the triples in $\mathcal{S}(T,t,\sigma)$.
\end{proposition}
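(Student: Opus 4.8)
The plan is to prove the two implications separately, in both cases building on the observation that deleting all transfer edges splits $T$ into the vertical forest $\Th$, within whose connected components ordinary ancestry is respected. Indeed, if $uv\in E(T)$ is \emph{not} a transfer edge then $\mu(u)$ and $\mu(v)$ are comparable, so \AX{(R2w)} forces $\mu(v)\preceq_S\mu(u)$; consequently, whenever $a\preceq_{\Th}w$ lie in a common component we get $\sigma(a)\preceq_S\mu(w)$. I would isolate as a single auxiliary claim the elementary fact that if two leaves map below a location $P$ and a third maps below a location $Q$ with $P,Q$ incomparable in $S$, then the corresponding species triple $\sigma(a)\sigma(b)|\sigma(c)$ lies in $r(S)$ (taking $\lca_S(P,Q)$ as the separating vertex). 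Both triple types will be reduced to this claim.

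For necessity, assume a reconciliation map $\mu$ exists and take a triple of $\mathcal{S}(T,t,\sigma)$. If it is of type (1), its root $u=\lca_{T'}(a,b,c)$ is a speciation inside a component $T'$; since $t(u)=\SPEC$ none of the child-edges of $u$ is a transfer edge, so the relevant children $v',v''$ with $a,b\preceq_{T'}v'$ and $c\preceq_{T'}v''$ are genuine children of $u$ in $T$, and \AX{(R3.ii)} makes $\mu(v'),\mu(v'')$ incomparable; the auxiliary claim then yields $\sigma(a)\sigma(b)|\sigma(c)\in r(S)$. If the triple is of type (2), it arises from a transfer edge $xy$, and by the defining property $\lambda(xy)=1$ the images $\mu(x),\mu(y)$ are incomparable; since $a,b\in L(\Th(x))$ and $c\in L(\Th(y))$ map below $\mu(x)$ and $\mu(y)$ respectively, the same claim applies. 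Hence $S$ displays $\mathcal{S}(T,t,\sigma)$, which is therefore compatible.

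For sufficiency, fix any species tree $S$ displaying $\mathcal{S}(T,t,\sigma)$ and define $\mu$ component-wise by an lca-assignment, mirroring the construction behind Prop.\ \ref{prop:inftriple}: set $\mu(0_T)=0_S$ and $\mu(\ell)=\sigma(\ell)$ on leaves; for an internal $u$ put $\mu(u)=\lca_S(\sigma(L(\Th(u))))$ when $t(u)=\SPEC$, and map $u$ to the edge of $S$ immediately above $\lca_S(\sigma(L(\Th(u))))$ when $t(u)\in\{\DUPL,\HGT\}$. Axioms \AX{(R0)}, \AX{(R1)}, \AX{(R3.i)}, and \AX{(R3.iii)} then hold by construction, and \AX{(R2w)} follows because the lca-images decrease along vertical edges, while across a transfer edge the two images will be shown incomparable, both of which are permitted by \AX{(R2w)}. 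The substance is to verify \AX{(R3.ii)} at speciations and to check that every transfer edge $xy$ receives incomparable images $\mu(x),\mu(y)$ (which also delivers \AX{(R4)} once one vertically inherited child is exhibited): in each case, comparability of the two lca-images would force one of the species sets to nest inside the other, contradicting a displayed triple of type (1), respectively type (2). Thus $\mu$ is a reconciliation map, and since $S$ was arbitrary among trees displaying $\mathcal{S}(T,t,\sigma)$, the final clause follows as well.

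I expect the genuine difficulty to lie in this last verification step of the sufficiency direction. Translating ``$S$ displays $\sigma(a)\sigma(b)|\sigma(c)$'' into ``$\mu(v')$ and $\mu(v'')$ (resp.\ $\mu(x)$ and $\mu(y)$) are incomparable'' requires careful bookkeeping of exactly which leaves certify each separation, and it must be carried out uniformly when the images are $S$-edges rather than $S$-vertices and under the weakened axiom \AX{(R2w)}. The transfer-edge case is the truly new ingredient relative to the HGT-free Prop.\ \ref{prop:inftriple}; the speciation case can largely reuse the earlier argument, now localized to a single component of $\Th$.
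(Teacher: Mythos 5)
First, a point of reference: the paper does not prove Proposition~\ref{prop:inftriple-new} at all --- it is imported from \citet{Hellmuth2017} --- so there is no in-paper proof to compare against. Your architecture is nevertheless exactly that of the cited source: necessity by reducing both triple types to the auxiliary claim ``two leaves below $P$, one below $Q$, with $P,Q$ incomparable in $S$, forces the species triple to be displayed'', and sufficiency by the component-wise $\lca$-assignment on $\Th$. Your necessity direction is essentially complete; the one thing to make explicit is that \AX{(R3.iii)} is what guarantees that no child-edge of a speciation is a transfer edge, so the children $v',v''$ separating $\{a,b\}$ from $c$ in the component $T'$ really are children in $T$ and \AX{(R3.ii)} applies to them.

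The genuine gap sits in the sufficiency direction, at exactly the step you flag as the difficulty. To turn ``$\mu(v')$ and $\mu(v'')$ (resp.\ $\mu(x)$ and $\mu(y)$) are comparable'' into a contradiction with a displayed triple, you must produce (i) two leaves $a,b$ below one of the two vertices whose colors are distinct and realize $\lca_S(\sigma(L(\Th(v'))))$, and (ii) a leaf $c$ below the other whose color differs from both $\sigma(a)$ and $\sigma(b)$. This is only possible if $\sigma(L(\Th(v')))\cap\sigma(L(\Th(v'')))=\emptyset$ for distinct children of a speciation and $\sigma(L(\Th(x)))\cap\sigma(L(\Th(y)))=\emptyset$ for every transfer edge $xy$ (the singleton-color cases are then handled by disjointness alone). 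These well-formedness conditions are part of the axioms on event- and edge-labeled trees in \citet{Hellmuth2017}, but they are omitted from the statement as reproduced here; without them your key step fails, and the proposition as literally stated is in fact false: a single speciation root with two leaf children of the same color gives $\mathcal{S}(T,t,\sigma)=\emptyset$, which is trivially compatible, yet no map can satisfy \AX{(R3.ii)}. (A similar remark applies to \AX{(R4)}, which needs the hypothesis that every HGT vertex has at least one non-transfer child.) With those hypotheses imported explicitly --- or shown to be necessary conditions you may assume --- your argument closes and coincides with the cited proof.
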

Here, we have not added additional constraints on reconciliation maps that
ensure that the map is also ``time-consistent'', that is, genes do not
travel ``back'' in the species tree, see \cite{Nojgaard:18a} for further
discussion on this.  However, Prop.\ \ref{prop:inftriple-new} gives at
least a necessary condition for the existence of time-consistent
reconciliation maps.  A simple proof of Prop.\ \ref{prop:inftriple-new} for
the case that $T$ is binary and does not contain HGT events can be found in
\cite{HernandezRosales:12a}. Moreover, generalizations of reconciling
event-labeled gene trees with species networks have been established by 
\citet{HHM:19}.

\begin{figure}[]
  \begin{center}
    \begin{tabular}{lll}
      \textbf{(a)} & \textbf{(b)}	& \textbf{(c)} \\		
      \includegraphics[width=0.30\textwidth]{./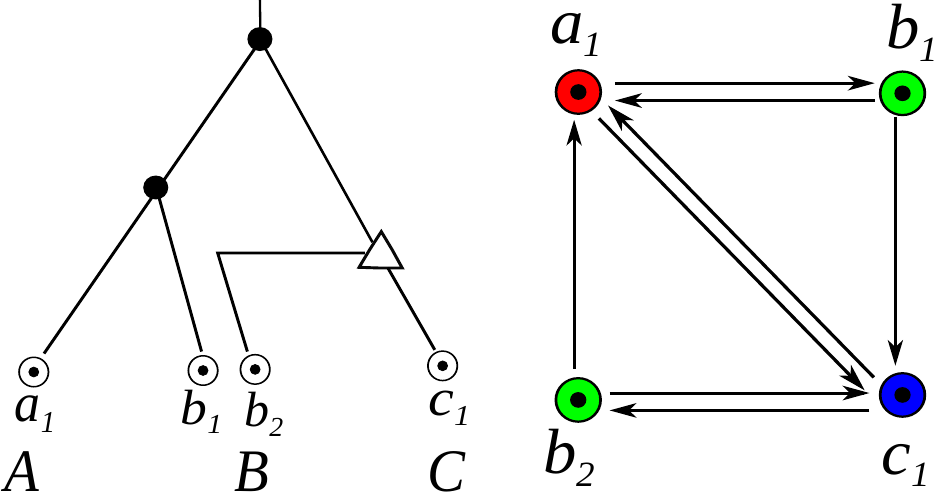} &	
      \includegraphics[width=0.30\textwidth]{./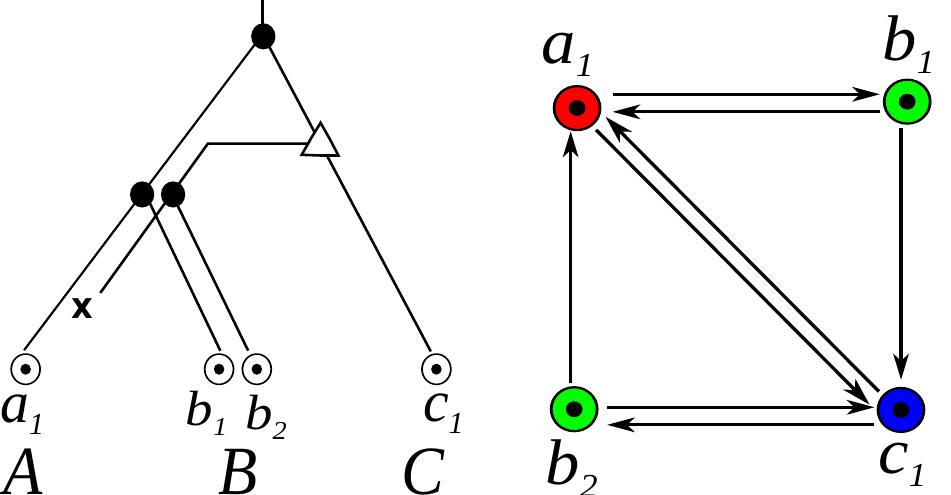} &
      \includegraphics[width=0.30\textwidth]{./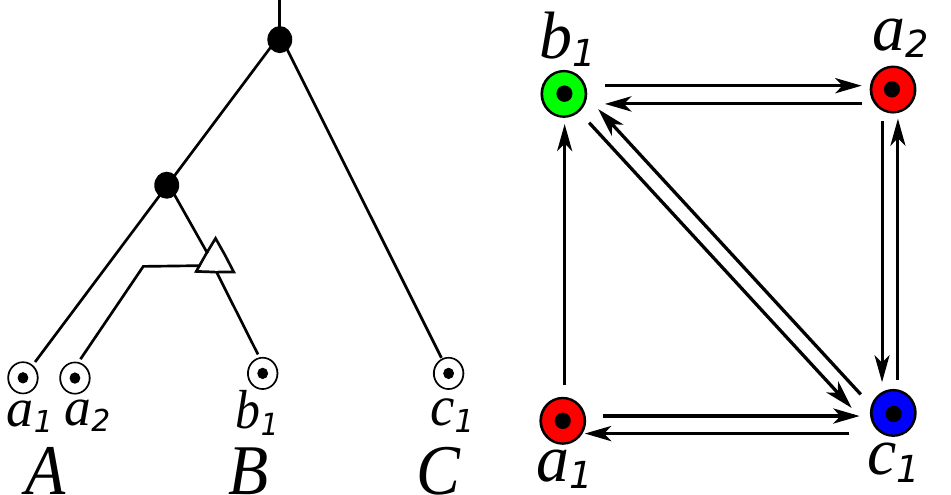} \\ 
      \textbf{(d)} & \textbf{(e)}	& \textbf{(f)} \\      
      \includegraphics[width=0.30\textwidth]{./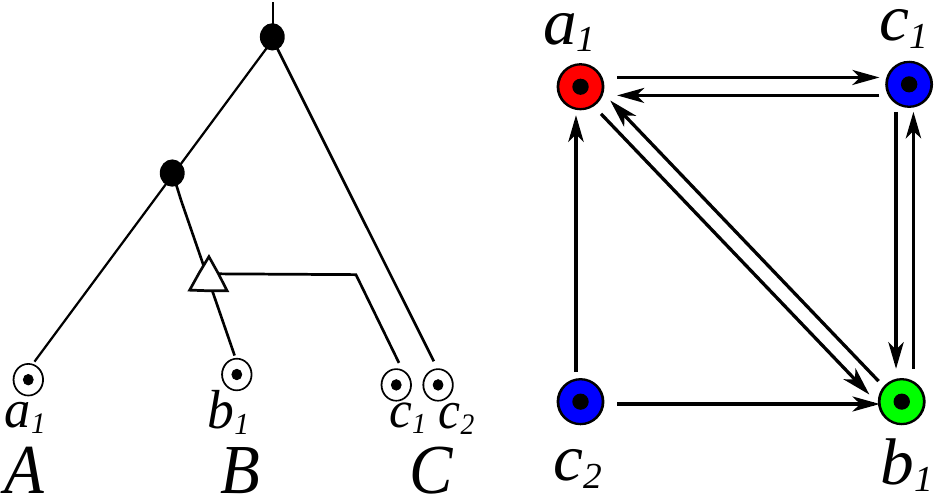} &
      \includegraphics[width=0.30\textwidth]{./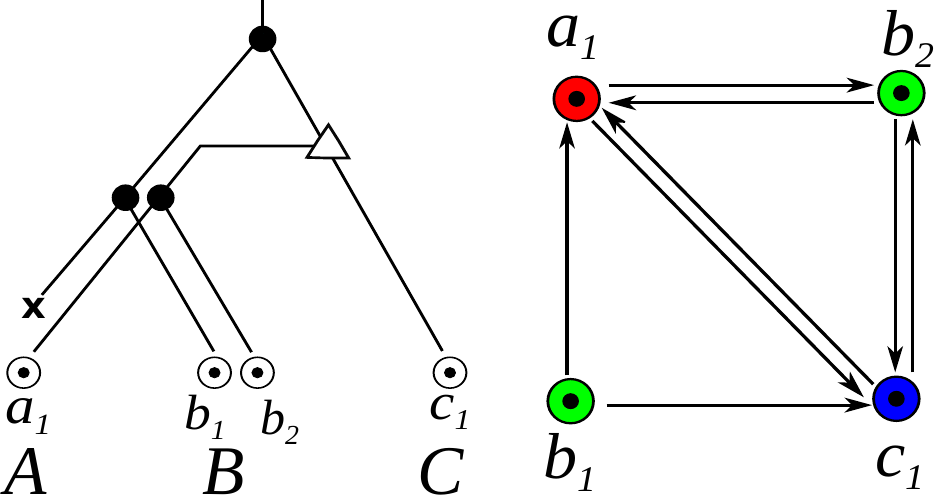} & 
      \includegraphics[width=0.30\textwidth]{./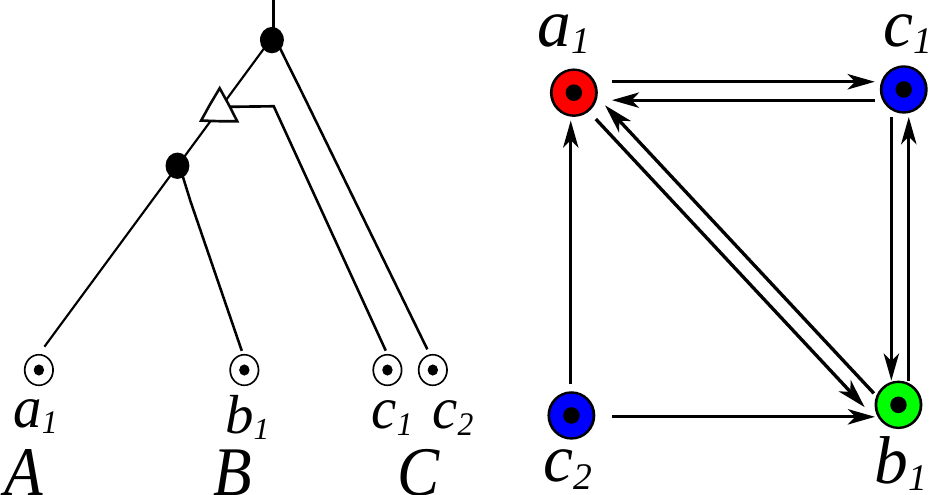} \\      
    \end{tabular}
  \end{center} 
  \caption{Scenarios with four genes, three species, and a single HGT event
    for which RBMG $G(T,\sigma)$ and orthology relation $\Theta(T,t)$
    differ.  The BMG is shown for each scenario.  In the first two cases
    (a) and (b), $G(T,\sigma)$ contains an induced $P_4$ in the RBMG, which
    might serve as indication for HGT events. In the remaining cases, the
    $G(T,\sigma)$ is a cograph, which does not represent the correct
    orthology relation, however.  In scenario (c), the graph $G(T,\sigma)$
    is a triangle with an attached edge, while the orthology relation is
    given by $\Theta(T,t)=K_4-e-f$ with the missing edges $e=a_1a_2$ and
    $f=a_1b_1$, where the latter results from the xenologous pair
    $a_2,b_1$. In the remaining three cases (d)-(f), the RBMG is
    $K_3\cupdot K_1$ compared to the orthology relation
    $\Theta(T,t)=K_4-e$, where the edge $e$ again corresponds to the edge
    between genes of the same species.}
  \label{fig:Dulce}
\end{figure}

In contrast to pure DL scenarios, it is no longer guaranteed that all true
orthology relationships are also reciprocal best matches.
Fig.~\ref{fig:Dulce} gives counterexamples. In three of these scenarios the
RBMG contains an induced $P_4$ that mimics a good quartet. Removal of the
middle edge of good quartets therefore not only reduces false positives in
DL scenarios but also introduces additional false negatives in the presence
of HGT.

\begin{figure}[t]
  \includegraphics[width=\textwidth]{./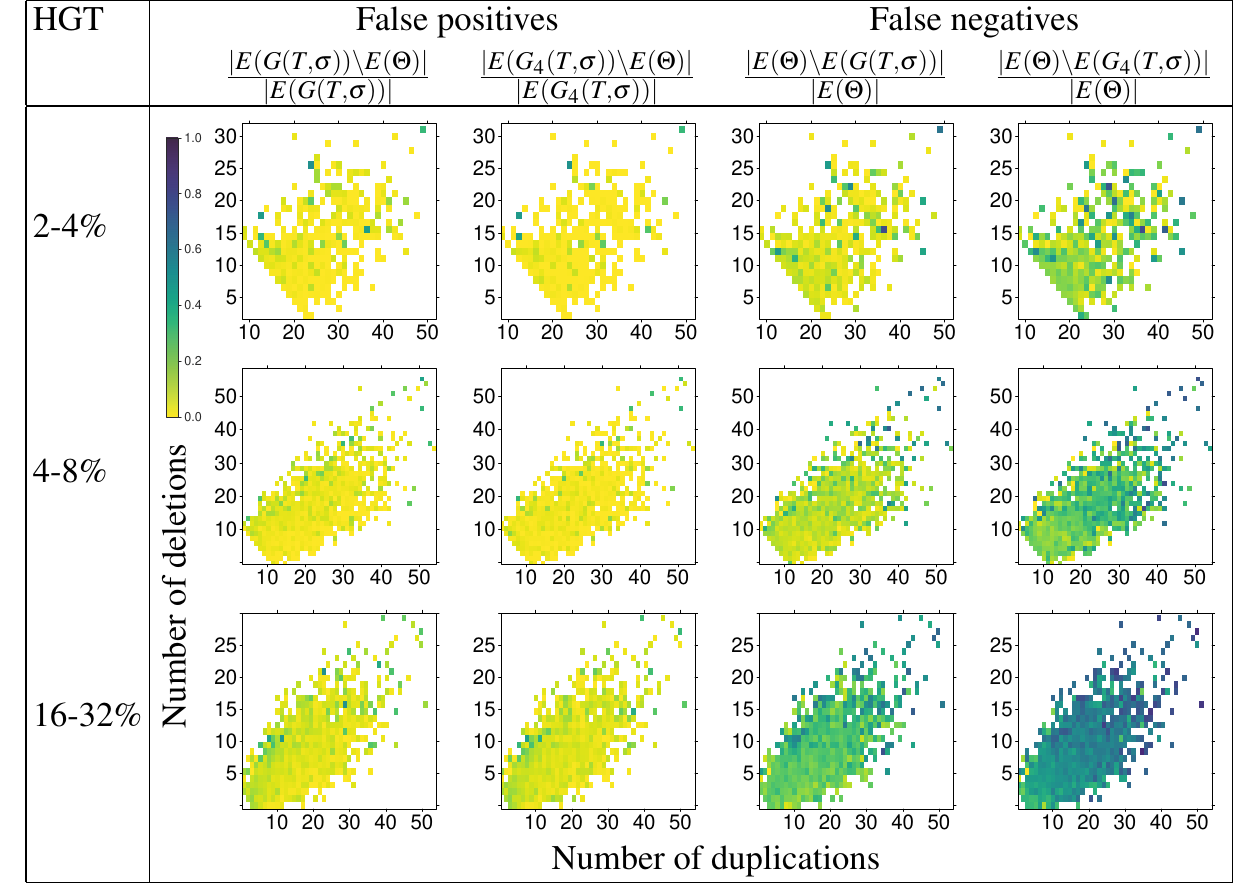}
  \caption{Dependence of the fraction of false positive and false negative
    orthology assignments in RBMGs in the presence of different levels of
    HGT, measured as percentage of HGT events among all events in the
    simulated true gene trees $\tilde T$. As in Fig.\ \ref{fig:DL}, data
    are shown as functions of the number of duplication and loss events in
    the scenario. While the number of false positives seems to depend very
    little on even high levels of HGT, the fraction of false negatives is
    rapidly increasing. Since HGT introduces good quartets that comprise
    only true orthology edges, their removal further increases the false
    positive rate (last column). }
  \label{fig:htgsim}
\end{figure}
         
\section{Discussion}

In the theoretical part of this contribution we have clarified the
relationships between (reciprocal) best match graphs (RBMGs), orthology,
reconciliation map, gene tree, species tree, and event map for the case of
duplication loss scenarios.

The orthology graph $\Theta$ is necessarily a subgraph of the RBMG. In the
absence of HGT, RBMGs therefore produce only false positive but no false
negative orthology assignments. Using not only reciprocal best matches but
all best matches, furthermore, shows that good quartets identify almost all
false positive edges. Removing the central edge of all good quartets in
$(\G,\sigma)$ yields nearly perfect orthology estimates. This, however,
implies that orthology inference is not solely based on reciprocal best
matches. Instead, it is necessary to also include certain directional best
matches, namely those that identify good quartets.

We observed that a small number of HGT events can cause large deviations
between the RBMG $(G,\sigma)$ and the orthology graph $\Theta$. However, we
have considered here the worst-case scenario, where HGT events occur between
relatively closely related organisms. While this is of utmost relevance in
some cases, for instance for toxin and virulence genes in bacteria, it is
of little concern e.g.\ for the evolution of animals. In the latter case,
xenologs almost always originate from bacteria or viruses, i.e., from
outgroups. The xenologs then form their own group of co-orthologs and
behave as if they would have been lost in the species outside the subtree
that received the horizontally transfered gene. 

From a more theoretical point of view, our empirical findings in the HGT
case beg two questions: (1) Are there \emph{local} features in the (R)BMG
that make it possible to unambiguously identify HGT, at least in some
cases? (2) What kind of additional information can be integrated to
distinguish good quartets arising from duplication/loss events that can be
safely removed from those that are introduced by HGT and should be
``repaired'' in a different manner. Most obviously, one may ask whether the
Fitch relation is sufficient (we conjecture that this is the case)
\cite{Geiss:18a,HS:19}, or whether it suffices to know that a leaf is a
(recent) result of transfer (we conjecture that this is not enough in
general).

The identification of edges in the RBMG that should or should not be
removed has important implication for orthology detection approaches that
enforce the cograph structure of the predicted orthology relation by means
of cograph editing. While this is an NP-complete problem \cite{LIU201245}
in general, the complexity of the colored version, i.e., editing a properly
colored graph to the nearest \hc-cograph remains open. The removal of false
positive edges identified by good quartets empirically reduces the number
of induced $P_4$s drastically. This observation also suggests to consider
\hc-cograph editing with a given best match relation. We suspect that the
additional knowledge of the directed edges makes the problem tractable
since it already implies a unique least resolved tree that captures much of
the cograph structure.

Cograph editing would be fully content with \hc-cographs, i.e.,
co-RBMGs. These are not necessarily ``biologically feasible'' in the sense
that they can be reconciled with a species tree. It will therefore be of
interest to consider the problem of editing an \hc-cograph to another
\hc-cograph that is reconcilable with some or a given species tree -- a
problem that has been considered already for orthology relations
\cite{Lafond:16,Lafond:14}.  Since the obstructions are conflicting triples
with a speciation at their top node, the offending data are conflicting
orthology assignments. It seems natural therefore to phrase the problem not
as an arbitrary editing problem but instead to ask for a maximal induced
sub-\hc-cograph that implies a compatible triple set. If it is indeed true
that triples necessarily displayed by the species tree can be extracted
directly from the c(R)BMG, it will be of practical use to consider the
corresponding edge deletion problem for c(R)BMGs. In particular, it would
be interesting to know whether the latter problem is the same as asking for
the maximal compatible subset of triples implied by the c(R)BMG or co-BMG?

\section*{Acknowledgements}
This work was support in part by the German Federal Ministry of Education
and Research (BMBF, project no.\ 031A538A, de.NBI-RBC) and the Mexican
Consejo Nacional de Ciencia y Tecnolog{\'i}a (CONACyT, 278966 FONCICYT
2).

\begin{appendix}
	\section{Extraction of Graphs from Simulated Evolutionary Scenarios}
	
	The simulations of evolutionary scenarios (described in the main text)
	result in an event-labeled gene tree $(T,t,\sigma)$ as well as an explicit
	reconciliation map $\mu:V(T)\to V(S)\cup E(S)$. From these data we have to
	construct the orthology graph $\Theta(T,t)$ and the RBMG
	$G(T,\sigma)$. This can be achieved in $O(L^2)$ time using Tarjan's
	off-line lowest common ancestors algorithm \cite{Tarjan:79,Gabow:83} to
	first tabulate $\lca_T(x,y)$ for all $x,y\in L$ in quadratic total time or
	with the help of additional data structures that then allow to answer least
	common ancestor queries in constant time \cite{Harel:84,Schieber:88}. As
	show below, it is also possible to avoid computation of the $\lca()$
	function altogether.
	
	\subsection{Orthology Graphs}
	
	The orthology relation $\Theta(T,t)$ is easily constructed from the event-labeled gene tree $(T,t)$ by a simple recursive construction. 
	For each
	$v\in \tilde T$ we define a graph $\Theta(v)$ recursively: if $v$ is a
	leaf, then $\Theta(v)$ is the $K_1$ with vertex set $\{v\}$ whenever $v$ is
	an extant gene and $\Theta(v)=\emptyset$, the empty graph, if $v$ is a loss
	event.  For inner vertices we set
	\begin{equation}
	\Theta(v) =
	\begin{cases}
	\displaystyle\bigjoin_{u\in\child(v)} \Theta(u)
	& \textrm{if } t(v)=\SPEC \\
	\displaystyle\bigcup_{u\in\child(v)}  \Theta(u)
	& \textrm{otherwise}
	\end{cases}
	\end{equation}
	Since $H\join \emptyset = H\cup\emptyset = H$, there is no contribution of
	the loss-leafs. Thus $\Theta(v)$ can be computed in exactly the same manner
	from the observable gene tree $T$. Hence,
	$\Theta(\rho_{T})=\Theta(\rho_{\tilde T})=:\Theta$ is the orthology graph
	of the scenario. Note that the planted root $0_T$ does not appear as the
	last common ancestor of any two leaves in $L(T)$, hence it suffices to
	consider the root. Although the next result is an immediate consequence of
	the definition of cographs and their corresponding cotrees
	\cite{Corneil:81}:
	\begin{lemma}\label{lem:spec}
		Let $(T,t,\sigma)$ be an event-labeled, leaf-labeled tree. Then
		$xy\in E(\Theta(v))$ if and only if $t(\lca_{T}(x,y))=\SPEC$.
	\end{lemma}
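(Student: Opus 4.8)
The plan is to prove the equivalence by structural induction on the subtree $T(v)$, say on the number of inner vertices of $T(v)$ (equivalently on the height of $v$). Throughout I keep in mind that the two leaves implicitly range over $L(T(v))$ and that, since $x\neq y$ are distinct extant leaves, $\lca_T(x,y)$ is always an inner vertex of $T$; moreover, because the loss leaves contribute $\emptyset$ and $H\join\emptyset=H\cupdot\emptyset=H$, it suffices to reason with the observable tree $T$ whose leaves are all extant. The only structural fact I need is that the $\lca$ operation is local: if $x,y$ lie below a common child $u_i$ of $v$, then $\lca_T(x,y)=\lca_{T(u_i)}(x,y)$, a vertex strictly below $v$.

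For the base case $v$ is a leaf, so $\Theta(v)$ has no edges and there is no pair of distinct leaves in $L(T(v))$; the statement holds vacuously. For the inductive step let $v$ be an inner vertex with $\child(v)=\{u_1,\dots,u_k\}$, and let $x,y\in L(T(v))$ be distinct. First I treat the case that $x$ and $y$ lie in the same child subtree $T(u_i)$. Both the join $\bigjoin$ and the disjoint union $\bigcup$ leave the internal edge set of each component $\Theta(u_i)$ untouched, so $xy\in E(\Theta(v))$ if and only if $xy\in E(\Theta(u_i))$, regardless of $t(v)$. Since $\lca_T(x,y)=\lca_{T(u_i)}(x,y)$, the induction hypothesis applied to $u_i$ yields $xy\in E(\Theta(u_i))$ exactly when $t(\lca_T(x,y))=\SPEC$, which is what is required.

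It remains to handle the case $x\in L(T(u_i))$, $y\in L(T(u_j))$ with $i\neq j$; here $\lca_T(x,y)=v$, so $t(\lca_T(x,y))=t(v)$. If $t(v)=\SPEC$, the recursion sets $\Theta(v)=\bigjoin_{u\in\child(v)}\Theta(u)$, and by definition of the join every vertex of $\Theta(u_i)$ is adjacent to every vertex of $\Theta(u_j)$, so $xy\in E(\Theta(v))$; if $t(v)=\DUPL$, then $\Theta(v)=\bigcup_{u\in\child(v)}\Theta(u)$ is a disjoint union and there is no edge between distinct components, so $xy\notin E(\Theta(v))$. In both situations the presence of the edge matches $t(v)=\SPEC$, completing the induction. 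The argument is essentially routine; the only point deserving care is the use of associativity of $\join$ and $\cupdot$ so that the single $k$-ary recursion step behaves correctly for arbitrary (non-binary) arity, together with the localization of $\lca$ to child subtrees — so I do not expect a genuine obstacle here, in line with the remark that the result follows immediately from the cograph/cotree correspondence of \cite{Corneil:81}.
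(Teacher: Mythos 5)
Your proof is correct; the paper itself gives no proof of this lemma, merely asserting it to be an immediate consequence of the cograph/cotree correspondence of \cite{Corneil:81}. Your structural induction (same-child case handled by the induction hypothesis and locality of $\lca$, cross-child case by the join/disjoint-union dichotomy at $v$) is precisely the routine argument that correspondence encodes, so it matches the paper's intended justification.
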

	By construction, $\Theta(u)$ is an induced subgraph of $\Theta(v)$ whenever
	$u\preceq_T v$. It is thus sufficient to store the binary $|L|\times|L|$
	adjacency matrix of $\Theta$. Traversing $T$ in post-order, one sets
	$\Theta_{xy}=1$, i.e., $xy\in E(\Theta)$, for all $xy$ with
	$x\in L(T(u_1))$ and $y\in L(T(u_2))$ where $u_1$ and $u_2$ are distinct
	children of $v$, if and only if $v$ is a speciation vertex. Since the pair
	$x,y$ is considered exactly once, namely when $v=\lca(x,y)$ is encountered
	in the traversal of $T$, the total effort is $O(|L|^2)$.
	
	\subsection{Best Match Graphs} 
	
	In order to compute the BMG $\G(T,\sigma)$ we associate every inner vertex
	$v$ with the lists $L_r(v):= \{x\in L(T(v))| \sigma(x)=r\}$ of leaves below
	$v$ with color $r$. We have $L_r(v)=\bigcup_{u\in\child(v)} L_r(u)$ for
	inner vertices, while leaves are initialized with $L_r(v)=\{v\}$ if
	$\sigma(v)=r$, and $L_r(v)=\emptyset$ if $\sigma(v)\ne r$. Again this can
	be achieved in not more than quadratic time. Now define
	$C_{\neg s}(v):=\{u\in\child(v)| L_s(u)=\emptyset\}$ and
	$C_{s}(v):=\{u\in\child(v)| L_s(u)\ne\emptyset\}$. Best matches can be
	retrieved directly from these auxiliary sets:
	\begin{lemma}\label{lem:bm}
		Let $u_1$ and $u_2$ be two distinct children of some inner vertex $v$ of
		the leaf-colored tree $(T,\sigma)$ and let $x\in L(T(u_1))$ with
		$\sigma(x)=r$ and $y\in L(T(u_2))$ with $\sigma(y)=s\neq r$. Then
		$(x,y)$ is a best match in $(T,\sigma)$ if and only if
		\begin{equation*}
		u_1\in  C_r(v)\cap C_{\neg s}(v)
		\quad\textrm{and}\quad
		u_2\in C_{s}(v).
		\end{equation*}
	\end{lemma}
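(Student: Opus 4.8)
The plan is to reduce the biconditional to a single key fact about where the last common ancestor sits. First I would observe that because $x\in L(T(u_1))$ and $y\in L(T(u_2))$ lie in the subtrees rooted at two \emph{distinct} children $u_1,u_2$ of $v$, their last common ancestor is forced to be $v$ itself, i.e.\ $\lca_T(x,y)=v$. I would also note that two of the three membership conditions are automatic from the hypotheses: $x\in L(T(u_1))$ with $\sigma(x)=r$ gives $L_r(u_1)\neq\emptyset$, hence $u_1\in C_r(v)$, and $y\in L(T(u_2))$ with $\sigma(y)=s$ gives $L_s(u_2)\neq\emptyset$, hence $u_2\in C_s(v)$. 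Thus the entire content of the statement collapses to the single equivalence
\[
(x,y)\text{ is a best match}\quad\Longleftrightarrow\quad u_1\in C_{\neg s}(v),\ \text{i.e. } L_s(u_1)=\emptyset.
\]

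For the direction ``$\Leftarrow$'' I would assume $L_s(u_1)=\emptyset$ and verify the defining inequality $\lca_T(x,y)\preceq_T\lca_T(x,y')$ for every leaf $y'$ with $\sigma(y')=s$. The key step is that any such $y'$ satisfies $y'\notin L(T(u_1))$; since every ancestor of $x$ that is $\preceq_T u_1$ has all its leaf-descendants inside $L(T(u_1))$ and therefore cannot be an ancestor of $y'$, the vertex $\lca_T(x,y')$ must be a proper ancestor of $u_1$, i.e.\ $v\preceq_T\lca_T(x,y')$. Combined with $\lca_T(x,y)=v$ this yields exactly the best-match condition. Note that the family of $y'$ quantified over is nonempty, since $y$ itself has color $s$.

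For the direction ``$\Rightarrow$'' I would argue by contraposition. If $L_s(u_1)\neq\emptyset$, pick a leaf $y'\in L(T(u_1))$ with $\sigma(y')=s$. Then $\lca_T(x,y')\preceq_T u_1\prec_T v=\lca_T(x,y)$, so $\lca_T(x,y')\prec_T\lca_T(x,y)$, meaning $y$ is not a closest same-colored leaf to $x$; hence $(x,y)$ is not a best match.

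I do not expect a genuine obstacle here. The only step requiring a little care is the claim that $y'\notin L(T(u_1))$ forces $\lca_T(x,y')\succeq_T v$, which is a direct consequence of $v$ being the parent of $u_1$ together with the fact that $L(T(u_1))$ is precisely the set of leaf-descendants of $u_1$. Once $\lca_T(x,y)$ has been pinned down to $v$, the whole argument is essentially a restatement of the best-match definition in terms of the auxiliary sets $C_r,C_s,C_{\neg s}$.
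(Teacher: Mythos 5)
Your proof is correct and follows essentially the same route as the paper: both arguments pin down $\lca_T(x,y)=v$ and reduce everything to whether $L_s(u_1)$ is empty, with the same two-case analysis ($L_s(u_1)=\emptyset$ forces $v\preceq_T\lca_T(x,y')$ for all $y'$ of color $s$; otherwise a witness $y'\in L_s(u_1)$ gives $\lca_T(x,y')\preceq_T u_1\prec_T v$). Your explicit remark that $u_1\in C_r(v)$ and $u_2\in C_s(v)$ hold automatically is a minor clarification the paper leaves implicit, not a different approach.
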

	\begin{proof}
		If $L_s(u_1)=\emptyset$, then there is no best match of color $s$ for $x$
		in $L(T(u_1))$, i.e., any best match $\sigma(y')=s$ satisfies
		$v\preceq \lca(x,y')$. From $\lca(x,y)=v$ we see that $(x,y)$ is indeed a
		best match. On the other hand, if $L_s(u_1)\ne\emptyset$, then there is a
		leaf $y'\in L_s(u_1)$ with $\lca(x,y')\preceq u_1\prec v=\lca(x,y)$, and
		thus $y$ is not a best match for $x$.  \qed
	\end{proof}
	
	\begin{algorithm}
		\caption{Construction of $\G(T,\sigma)$}
		\label{alg:BMG}
		\algsetup{linenodelimiter=}
		\begin{algorithmic}
			\REQUIRE leaf-colored tree $(T,\sigma)$
			\FORALL {leaves $v$ of $T$, colors $r$}
			\STATE $L(T(v))=\{v\}$ 
			\IF {$\sigma(v)=r$}
			\STATE $\ell_{vr}=1$
			\ELSE
			\STATE $\ell_{vr}=0$
			\ENDIF
			\ENDFOR
			\FORALL {inner vertices $v$ of $T$ in postorder}           
			\FORALL {$u_1,u_2\in\child(v)$, $u_1\ne u_2$}
			\FORALL {$x\in L(T(u_1))$ and $y\in L(T(u_2))$}
			\STATE $(x,y)\in \G(T,\sigma)$ if $\ell_{u_1\sigma(y)}=0$
			\ENDFOR
			\ENDFOR
			\STATE $L(T(v))=\bigcup_{u\in\child(v)} L(T(u))$
			\FORALL {$u\in\child (v)$, colors $r\in S$}
			\STATE $\ell_{vr}=1$ if $\ell_{ur}=1$
			\ENDFOR
			\ENDFOR 
		\end{algorithmic}
	\end{algorithm}
	
	This observation yields the very simple way to construct $\G(T,\sigma)$.
	Algorithm \ref{alg:BMG} iterates over all pairs of vertices $x,y\in L$ such
	that each pair is visited exactly once by considering for every interior
	vertex $v$ exactly the pairs that are members of two distinct subtrees
	rooted at children $u_1$ and $u_2$ of $v$.  Since $y\in L_{\sigma(y)}(u_2)$
	and $x\in L_{\sigma(x)}(u_1)$ is guaranteed by construction, $(x,y)$ is a
	best match if and only if $L_{\sigma(y)}(u_1)=\emptyset$ by
	Lemma~\ref{lem:bm}. Using the precomputed binary variable $\ell_{vr}$ with
	value $1$ if $L_{r}(v)\ne \emptyset$ and $\ell_{vr}=0$ otherwise, this can
	be done in constant time $O(|L|)$. By traversing $T$ in postorder, finally,
	we can compute the lists of leaves $L(v)$ on the fly.  Since no subtree is
	revisited, there is no need to retain the $L(T(u))$ for the children, i.e.,
	for each vertex $v$, the lists of its children can simply be concatenated.
	Similarly, the variables $\ell_{vr}$ can be obtained while traversing $T$
	using the fact that $\ell_{vr}=1$ if and only if $\ell_{ur}=1$ for at least
	one of its children. Hence, Algorithm \ref{alg:BMG} runs in $O(|L|^2)$ time
	with $O(|L|\,|S|)$ memory using a single postorder traversal of $T$.
	
	The RBMG $G(T,\sigma)$ is now easily obtained from the BMG $\G(T,\sigma)$
	by extracting its symmetric part. Clearly the effort for this step is also
	bounded by $O(|L|^2)$.
	
	\subsection{Good Quartets} 
	
	We have seen in Section \ref{sect:goodbadugly} that at least some false
	positive edges are identified by good quartets. A convenient way of listing
	all good quartets $Q$ in $(\G,\sigma)$ makes use of the \emph{degree
		sequence} of $\G$, that is, the list
	$\alpha=((\alpha_x^+,\alpha_x^-)|x\in V(\G) )$ of pairs
	$(\alpha_x^+,\alpha_x^-)$ where $\alpha_x^+$ is the out-degree and
	$\alpha_x^-$ is the in-degree of the vertex $x\in V(\G)$ and the list is
	ordered in positive lexicographical order. One easily checks that a good
	quartet contains neither a \emph{2-switch} nor an \emph{induced 3-cycle},
	hence $Q$ is uniquely defined by its degree sequence
	$((2,1),(2,3),(2,3),(2,1))$ as a consequence of \cite[Thm.\
	1]{Cloteaux:14}. Regarding the coloring, it suffices to check that the two
	endpoints, that is, the vertices with indegree $1$, have the same color
	$\sigma(u)=\sigma(x)$. This already implies
	$\sigma(v),\sigma(w)\ne \sigma(u)=\sigma(x)$. Since there is an edge
	between $v$ and $w$, we also have $\sigma(v)\ne\sigma(w)$, i.e., the colors
	are determined up to a permutation of colors.  The false positive edge is
	the one connecting the two vertices with outdegree $3$.
	
	\clearpage
	
	\section{Additional Information on Simulated Scenarios}
	
	\begin{figure}[h] \label{Fig:SimInfo}
		\begin{tabular}{cc}
			\begin{tabular}{ll}
				\textbf{(a)} & \textbf{(b)} \\[5pt]        
				\includegraphics[width=0.4\textwidth]{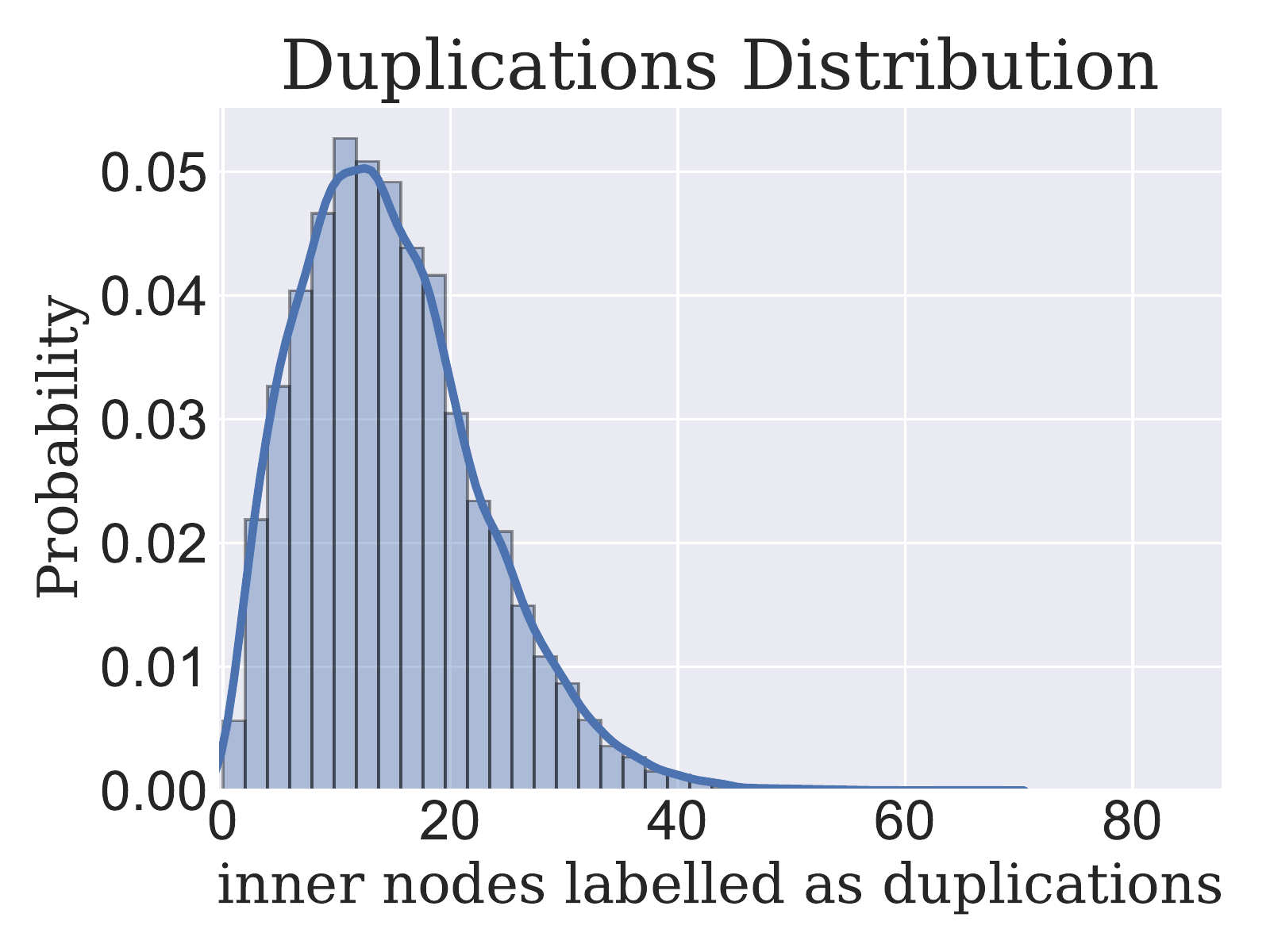} 
				& \includegraphics[width=0.4\textwidth]{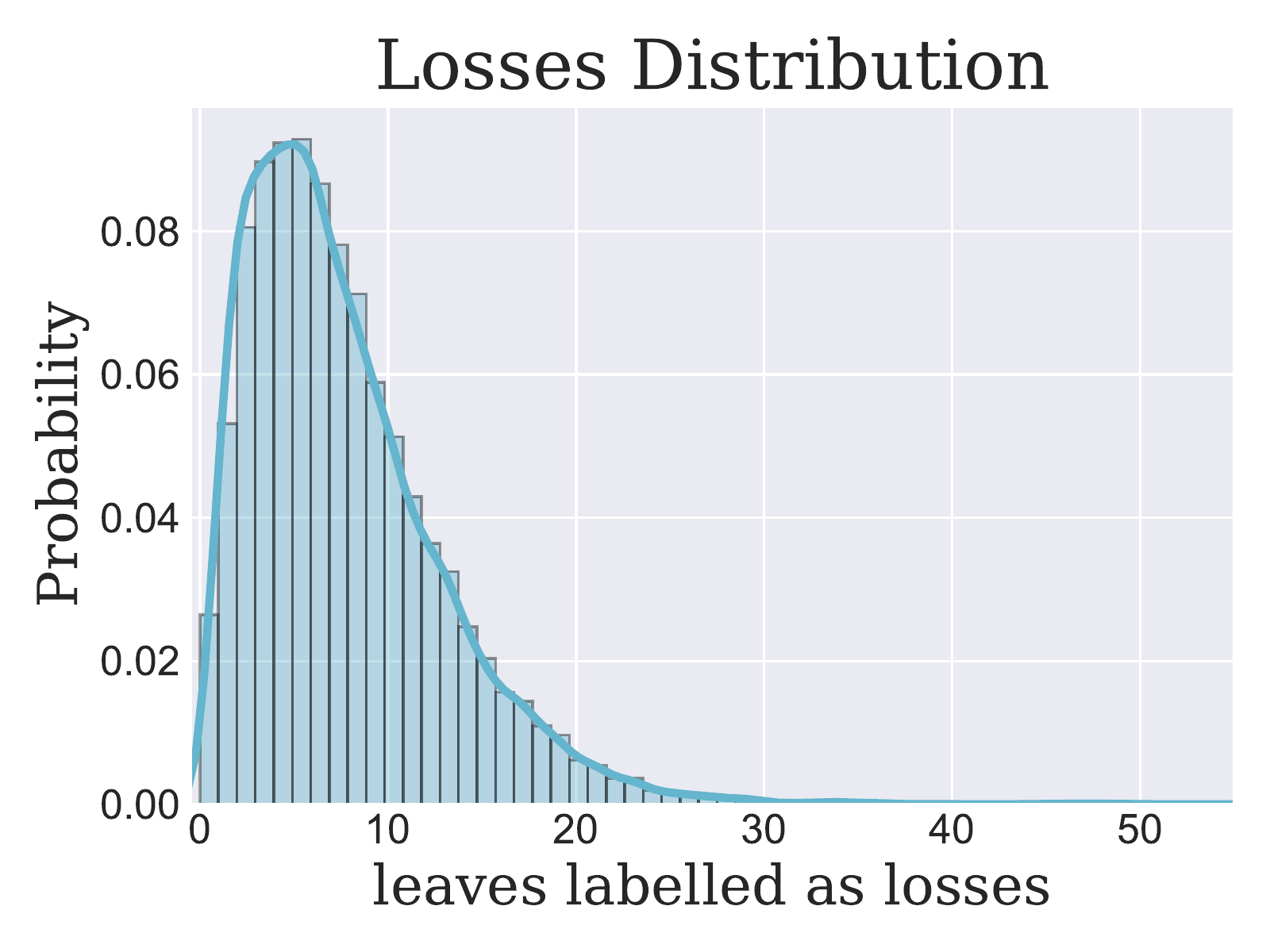} \\
				\textbf{(c)} & \textbf{(d)} \\[5pt]        
				\includegraphics[width=0.4\textwidth]{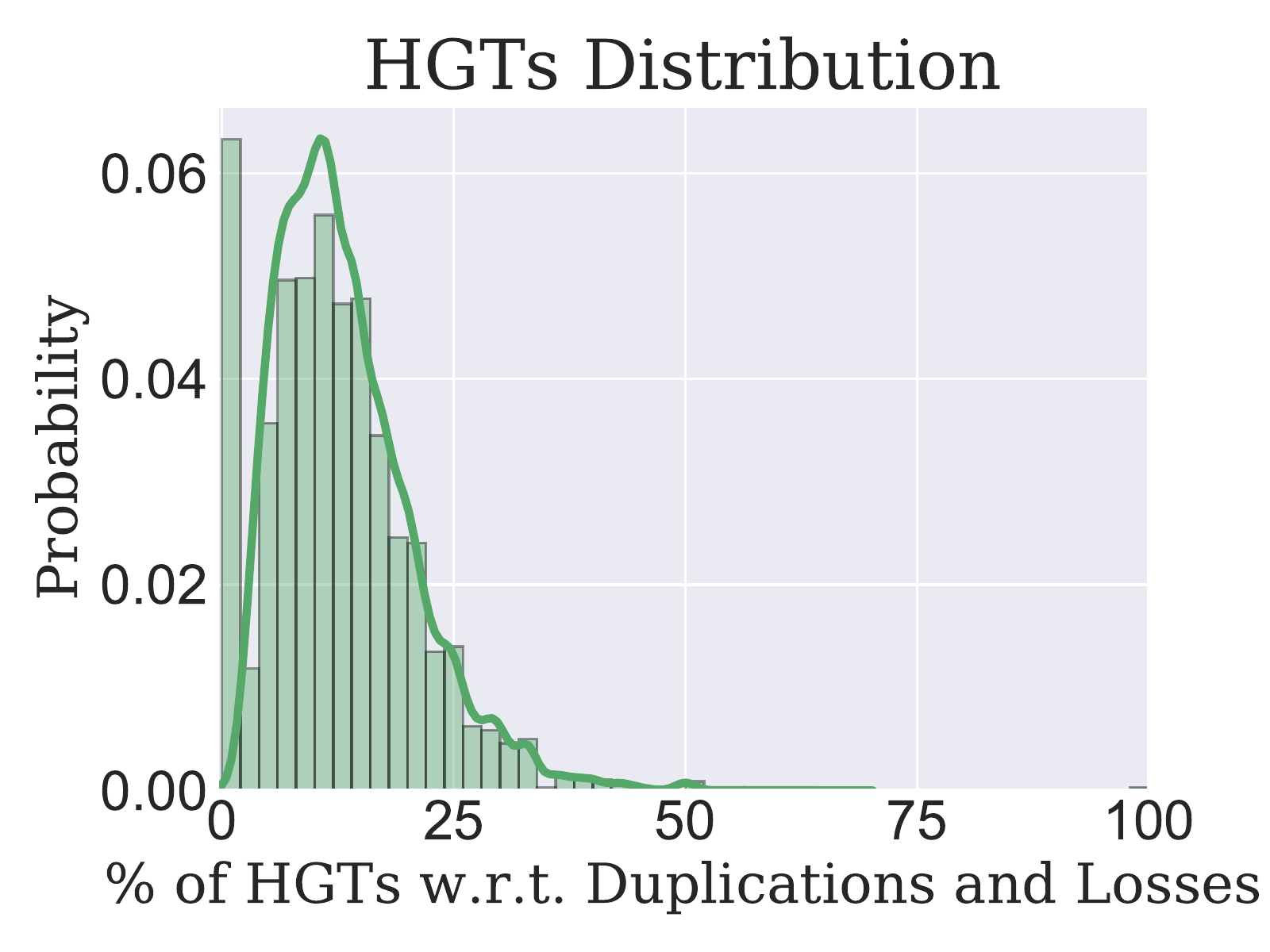} 
				& \includegraphics[width=0.4\textwidth]{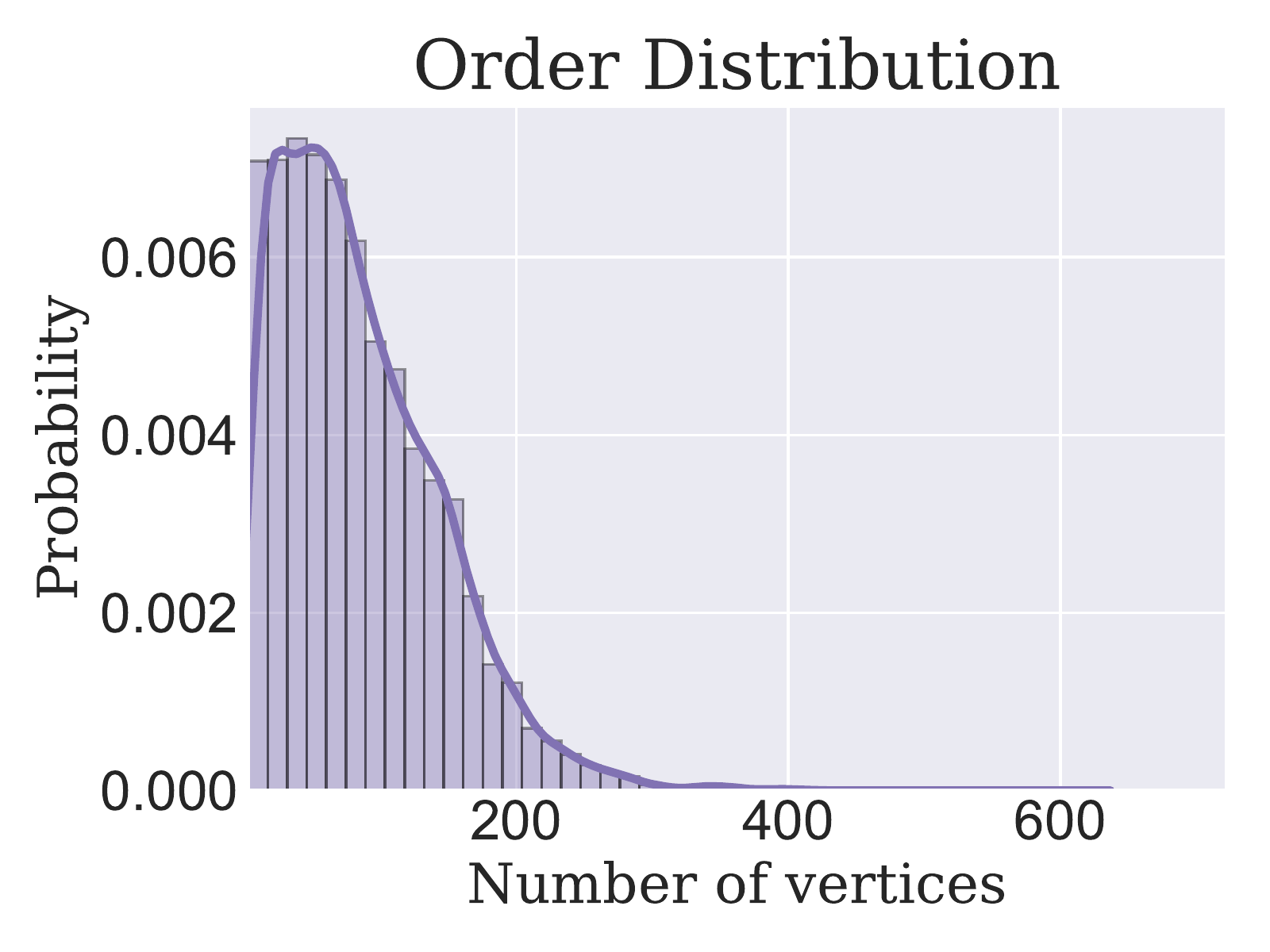} \\
			\end{tabular}\\[1cm]
			\begin{tabular}{|lccc|}    	
				\hline
				\multicolumn{1}{c}{} & \multicolumn{1}{c}{\textbf{Max.}} &
				\textbf{Min.} & \textbf{Average} \\
				\hline
				\textbf{Species}              & 100      & 3    & 44.5  \\
				\textbf{Extant genes}         & 722      & 3    & 84.62 \\
				\textbf{Speciations ($S_n$)}  & 667      & 2    & 73.46  \\   
				\textbf{Duplications ($D_n$)} & 88       & 0    & 14.66 \\
				\textbf{Losses ($L_n$)}       & 55       & 0    & 7.41  \\
				\textbf{HGT ($H_n\%$)}        & 100\%    & 0\%  & 11.76\% \\
				\hline
			\end{tabular}
		\end{tabular}
		\caption{Summary statistics of the 14,000 simulated scenarios. (a)--(c)
			Distributions of fraction of duplications, losses, and HGTs,
			respectively in the true gene trees $\tilde T$.  (d) Distribution of
			the number of extant genes in the observable gene tree $T$ and thus the
			number of vertices (order) of the best match graph $G(T,\sigma)$. The
			spline in each panel is a kernel density estimate.}
	\end{figure}

	\begin{table}[h]
		\caption{We simulated 11 batches with different ranges for the rates of
			duplications, losses, and HGT (columns 3 to 5), where the rates have been
			varied in steps of 0.01. In
			each batch, we simulated for each combination of rates exactly one
			scenario. The second column shows the total number of scenarios for each
			batch.}
		\begin{tabular}{cccccc}
			Batch & \# Scenarios
			& Duplication rates & Loss rates  & HGT rates      & \# Species\\
			1     & 1000	 & 0.75 - 0.84       & 0.7 - 0.79   & 0.1 - 0.19  & 3-100 \\
			2     & 1000	 & 0.85 - 0.94       & 0.85 - 0.94  & 0.1 - 0.19  & 3-100 \\
			3     & 1000	 & 0.80 - 0.89       & 0.80 - 0.89  & 0.1 - 0.19  & 3-100 \\
			4     & 1000	 & 0.70 - 0.79       & 0.70 - 0.79  & 0.1 - 0.19  & 3-100 \\
			5     & 1000	 & 0.90 - 0.99       & 0.90 - 0.99  & 0.1 - 0.19  & 3-100 \\
			6     & 1000	 & 0.85 - 0.94       & 0.75 - 0.84  & 0.1 - 0.19  & 3-100 \\
			7     & 1000	 & 0.90 - 0.99       & 0.90 - 0.99  & 0.15 - 0.24 & 3-100 \\
			8     & 1000	 & 0.90 - 0.99       & 0.90 - 0.99  & 0.15 - 0.24 & 3-100 \\
			9     & 1000	 & 0.65 - 0.74       & 0.65 - 0.74  & 0.10 - 0.19 & 3-100 \\
			10    & 1000	 & 0.85 - 0.94       & 0.75 - 0.84  & 0.15 -0.24  & 3-100 \\
			11    & 4000     & 0.75 - 0.94       & 0.75 - 0.94  & 0.15 -0.24  & 3-50  \\
		\end{tabular}
		\label{table:scenarios}
	\end{table}
	
	\clearpage
	
	\section{False Positive Edges in Non-Cograph 3-RBMGs} 
	
	In the following, we identify further false positive orthology assignments
	in the RBMG based on results that we recently derived in \citet{Geiss:19x}.
	We start by defining a color-preserving thinness relation that has been
	introduced in \citet{Geiss:19x}:
	\begin{definition} 
		For an undirected colored graph $(G,\sigma)$ two vertices $a$ and $b$ are
		in relation $\sthin$, in symbols $a\sthin b$, if $N(a) = N(b)$ and
		$\sigma(a) = \sigma(b)$. The equivalence class of $a$ is denoted by
		$[a]$.  $(G,\sigma)$ is called $\sthin$-thin if no two distinct vertices
		are in relation $\sthin$.
	\end{definition}
	
	\subsection{Type (B) 3-RBMGs} 
	
	Let $(G,\sigma)$ be a connected $\sthin$-thin 3-RBMG of Type
	\AX{(B)}. Lemma 25 of \cite{Geiss:19x} then implies that $(G,\sigma)$
	contains an induced path
	$P:=\langle \hat x_1 \hat y \hat z \hat x_2 \rangle$ with three distinct
	colors $\sigma(\hat x_1)=\sigma(\hat x_2)=:r$, $\sigma(\hat y)=:s$, and
	$\sigma(\hat z)=:t$, and $N_r(\hat y)\cap N_r(\hat z)=\emptyset$ such that
	the vertex sets
	\begin{itemize}
		\item[]$L_{t,s}^{P} \coloneqq \{y \mid  \langle xy\hat z\rangle \in
		\mathscr{P}_3 \text{ for any } x\in N_{r}(y)\}$,
		\item[]$L_{t,r}^{P} \coloneqq \{x\mid N_{r}(y)=\{x\} \text{ and }
		\langle xy\hat z\rangle \in \mathscr{P}_3\}
		\cup 
		\{x\mid x\in L[r],\, N_{s}(x)=\emptyset,\,
		L[s]\setminus L_{t,s}^P\neq\emptyset\}$,
		\item[]$L_{s,t}^{P} \coloneqq \{z \mid \langle xz\hat y\rangle \in \mathscr{P}_3
		\text{ for any } x\in N_{r}(z)\}$, 
		\item[]$L_{s,r}^{P} \coloneqq  \{x\mid N_{r}(z)=\{x\} \text{ and }
		xz\hat y \in \mathscr{P}_3\} \cup 
		\{x\mid x\in L[r], N_{t}(x)=\emptyset,
		L[t]\setminus L_{s,t}^P\neq\emptyset\}$,
		\item[]$L_{t}^{P} \coloneqq  L_{t,s}^{P} \cup L_{t,r}^{P}$,
		\item[]$L_{s}^{P} \coloneqq  L_{s,t}^{P} \cup L_{s,r}^{P}$, and 
		\item[]$L_*^P     \coloneqq  L \setminus (L_t^P \cup L_s^P)$ 
	\end{itemize}
	satisfy the following conditions:
	\begin{description}
		\item[\AX{(B2.a)}] If $x\in L_*^P[r]$, then $N(x)=L_*^P\setminus\{x\}$,
		\item[\AX{(B2.b)}] If $x\in L_t^P[r]$, then $N_s(x)\subset L_t^P$ and
		$|N_s(x)|\le 1$, and $N_t(x)= L_*^P[t]$,
		\item[\AX{(B2.c)}] If $x\in L_s^P[r]$, then
		$N_t(x)\subset L_s^P$ and $|N_t(x)|\le 1$, and $N_s(x)= L_*^P[s]$
		\item[\AX{(B3.a)}] If $y\in L_*^P[s]$, then
		$N(y)=L_s^P\cup (L_*^P\setminus\{y\})$, 
		\item[\AX{(B3.b)}] If $y\in L_t^P[s]$, then
		$N_r(y)\subset L_t^P$ and $|N_r(y)|\le 1$, and $N_t(y)=L[t]$,
		\item[\AX{(B4.a)}] If $z\in L_*^P[t]$, then
		$N(z)=L_t^P\cup (L_*^P\setminus\{z\})$, 
		\item[\AX{(B4.b)}] If $z\in L_s^P[t]$, then
		$N_r(z)\subset L_s^P$ and $|N_r(z)|\le 1$, and $N_s(z)=L[s]$.
	\end{description} 
	By construction, $\sigma(L_t^P)=\{r,s\}$ and $\sigma(L_s^P)=\{r,t\}$ and,
	as a consequence of Lemma 25 of \citet{Geiss:19x}, the sets $L_s^P$,
	$L_t^P$, and $L_*^P$ form a partition of $V(G)$. Furthermore, Lemma 33 of
	\citet{Geiss:19x} implies that any 3-colored induced path $P$ of the form
	$(r,s,t,r)$ that satisfies \AX{(B2.a)} to \AX{(B4.b)} is a good quartet
	w.r.t.\ some $(T,\sigma)$ explaining a BMG $(\G,\sigma)$ that contains
	$(G,\sigma)$ as its symmetric part.
	
	Our goal is to identify edges in $(G,\sigma)$ that can cannot be present in
	the orthology graph $\Theta$.  To this end we extend the leaf sets
	$L^P_*,L^P_s,L^P_t$ that have been introduced in \citet{Geiss:19x} for
	$\sthin$-thin 3-RBMGs, to general 3-RBMGs:
	\begin{definition}
		Let $(G,\sigma)$ be a 3-RBMG of Type \AX{(B)} with vertex set $L$ and
		colors $S=\{r,s,t\}$, and $(G/\sthin, \sigmasthin)$ with vertex set
		$\overline{L}$ be its $\sthin$-thin version. We set
		\begin{align*}
		L_s^P & \coloneqq \{x\mid x\in L, [x]\in \overline{L}_s^P\}\\
		L_t^P & \coloneqq \{x\mid x\in L, [x]\in \overline{L}_t^P\}\\
		L_*^P & \coloneqq \{x\mid x\in L, [x]\in \overline{L}_*^P\}
		\end{align*}
		if $(G,\sigma)$ is of Type \AX{(B)} and $(G/\sthin, \sigmasthin)$ B-like
		w.r.t.\ to the induced path $P$.
	\end{definition}
	
	The cases of Type \AX{(B)} and \AX{(C)} 3-RBMGs will be treated separately,
	starting with Type \AX{(B)}. We first need a technical result:
	\begin{lemma}\label{lem:tree-constraints}
		Let $(G,\sigma)$ be a connected 3-RBMG of Type \AX{(B)} with vertex set
		$L$, $(G/\sthin, \sigmasthin)$ its $\sthin$-thin version with vertex set
		$\overline{L}$, and $(T,\sigma)$ a leaf-labeled tree that explains
		$(G,\sigma)$. Moreover, let
		$P\coloneqq\langle [\tilde x_1] [\tilde y] [\tilde z] [\tilde
		x_2]\rangle$ for some good quartet
		$\langle \tilde x_1\tilde y\tilde z\tilde x_2\rangle$ in $\G(T,\sigma)$,
		and set $v\coloneqq \lca_T(\tilde x_1, \tilde x_2, \tilde y,\tilde
		z)$. Then the leaf sets $L_s^P$, $L_t^P$, and $L_*^P$, where
		$\sigma(\tilde x_1)=\sigma(\tilde x_2)=r$, $\sigma(\tilde y)=s$, and
		$\sigma(\tilde z)=t$, satisfy:
		\begin{itemize}
			\item[(i)] $L_t^P, L_s^P \subseteq L(T(v))$,
			\item[(ii)] If $L_c^P\cap L(T(v'))\neq \emptyset$ for some
			$v'\in\child(v)$ and $c\in \{s,t\}$, then
			\begin{itemize}
				\item[(a)] $L_{\overline c}^P\cap L(T(v'))=\emptyset$, where
				$\overline c \in \{s,t\}, \overline c\neq c$,
				\item[(b)] $\sigma(L(T(v')))\subseteq \sigma(L_c^P)$,
			\end{itemize}
			\item[(iii)] $\lca_T(a,b)=v$ for any $a\in L_*^P$, $b\notin L_*^P$ with
			$ab\in E(G)$.
		\end{itemize}
	\end{lemma}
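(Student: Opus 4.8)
The plan is to establish the three assertions in the order (i), (ii), (iii), using throughout a localization of the anchoring good quartet. First I would invoke Lemma~36 of \citet{Geiss:19x} for $\langle\tilde x_1\tilde y\tilde z\tilde x_2\rangle$: it produces two distinct children $v_1,v_2\in\child(v)$ with $\tilde x_1,\tilde y\preceq_T v_1$ and $\tilde x_2,\tilde z\preceq_T v_2$, so that the reference $L_t^P$-vertices lie under $v_1$ and the reference $L_s^P$-vertices under $v_2$. The recurring mechanism is a \emph{competitor} argument: if $u,u'$ have the same color with $u\preceq_T v_i$, $u'\preceq_T v_j$ and $i\neq j$, then $\lca_T(w,u')=v$ for every $w\preceq_T v_i$, and any reciprocal best match between such a $w$ and $u$ forces $\lca_T(w,u)\preceq_T v$.

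For (i) I would first place the $s$- and $t$-colored vertices. If $w\in L_t^P$ has color $s$, axiom \AX{(B3.b)} gives $N_t(w)=L[t]$, so $w$ and $\tilde z$ are reciprocal best matches; using the color-$s$ competitor $\tilde y\preceq_T v_1$ against $\tilde z\preceq_T v_2$ forces $\lca_T(w,\tilde z)\preceq_T v$, hence $w\in L(T(v))$. The color-$t$ vertices of $L_s^P$ are symmetric via \AX{(B4.b)} and the competitor pair $(\tilde z,\tilde y)$. For the color-$r$ vertices I would use that, by Lemma~33 of \citet{Geiss:19x}, each is the endpoint of a good quartet of shape $(r,s,t,r)$ whose interior $s$- and $t$-vertices are already placed under $v$, or chain through an adjacent, already-placed vertex supplied by \AX{(B2.b)} (resp.\ \AX{(B2.c)}); the competitor argument then again yields $w\in L(T(v))$.

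For (ii) I would argue by contradiction, fixing $c=t$ (the case $c=s$ being symmetric under $s\leftrightarrow t$, $v_1\leftrightarrow v_2$). Suppose some $L(T(v'))$ contains $w\in L_t^P$ together with a color-$t$ vertex $u$ (violating (b)) or with $w''\in L_s^P$ (violating (a)). When $w$ has color $s$, $N_t(w)=L[t]$ makes both $u$ and $\tilde z$ reciprocal best matches of $w$ of color $t$, so both realise the closest color-$t$ ancestor of $w$, forcing $\lca_T(w,u)=\lca_T(w,\tilde z)$; but $w,u\preceq_T v'$ gives a left side $\preceq_T v'\prec_T v$ whereas $\tilde z\preceq_T v_2$ gives $\lca_T(w,\tilde z)=v$ as soon as $v'\neq v_2$, a contradiction. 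The boundary case $v'=v_2$ I would exclude using the \emph{directed} information of the good quartet: the arcs $(\tilde x_1,\tilde z)\in\E$, $(\tilde z,\tilde x_1)\notin\E$ but $(\tilde x_2,\tilde z),(\tilde z,\tilde x_2)\in\E$ separate the $L_t^P$-side from the $L_s^P$-side, so an $L_t^P$-vertex cannot coexist below $v_2$ with the reference pair $\tilde x_2,\tilde z$. The color-$r$ subcases reduce to the above through the attachment constraints \AX{(B2.b)}/\AX{(B2.c)}. I expect this part to be the main obstacle, and within it precisely the boundary case $v'=v_2$, where the undirected competitor argument degenerates and one must exploit the asymmetry of the BMG arcs.

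Finally, for (iii) let $a\in L_*^P$, $b\notin L_*^P$ with $ab\in E(G)$ and, say, $b\in L_t^P$ (the case $b\in L_s^P$ being symmetric). Axioms \AX{(B2.b)}/\AX{(B3.b)} show that the only $L_*^P$-neighbours of an $L_t^P$-vertex have color $t$, so $\sigma(a)=t$. By (i) we have $b\preceq_T v'$ for some $v'\in\child(v)$, and by (ii.a) $b\not\preceq_T v_2$ since $\tilde x_2,\tilde z\in L_s^P\cap L(T(v_2))$; hence $\lca_T(b,\tilde z)=v$. As $a$ is a best match of $b$ among color $t$, $\lca_T(a,b)\preceq_T\lca_T(b,\tilde z)=v$, while (ii.b) keeps the color-$t$ vertex $a$ out of $L(T(v'))$, giving $\lca_T(a,b)\succeq_T v$. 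Therefore $\lca_T(a,b)=v$, as claimed.
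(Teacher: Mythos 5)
Your overall architecture matches the paper's: the same localization of the good quartet via Lemma~36 of \citet{Geiss:19x} (children $v_1,v_2$ with $\tilde x_1,\tilde y\preceq_T v_1$ and $\tilde x_2,\tilde z\preceq_T v_2$), the same ``competitor'' mechanism for placing vertices, and your argument for (iii) is essentially the paper's. The genuine gap is in part (ii), which you yourself flag as the main obstacle. You only actually argue the subcase in which the witness $w\in L_t^P$ has color $s$ and a color-$t$ vertex sits in the same child subtree; everything else is deferred with ``the color-$r$ subcases reduce to the above through the attachment constraints \AX{(B2.b)}/\AX{(B2.c)}''. They do not reduce. For $a\in L_t^P[r]$ one must separately exclude $\sigma(L(T(v')))=\{r,t\}$, and this requires steps with no analogue in your color-$s$ argument: \AX{(B2.b)} splits into $N_s([a])=\{[y]\}$ versus $N_s([a])=\emptyset$; the second branch needs connectedness of $G$ to produce a color-$t$ neighbour, \AX{(B2.b)} to place its class in $\overline{L}_*^P$, and \AX{(B4.a)} to manufacture the edge $\tilde x_1 z$ that yields the contradiction. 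Likewise the (ii.a) violation with an $L_s^P$-vertex of color $r$ needs its own two-branch analysis, including an appeal to Lemma~10 of \citet{Geiss:19x} to find an $rs$-edge inside $L(T(v'))$. None of this is recoverable from the competitor argument alone.

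A second problem is your treatment of the boundary case $v'=v_2$. The proposed ``separation by the directed arcs $(\tilde x_1,\tilde z)$, $(\tilde z,\tilde x_2)$'' is an assertion, not an argument; it is not clear how those arcs would prevent an $L_t^P$-vertex of color $r$ from lying below $v_2$. For the color-$s$ subcase the issue is in fact vacuous: $\tilde y\tilde z\in E(G)$ with $\tilde y\preceq_T v_1$ forces $\sigma(L(T(v_2)))=\{r,t\}$, so no color-$s$ vertex lies below $v_2$ at all. That elementary observation, established at the outset of the paper's proof and reused throughout, is what should replace the directed-arc claim; the color-$r$ case then still needs the full analysis above. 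Finally, a smaller gap in (i): the subcase $a\in L_t^P[r]$ with $N_s(a)=\emptyset$ is not covered by ``chaining through an already-placed vertex'', because the relevant neighbour lies in $L_*^P$, which is not contained in $L(T(v))$; one must instead route through the edge $\tilde x_1 z\in E(G)$ supplied by \AX{(B4.a)} and compare $\lca_T(a,z)$ with $\lca_T(\tilde x_1,z)\preceq_T v$.
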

	\begin{proof} 
		Throughout this proof we will often use the fact that $xy\in E(G)$ if and
		only if $[x][y]\in E(G/\sthin)$ for any $x,y\in L$ (cf.\ Lemma 5 of
		\citet{Geiss:19x}).
		
		Lemma 25 of \citet{Geiss:19x} implies
		$[\tilde x_1], [\tilde y] \in \overline{L}_t^P$ and
		$[\tilde x_2], [\tilde z]\in \overline{L}_s^P$, thus, by definition, we
		have $\tilde x_1, \tilde y \in L_t^P$ and
		$\tilde x_2, \tilde z\in L_s^P$.  Moreover, by Lemma 36 of
		\citet{Geiss:19x}, there exist distinct children $v_1,v_2\in\child(v)$
		such that $\tilde x_1, \tilde y \preceq_T v_1$ and
		$\tilde x_2, \tilde z\preceq_T v_2$.  Therefore
		$\tilde y \tilde z \in E(G)$ implies $\sigma(L(T(v_1)))=\{r,s\}$;
		otherwise there exists a leaf $z'\in L(T(v_1))\cap L[t]$ which implies
		$\lca_T(\tilde y, z')\prec_T v= \lca_T(\tilde y,\tilde z)$; a
		contradiction to $\tilde y \tilde z \in E(G)$. Analogously we obtain
		$\sigma(L(T(v_2)))=\{r,t\}$.
		
		\smallskip\par\noindent (i) By symmetry, it suffices to consider $L_t^P$
		in more detail, analogous arguments can then be applied to $L_s^P$.  Let
		$a\in L_t^P$, or equivalently $[a]\in \overline{L}_t^P$ by definition,
		and suppose first $\sigma(a)=s$. Then Property \AX{(B3.b)} implies
		$[a][\tilde z]\in E(G/\sthin)$. As a consequence of Lemma 5 of
		\citet{Geiss:19x} we thus have $a\tilde z\in E(G)$. Hence,
		$\tilde y \tilde z\in E(G)$ implies
		$\lca_T(a, \tilde z)=\lca_T(\tilde y,\tilde z)=v$ and thus,
		$a\preceq_T v$. We therefore conclude $L_t^P\cap L[s] \subseteq L(T(v))$.
		Now assume $\sigma(a)=r$. By Property \AX{(B2.b)}, we either have
		$N_s([a])=\emptyset$ or there exists a vertex $y\in L[s]$ such that
		$[y]\in \overline{L}_t^P$ and $N_s([a])=\{[y]\}$. In the latter case,
		since $[y]\in \overline{L}_t^P$ implies $y\in L_t^P$ and, in addition, it
		holds $L_t^P\cap L[s] \subseteq L(T(v))$, we have $y\preceq_T
		v$. Moreover, by \AX{(B3.b)}, it holds
		$[\tilde x_2][y]\notin E(G/\sthin)$, hence $\tilde x_2 y\notin E(G)$. As
		a consequence of the latter and the fact that $[a][y]\in E(G/\sthin)$
		implies $ay\in E(G)$, it must hold
		$\lca_T(a,y)\prec_T \lca_T(\tilde x_2,y)\preceq_T v$ and thus,
		$a\preceq_T v$. Otherwise, if $N_s([a])=\emptyset$, then there must exist
		a leaf $z\in L[t]$ such that $[z]\in N_t([a])$ due to the connectedness
		of $G/\sthin$, which is implied by the connectedness of $G$ (cf.\ Lemma 5
		of \citet{Geiss:19x}). Since $[a]\in \overline{L}_t^P$, Properties
		\AX{(B4.a)} and \AX{(B4.b)} immediately imply $[z]\in
		\overline{L}_*^P$. Then, by \AX{(B4.a)}, the edge $[\tilde x_1] [z]$ must
		be contained in $G/\sthin$, thus $\tilde x_1 z\in E(G)$. Since
		$\tilde x_1, \tilde z\preceq_T v$ by Lemma 36 of \citet{Geiss:19x}, it
		must thus hold
		$\lca_T(\tilde x_1, z)\preceq_T \lca_T(\tilde x_1, \tilde z)\preceq_T
		v$. Therefore $\tilde x_1 z, az\in E(G)$ implies
		$\lca_T(a, z)=\lca_T(\tilde x_1, z)\preceq_T v$ and thus, $a\preceq_T v$.
		Hence, $L_t^P\cap L[r] \subseteq L(T(v))$, which finally implies
		$L_t^P \subseteq L(T(v))$.
		
		\smallskip\par\noindent (ii) By symmetry, it again suffices to consider
		the case $c=t$. Let $a\in L_t^P\cap L(T(v'))$ for some
		$v'\in\child(v)$. Note that, by (i), such a leaf $a$ and inner vertex
		$v'$ must exist. We need to distinguish the two Cases (1) $\sigma(a)=s$
		and (2) $\sigma(a)=r$.
		
		Consider first Case (1), thus in particular $s\in\sigma(L(T(v')))$. Then,
		as $\sigma(L(T(v_2)))=\{r,t\}$, we have $v'\neq v_2$ and thus,
		$\lca_T(a,\tilde z)=v$. Hence, as $[a][\tilde z]\in E(G/\sthin)$ by
		Property \AX{(B3.b)} and therefore, $a\tilde z\in E(G)$, we can conclude
		$t\notin\sigma(L(T(v')))$ by analogous arguments as just used for showing
		$\sigma(L(T(v_1)))=\{r,s\}$. This implies (ii.b).  Now assume, for
		contradiction, that there exists a leaf $x\in L(T(v'))\cap L_s^P$. Since
		$t\notin\sigma(L(T(v')))$ and, by definition, $s\notin \sigma(L_s^P)$,
		this leaf $x$ must be of color $r$.  Clearly, either there exists a leaf
		$y\in L[s]$ such that $xy\in E(G)$ or $N_s(x)=\emptyset$. In the first
		case, we have $[x][y]\in E(G/\sthin)$ and thus, by \AX{(B2.c)},
		$[y]\in \overline{L}_*^P$ which implies $y\in L_*^P$. In particular, as
		$s\in \sigma(L(T(v')))$ and $xy\in E(G)$ implies
		$\lca_T(x,y)\preceq_T \lca_T(x,y')$ for any $y'\in L[s]$, we can conclude
		$y\preceq_T v'$. Moreover, since $[\tilde x_2]\in \overline{L}_s^P$,
		Property \AX{(B3.a)} implies $[\tilde x_2] [y]\in E(G/\sthin)$ and thus,
		$\tilde x_2 y\in E(G)$. However, since $v'\neq v_2$, we have
		$\lca_T(x,y)\preceq_T v'\prec_T v=\lca_T(\tilde x_2,y)$; a contradiction
		to $\tilde x_2 y \in E(G)$. We thus conclude $N_s(x)=\emptyset$. Hence,
		as $G$ is connected, there must exist a leaf $z'$ of color $t$ such that
		$xz'\in E(G)$, which implies $[x][z']\in E(G/\sthin)$. By Property
		\AX{(B2.c)}, we have $[z']\in \overline{L}_s^P$ and therefore,
		\AX{(B4.b)} implies $N_r([z'])=\{[x]\}$. Since
		$t\notin \sigma(L(T(v')))$, there is a $v''\in\child(v)\setminus \{v'\}$
		such that $z'\preceq_T v'' \prec_T v$. From $xz'\in E(G)$ and
		$\lca_T(x,z')=v$, we conclude that $r\notin \sigma(L(T(v'')))$.
		Moreover, Lemma 10 of \citet{Geiss:19x} implies that there exist leaves
		$x',y'\in L(T(v'))$ with $\sigma(x')=r$ and $\sigma(y')=s$ such that
		$x'y'\in E(G)$. Thus, as by assumption $N_s(x)=\emptyset$, we in
		particular have $[x]\neq [x']$. Since $r\notin \sigma(L(T(v'')))$ and
		$t\notin \sigma(L(T(v')))$, it follows $x'z'\in E(G)$ and therefore,
		$[x']\in N_r([z'])$; a contradiction to $N_r([z'])=\{[x]\}$. This implies
		(ii.a).
		
		Now consider Case (2), i.e., $\sigma(a)=r$. We first show that
		$\sigma(L(T(v')))\subsetneq\{r,s,t\}$ holds. Assume, for contradiction,
		that $L(T(v'))$ contains leaves $y\in L[s]$ and $z\in L[t]$. If
		$v'\neq v_2$, this implies $\lca_T(y,z)\prec_T v=\lca(y,\tilde z)$ and
		thus, $y\tilde z \notin E(G)$ and in particular
		$[y][\tilde z] \notin E(G/\sthin)$; a contradiction to \AX{(B4.b)}. One
		analogously obtains a contradiction for the case $v'\neq v_1$; therefore
		$\sigma(L(T(v')))\subsetneq\{r,s,t\}$ and we either have
		$\sigma(L(T(v')))\subseteq\{r,s\}$ or
		$\sigma(L(T(v')))\subseteq\{r,t\}$. If $\sigma(L(T(v')))=\{r\}$, then it
		clearly holds $N(x)=N(a)$ and thus $x\in L_t^P$ for any $x\in L(T(v'))$,
		hence (ii.a) and (ii.b) are trivially satisfied. If
		$\sigma(L(T(v')))=\{r,s\}$, then (ii.b) is trivially satisfied. Moreover,
		by Lemma 10 of \citet{Geiss:19x}, $L(T(v'))$ contains leaves $x'\in L[r]$
		and $y'\in L[s]$ such that $x'y'\in E(G)$. Hence, we have
		$[x'][y']\in E(G/\sthin)$ and Property \AX{(B4.b)} implies
		$[y'][\tilde z]\in E(G/\sthin)$ and thus, $y'\tilde z\in E(G)$. As
		$\sigma(L(T(v_2)))=\{r,t\}$ and $\sigma(L(T(v')))=\{r,s\}$, we clearly
		have $v'\neq v_2$ and thus,
		$\lca_T(x',y')\preceq_T v'\prec_T v=\lca_T(\tilde x_2,y')$. Hence,
		$\tilde x_2 y'\notin E(G)$, which implies
		$N([y'])\neq \overline{L}_s^P\cup (\overline{L}_*^P\setminus \{[y']\})$
		since $\tilde x_2\in L_s^P$. Therefore, by Property \AX{(B3.a)}, we have
		$[y']\notin \overline{L}_*^P$, implying $y'\notin L_*^P$. We thus
		conclude $y'\in L_t^P$. Hence, we can apply the argumentation of
		Case (1) (by substituting $a=y'$) in order to infer (ii.a).\\
		Finally, for contradiction, assume $\sigma(L(T(v')))=\{r,t\}$. In
		particular, this implies $v_1\neq v'$. Clearly, either there exists a
		leaf $y\in L[s]$ such that $ay\in E(G)$ (and thus
		$[a][y]\in E(G/\sthin)$) or $N_s(a)=\emptyset$. In the latter case, since
		$G$ is connected, there must be a leaf $z\in L[t]$ such that $az\in E(G)$
		and $[a][z]\in E(G/\sthin)$. In particular, as
		$\sigma(L(T(v')))=\{r,t\}$, this implies $z\preceq_T v'$. By \AX{(B2.b)},
		we have $[z]\in \overline{L}_*^P$ and thus, by \AX{(B4.a)}, it follows
		$[\tilde x_1] [z]\in E(G/\sthin)$ implying $\tilde x_1 z\in E(G)$; a
		contradiction since
		$\lca_T(z,a)\preceq_T v' \prec_T v=\lca_T(z,\tilde x_1)$. Hence, there
		must exist a leaf $y\in L[s]$ such that $ay\in E(G)$. By \AX{(B2.b)}, we
		have $N_s([a])=\{[y]\}$ and $[y]\in \overline{L}_t^P$. Then \AX{(B3.b)}
		implies $N_r([y])\subset \overline{L}_t^P$. It is easy to see that this
		implies $N_r(y)\subset L_t^P$. Since $s\notin \sigma(L(T(v')))$, there
		must exist a vertex $v''\in\child(v)\setminus \{v'\}$ such that
		$y\preceq_T v''\prec_T v=\lca_T(a,y)$. One easily checks that
		$ay\in E(G)$ implies $r\notin \sigma(L(T(v'')))$. Together with
		$\sigma(L(T(v_2)))=\{r,t\}$, this implies
		$\lca_T(\tilde x_2,y)=v\preceq_T \lca_T(x'',y)$ and
		$\lca_T(\tilde x_2,y)=v\preceq_T \lca_T(\tilde x_2,y')$ for any
		$x''\in L[r]$ and $y'\in L[s]$. Thus, $\tilde x_2 y\in E(G)$, which, as
		$\tilde x_2\in L_s^P$, contradicts $N_r(y) \subset L_t^P$. We therefore
		conclude that $\sigma(L(T(v')))=\{r,t\}$ is not possible, which finally
		completes the proof.
		
		\smallskip\par\noindent (iii) Since, by definition, $V(G)$ is partitioned
		into $L_s^P$, $L_t^P$, and $L_*^P$, the leaf $b$ must be either contained
		in $L_t^P$ or $L_s^P$. Suppose first $b\in L_t^P$. Since
		$[a][b]\in E(G/\sthin)$ follows from $ab\in E(G)$, Properties
		\AX{(B2.a)}, \AX{(B3.a)}, and \AX{(B4.a)} immediately imply
		$\sigma(a)=t$. Moreover, by (i), there exists some $v'\in \child(v)$ such
		that $b\preceq_T v'\prec_T v$, and, by (ii.b),
		$\sigma(L(T(v')))\subseteq \sigma(L_t^P)=\{r,s\}$. Hence, as
		$\sigma(a)=t$, we can conclude $\lca_T(a,b)\succeq_T v$. Similarly,
		$\sigma(L(T(v')))\subseteq \{r,s\}$ implies $\lca_T(b,\tilde z)=v$, thus
		it must hold $\lca_T(a,b)\preceq_T\lca_T(b,\tilde z)=v$ because of
		$ab\in E(G)$. In summary, this implies $\lca_T(a,b)=v$. Analogous
		arguments can be applied to the case $b\in L_s^P$.
	\end{proof}

	\noindent Lemma~\ref{lem:tree-constraints} can now be used to identify a
	potentially very large set of edges that cannot be present in the orthology
	graph $\Theta$.
	\begin{theorem}
		Let $T$ and $S$ be planted trees, $\sigma: L(T)\to L(S)$ a surjective
		map, and $\mu$ a reconciliation map from $(T,\sigma)$ to $S$ determining
		an event labeling $t_T$ on $T$. Moreover, let the leaf sets $L_t^P$,
		$L_s^P$, and $L_*^P$ be defined w.r.t.\ $P$, which is the $\sthin$-thin
		version of some good quartet of the form $(r,s,t,r)$ in $(\G,\sigma)$
		with color set $S=\{r,s,t\}$. Then $t_T(\lca_T(a,b))=\DUPL$ for any edge
		$ab\in E(G)$ such that $a\in L_{\star}^P$ and $b\notin L_{\star}^P$,
		where $\star \in \{s,t,*\}$.
		\label{cor:B-erase}
	\end{theorem}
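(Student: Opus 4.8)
The plan is to reduce the claim, in all three cases $\star\in\{s,t,*\}$, to the single statement that $\lca_T(a,b)=v$, where $v\coloneqq\lca_T(\tilde x_1,\tilde x_2,\tilde y,\tilde z)$ is the last common ancestor of the good quartet $\langle \tilde x_1\tilde y\tilde z\tilde x_2\rangle$ whose $\sthin$-thin image is $P$. Since this good quartet lies in $\G(T,\sigma)$ and $t_T=t_\mu$ is the labeling determined by $\mu$, Corollary~\ref{cor:P4-Editing} yields $t_T(v)=\DUPL$ at once. Hence, once we show $\lca_T(a,b)=v$ for the edge $ab\in E(G)$ under consideration, we immediately obtain $t_T(\lca_T(a,b))=t_T(v)=\DUPL$, as desired. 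The whole argument therefore only has to locate $\lca_T(a,b)$, and for this we can lean entirely on Lemma~\ref{lem:tree-constraints}.

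The case $\star=*$ is immediate: with $a\in L_*^P$ and $b\notin L_*^P$, Lemma~\ref{lem:tree-constraints}(iii) gives $\lca_T(a,b)=v$ directly. Since $V(G)$ is partitioned into $L_s^P$, $L_t^P$, and $L_*^P$, the remaining case $\star=t$ (and, symmetrically, $\star=s$) splits according to whether $b\in L_*^P$ or $b\in L_s^P$. If $a\in L_t^P$ and $b\in L_*^P$, then $b\in L_*^P$ while $a\notin L_*^P$, so applying Lemma~\ref{lem:tree-constraints}(iii) with the two vertices interchanged again yields $\lca_T(a,b)=v$.

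The one subcase that needs a short argument of its own is $a\in L_t^P$, $b\in L_s^P$ (and its mirror image under $s\leftrightarrow t$). Here I would first invoke Lemma~\ref{lem:tree-constraints}(i) to place both $a$ and $b$ in $L(T(v))$, so that $\lca_T(a,b)\preceq_T v$. Assume for contradiction that $\lca_T(a,b)\prec_T v$; then $\lca_T(a,b)$ lies strictly below $v$ and hence inside $L(T(v'))$ for a single child $v'\in\child(v)$, forcing $a,b\preceq_T v'$. But then $a$ witnesses $L_t^P\cap L(T(v'))\neq\emptyset$ and $b$ witnesses $L_s^P\cap L(T(v'))\neq\emptyset$, directly contradicting Lemma~\ref{lem:tree-constraints}(ii.a) (taken with $c=t$, $\overline c=s$). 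Hence $\lca_T(a,b)=v$, completing this subcase; the case $\star=s$ follows verbatim by exchanging the roles of $s$ and $t$.

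I do not expect any genuine obstacle, since Lemma~\ref{lem:tree-constraints} already carries all the structural information and $t_T(v)=\DUPL$ comes for free from Corollary~\ref{cor:P4-Editing}. The only points requiring care are the bookkeeping of which of the three sets each endpoint belongs to, and using Lemma~\ref{lem:tree-constraints}(iii) in the correct orientation when $b$ (rather than $a$) sits in $L_*^P$; everything else is a one-line deduction.
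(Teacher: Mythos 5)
Your proof is correct and follows essentially the same route as the paper's: both reduce every case to showing $\lca_T(a,b)=v$ via Lemma~\ref{lem:tree-constraints} (parts (i), (ii.a), and (iii)) and then conclude that $v$ must be a duplication vertex. The only cosmetic difference is that you obtain $t_T(v)=\DUPL$ by citing Corollary~\ref{cor:P4-Editing} directly, whereas the paper re-derives it by exhibiting the two children $v_1,v_2$ with overlapping color sets and invoking Lemma~\ref{lem:disjointSpecies}; these amount to the same underlying argument.
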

	\begin{proof}
		Let $P=\langle [x_1] [y] [z] [x_2]\rangle$, i.e., in particular
		$\sigma(x_1)=\sigma(x_2)=r$, $\sigma(y)=s$, and $\sigma(z)=t$, and let
		$v\coloneqq \lca_T(x_1,x_2,y,z)$. Then, by Lemma 36 of \citet{Geiss:19x},
		there exist distinct $v_1,v_2\in\child(v)$ such that $x_1,y\preceq_T v_1$
		and $x_2,z\preceq_T v_2$. As $[x_1], [y]\in \overline{L}_t^P$ and
		$[x_2],[z]\in \overline{L}_s^P$ by Lemma 25 of \citet{Geiss:19x} and
		thus, by definition, $x_1, y\in L_t^P$ and $x_2,z\in L_s^P$,
		Lemma~\ref{lem:tree-constraints}(ii.b) in particular implies
		$\sigma(L(T(v_1)))=\{r,s\}$ and $\sigma(L(T(v_2)))=\{r,t\}$.
		\\
		Now, if $a\in L_t^P$, $b\in L_s^P$, it follows from Lemma
		\ref{lem:tree-constraints}(ii.a) that $\lca_T(a,b)=v$. On the other hand,
		if $a\in L_*^P$ and either $b\in L_s^P$ or $b\in L_t^P$, then we also
		have $\lca_T(a,b)=v$ by Lemma~\ref{lem:tree-constraints}(iii). Since
		$\sigma(L(T(v_1)))\cap \sigma(L(T(v_2)))=\{r\} \neq \emptyset$, we
		conclude from Lemma~\ref{lem:disjointSpecies} that $\mu(v)\notin V^0(S)$,
		which implies $t_T(v)\neq \SPEC$. Therefore we have $t_T(v)=\DUPL$.
	\end{proof}
	
	\subsection{Type \AX{(C)} 3-RBMGs}
	
	Let $(G,\sigma)$ be a connected $\sthin$-thin 3-RBMG of Type
	\AX{(C)}. Lemma 27 of \cite{Geiss:19x} then implies that $(G,\sigma)$
	contains an induced hexagon
	$H:=\langle \hat x_1 \hat y_1 \hat z_1 \hat x_2 \hat y_2 \hat z_2\rangle$
	with three distinct colors $\sigma(\hat x_1)=\sigma(\hat x_2)=:r$,
	$\sigma(\hat y_1)=\sigma(\hat y_2)=:s$, and
	$\sigma(\hat z_1)=\sigma(\hat z_2)=:t$, and $|N_t(\hat x_1)|>1$ such that
	the vertex sets
	\begin{itemize}
		\item[] $ L_t^H \coloneqq \{x \mid \langle x \hat z_2 \hat y_2 \rangle \in
		\mathscr{P}_3\}\cup \{y \mid \langle y \hat z_1 \hat x_2 \rangle\in
		\mathscr{P}_3\}$,
		\item[] $ L_s^H \coloneqq \{x \mid \langle x \hat y_2 \hat z_2 \rangle \in
		\mathscr{P}_3\} \cup \{z \mid \langle z \hat y_1 \hat x_1 \in \rangle
		\mathscr{P}_3\}$,
		\item[] $ L_r^H \coloneqq \{ y \mid \langle y \hat x_2 \hat z_1 \rangle \in
		\mathscr{P}_3\} \cup \{z \mid \langle z \hat x_1 \hat y_1 \rangle\in
		\mathscr{P}_3\}$, and
		\item[] $L_*^H \coloneqq V(G)\setminus (L_r^H\cup L_s^H\cup L_t^H)$
	\end{itemize}
	satisfy the following conditions:
	\begin{description}
		\item[\AX{(C2.a)}] If $x\in L_*^H[r]$, then
		$N(x)=L_r^H\cup (L_*^H\setminus\{x\})$,
		\item[\AX{(C2.b)}] If $x\in L_t^H[r]$, then
		$N_s(x)\subset L_t^H$ and $|N_s(x)|\le 1$, and
		$N_t(x)= L_*^H[t]\cup L_r^H[t]$,
		\item[\AX{(C2.c)}] If $x\in L_s^H[r]$, then
		$N_t(x)\subset L_s^H$ and $|N_t(x)|\le 1$, and
		$N_s(x)= L_*^H[s]\cup L_r^H[s]$
		\item[\AX{(C3.a)}] If $y\in L_*^H[s]$, then
		$N(y)=L_s^H\cup (L_*^H\setminus\{y\})$,
		\item[\AX{(C3.b)}] If $y\in L_t^H[s]$, then
		$N_r(y)\subset L_t^H$ and $|N_r(y)|\le 1$, and
		$N_t(y)=L_*^H[t]\cup L_s^H[t]$,
		\item[\AX{(C3.c)}] If $y\in L_r^H[s]$, then
		$N_t(y)\subset L_r^H$ and $|N_t(y)|\le 1$, and
		$N_r(y)=L_*^H[r]\cup L_s^H[r]$,
		\item[\AX{(C4.a)}] If $z\in L_*^H[t]$, then
		$N(z)=L_t^H\cup (L_*^H\setminus\{z\})$,
		\item[\AX{(C4.b)}] If $z\in L_s^H[t]$, then
		$N_r(z)\subset L_s^H$ and $|N_r(z)|\le 1$, and
		$N_s(z)=L_*^H[s]\cup L_t^H[s]$,
		\item[\AX{(C4.c)}] If $z\in L_r^H[t]$, then
		$N_s(z)\subset L_r^H$ and $|N_s(z)|\le 1$, and
		$N_r(z)=L_*^H[r]\cup L_t^H[r]$.
	\end{description} 
	By construction, $\sigma(L_t^H)=\{r,s\}$, $\sigma(L_s^H)=\{r,t\}$, and
	$\sigma(L_r^H)=\{s,t\}$ and, as a consequence of Lemma 27 of
	\citet{Geiss:19x}, the sets $L_r^H$, $L_s^H$, $L_t^H$, and $L_*^H$ form a
	partition of $V(G)$.  Similarly to the Type \AX{(B)} case, we extend the
	leaf sets $L^H_*,L^H_r,L^H_s,L^H_t$ that have been introduced in
	\citet{Geiss:19x} for $\sthin$-thin 3-RBMGs of Type \AX{(C)}, to general
	Type \AX{(C)} 3-RBMGs:
	
	\begin{definition}
		Let $(G,\sigma)$ be a 3-RBMG of Type \AX{(C)} with vertex set $L$ and
		colors $S=\{r,s,t\}$, and $(G/\sthin, \sigmasthin)$ with vertex set
		$\overline{L}$ be its $\sthin$-thin version. We set
		\begin{align*}
		L_r^H & \coloneqq \{x\mid x\in L, [x]\in \overline{L}_r^H\}\\
		L_s^H & \coloneqq \{x\mid x\in L, [x]\in \overline{L}_s^H\}\\
		L_t^H & \coloneqq \{x\mid x\in L, [x]\in \overline{L}_t^H\}\\
		L_*^H & \coloneqq \{x\mid x\in L, [x]\in \overline{L}_*^H\}
		\end{align*}
		if $(G,\sigma)$ is of Type \AX{(C)} and $(G/\sthin, \sigmasthin)$ C-like
		w.r.t.\ to the hexagon $H$.
	\end{definition}
	
	Again, we can identify edges in $(G,\sigma)$ that are necessarily are false
	positives in the orthology graph $\Theta$.  A similar procedure as in the
	Type \AX{(B)} case will be applied to Type \AX{(C)} 3-RBMGs, again starting
	with an analogous technical result:
	
	\begin{lemma}\label{lem:tree-constraints-C}
		Let $(G,\sigma)$ be a connected 3-RBMG of Type \AX{(C)} with vertex set
		$L$, $(G/\sthin,\sigmasthin)$ its $\sthin$-thin version with vertex set
		$\overline{L}$, and $(T,\sigma)$ a leaf-labeled tree that explains
		$(G,\sigma)$. Moreover, let
		$H\coloneqq\langle [\tilde x_1] [\tilde y_1] [\tilde z_1] [\tilde x_2]
		[\tilde y_2] [\tilde z_2]\rangle$ for some induced hexagon
		$\langle \tilde x_1 \tilde y_1 \tilde z_1 \tilde x_2 \tilde y_2 \tilde
		z_2\rangle$ in $\G(T,\sigma)$ with $|N_t([\tilde x_1])|>1$ and
		$\sigma(\tilde x_1)=\sigma(\tilde x_2)=r$,
		$\sigma(\tilde y_1)=\sigma(\tilde y_2)=s$, and
		$\sigma(\tilde z_1)=\sigma(\tilde z_2)=t$, and set
		$v\coloneqq \lca_T(\tilde x_1, \tilde x_2, \tilde y_1, \tilde y_2,\tilde
		z_1, \tilde z_2)$. Then the leaf sets $L_r^H$, $L_s^H$, $L_t^H$, and
		$L_*^H$ satisfy:
		\begin{itemize}
			\item[(i)] $L_r^H, L_s^H, L_t^H \subseteq L(T(v))$,
			\item[(ii)] If $L_c^H\cap L(T(v'))\neq \emptyset$ for some
			$v'\in\child(v)$ and $c\in \{r,s,t\}$, then
			\begin{itemize}
				\item[(a)] $L_{\overline c}^H\cap L(T(v'))=\emptyset$, where
				$\overline c \in \{r,s,t\}, \overline c\neq c$,
				\item[(b)] $\sigma(L(T(v')))\subseteq \sigma(L_c^H)$,
			\end{itemize}
			\item[(iii)] $\lca_T(a,b)=v$ for any $a\in L_*^H$, $b\notin L_*^H$ with
			$ab\in E(G)$.
		\end{itemize}
	\end{lemma}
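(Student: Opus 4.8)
The plan is to mirror the proof of Lemma~\ref{lem:tree-constraints} for Type \AX{(B)}, using Lemma~\ref{lem:C6} in place of Lemma 36 of \citet{Geiss:19x} to supply the coarse location of the hexagon inside $T$. Throughout I would rely on the fact that $ab\in E(G)$ if and only if $[a][b]\in E(G/\sthin)$ (Lemma 5 of \citet{Geiss:19x}), so that the neighborhood axioms \AX{(C2.a)}--\AX{(C4.c)}, which are stated for the $\sthin$-thin version, can be transported to $(G,\sigma)$. The condition $|N_t([\tilde x_1])|>1$ is exactly what is needed to invoke Lemma 27 of \citet{Geiss:19x}, which gives the partition of $V(G)$ and the axioms.

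First I would fix the skeleton. By Lemma 27 of \citet{Geiss:19x} the six hexagon classes distribute as $[\tilde x_1],[\tilde y_1]\in\overline L_t^H$, $[\tilde x_2],[\tilde z_1]\in\overline L_s^H$, and $[\tilde y_2],[\tilde z_2]\in\overline L_r^H$, hence $\tilde x_1,\tilde y_1\in L_t^H$, $\tilde x_2,\tilde z_1\in L_s^H$, and $\tilde y_2,\tilde z_2\in L_r^H$. Applying Lemma~\ref{lem:C6} in the first of its two configurations (the second being symmetric) yields distinct children $v_1,v_2,v_3\in\child(v)$ with $\tilde x_1,\tilde y_1\preceq_T v_1$, $\tilde x_2,\tilde z_1\preceq_T v_2$, and $\tilde y_2,\tilde z_2\preceq_T v_3$. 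The key preliminary step is then to read off the color content of each child: every hexagon edge joining leaves in two different $v_i$ has its $\lca$ equal to $v$, so the reciprocal best-match condition forbids the ``third'' color from appearing in either subtree. Concretely, $\tilde y_1\tilde z_1\in E(G)$ forces $t\notin\sigma(L(T(v_1)))$, $\tilde x_2\tilde y_2\in E(G)$ forces $s\notin\sigma(L(T(v_2)))$ and $r\notin\sigma(L(T(v_3)))$, and the crossing edge $\tilde z_2\tilde x_1$ merely reconfirms these. Together with the colors already present below each child I obtain $\sigma(L(T(v_1)))=\{r,s\}=\sigma(L_t^H)$, $\sigma(L(T(v_2)))=\{r,t\}=\sigma(L_s^H)$, and $\sigma(L(T(v_3)))=\{s,t\}=\sigma(L_r^H)$. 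This exact matching of each child's two-color signature with one of the full leaf sets is what drives the whole argument.

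With the skeleton in place, part (i) would be proved as in the Type \AX{(B)} case, one full set at a time: for $a\in L_c^H$ I would split on $\sigma(a)$ (each $L_c^H$ carries only two colors), use the appropriate axiom among \AX{(C2.b)}--\AX{(C4.c)} to exhibit a neighbor whose membership in some $\overline L_\bullet^H$ class is already known, and chain best-match inequalities through vertices already shown to lie below $v$ to conclude $a\preceq_T v$; connectedness of $G/\sthin$ supplies such a neighbor whenever a direct one is unavailable. For part (ii) I would take $v'\in\child(v)$ with $L_c^H\cap L(T(v'))\neq\emptyset$ and prove (b) $\sigma(L(T(v')))\subseteq\sigma(L_c^H)$ and (a) $L_{\overline c}^H\cap L(T(v'))=\emptyset$ for $\overline c\neq c$, again by a case analysis on the color of the witnessing leaf, repeatedly using that any cross-child hexagon edge has $\lca=v$ to contradict the presence of an off-signature color. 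Finally, part (iii) follows from the partition of $V(G)$ into $L_r^H,L_s^H,L_t^H,L_*^H$: for $a\in L_*^H$ and $b\notin L_*^H$ with $ab\in E(G)$, part (i) places $b$ below some child $v'$, part (ii.b) pins $\sigma(L(T(v')))$ to a two-color set, the axioms force $\sigma(a)$ to be the missing third color, and then $ab\in E(G)$ together with $b\preceq_T v'$ squeezes $\lca_T(a,b)$ to equal $v$.

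I expect part (ii), and in particular the disjointness claim (ii.a), to be the main obstacle. As in the Type \AX{(B)} proof it requires ruling out, by best-match contradictions, every configuration in which a leaf of a second full set could hide below $v'$; with three full sets and the richer hexagon the case tree branches further, and each branch must be closed using the precise degree bounds $|N_\bullet(\cdot)|\le 1$ in \AX{(C2.b)}--\AX{(C4.c)} together with the two-color signatures established in the skeleton step. The remaining parts are then essentially bookkeeping built on top of that structural backbone.
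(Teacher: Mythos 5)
Your plan is essentially the paper's own proof: the paper likewise transports the axioms \AX{(C2.a)}--\AX{(C4.c)} via Lemma 5 of \citet{Geiss:19x}, uses Lemma 27 to place the hexagon vertices in $\overline{L}_t^H,\overline{L}_s^H,\overline{L}_r^H$, establishes the three-children skeleton with color signatures $\{r,s\},\{r,t\},\{s,t\}$, and then runs exactly the case analyses you describe for (i)--(iii). The one weak spot is your dismissal of the second configuration of Lemma~\ref{lem:C6} as ``symmetric'': the leaf sets $L_\bullet^H$ and the hypothesis $|N_t([\tilde x_1])|>1$ are tied to the specific hexagon labeling and are not preserved by the rotation exchanging the two configurations, so you should instead invoke Lemma 39(iii) of \citet{Geiss:19x} (as the paper does), which uses $|N_t([\tilde x_1])|>1$ to pin down the first configuration outright.
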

	\begin{proof}
		The proof of Lemma~\ref{lem:tree-constraints-C} closely follows the
		arguments leading to Lemma~\ref{lem:tree-constraints}. In particular, we
		again use the fact that $xy\in E(G)$ if and only if
		$[x][y]\in E(G/\sthin)$ for any $x,y\in L$ (cf.\ Lemma 5 of
		\citet{Geiss:19x}).
		
		By Lemma 27 of \citet{Geiss:19x}, we have
		$[\tilde x_1], [\tilde y_1] \in \overline{L}_t^H$,
		$[\tilde x_2], [\tilde z_1]\in \overline{L}_s^H$, and
		$[\tilde y_2], [\tilde z_2]\in \overline{L}_r^H$, hence
		$\tilde x_1, \tilde y_1 \in L_t^H$, $\tilde x_2, \tilde z_1\in L_s^H$,
		and $\tilde y_2, \tilde z_2\in L_r^H$.  Moreover, by Lemma 39(iii) of
		\citet{Geiss:19x}, there exist distinct children
		$v_1,v_2,v_3\in\child(v)$ such that
		$\tilde x_1, \tilde y_1 \preceq_T v_1$,
		$\tilde x_2, \tilde z_2\preceq_T v_2$, and
		$\tilde y_2, \tilde z_2\preceq_T v_3$. In particular, since
		$\tilde y_1 \tilde z_1 \in E(G)$, it must hold
		$\sigma(L(T(v_1)))=\{r,s\}$ as otherwise there exists a leaf
		$z'\in L(T(v_1))\cap L[t]$ which implies
		$\lca_T(\tilde y_1, z')\prec_T v=\lca_T(\tilde y_1,\tilde z_1)$; a
		contradiction to $\tilde y_1 \tilde z_1 \in E(G)$. One analogously checks
		$\sigma(L(T(v_2)))=\{r,t\}$ and $\sigma(L(T(v_3)))=\{s,t\}$.
		
		\smallskip\par\noindent(i) By symmetry, it suffices to consider $L_t^H$
		in more detail, analogous arguments can then be applied to $L_s^H$ and
		$L_r^H$. Let $a\in L_t^H$, or equivalently $[a]\in \overline{L}_t^H$, and
		suppose first $\sigma(a)=r$. Then Property \AX{(C2.b)} implies
		$[a][\tilde z_2]\in E(G/\sthin)$ and thus, $a\tilde z_2\in E(G)$. As
		$\tilde x_1 \tilde z_2\in E(G)$, we thus have
		$\lca_T(a,\tilde z_2)=\lca_T(\tilde x_1,\tilde z_2)=v$, hence
		$a\preceq_T v$. We therefore conclude $L_t^H\cap L[r] \subseteq
		L(T(v))$. Analogously, we obtain $a\preceq_T v$ for $\sigma(a)=s$ as a
		consequence of Property \AX{(C3.b)}. In summary, we obtain
		$L_t^H\subseteq L(T(v))$.
		
		\smallskip\par\noindent(ii) Again invoking symmetry, it suffices to
		consider the case $c=t$. Let $a\in L_t^H\cap L(T(v'))$ for some
		$v'\in\child(v)$. First, let $\sigma(a)=r$. Then, as
		$r\notin\sigma(L(T(v_3)))$, we have $v'\neq v_3$ and thus,
		$\lca_T(a,\tilde z_2)=v$. Hence, as $[a][\tilde z_2]\in E(G/\sthin)$ by
		Property \AX{(C2.b)} and thus $a\tilde z_2\in E(G)$, we can conclude
		$t\notin\sigma(L(T(v')))$ using the same line of reasoning used above for
		showing $\sigma(L(T(v_1)))=\{r,s\}$.  This implies (ii.b).  Now assume,
		for contradiction, that there exists either (1) a leaf
		$x\in L(T(v'))\cap L_s^H$ or (2) a leaf $y\in L(T(v'))\cap L_r^H$.\\
		In Case (1), since $t\notin\sigma(L(T(v')))$ and, by definition,
		$s\notin \sigma(L_s^H)$, this leaf $x$ must be of color $r$. In
		particular, since $L_s^H$ and $L_t^H$ are disjoint, we have $x\neq
		a$. Hence, it must hold $s\in \sigma(L(T(v')))$ as otherwise $N(x)=N(a)$;
		contradicting $a\in L_t^H$, $x\in L_s^H$, and
		$L_s^H\cap L_t^H=\emptyset$. 
		This immediately implies $v'\neq v_2$ because
		$s\notin \sigma(L(T(v_2)))$. By Property \AX{(C2.c)}, as
		$[\tilde y_2]\in \overline{L}_r^H[s]$, we have
		$[x][\tilde y_2]\in E(G/\sthin)$ and thus, $x\tilde y_2\in
		E(G)$. However, since $s\in\sigma(L(T(v')))$, there exists a leaf
		$y'\preceq_T v'$ with $\sigma(y')=s$, which implies
		$\lca_T(x,y')\preceq_T v' \prec_T v=\lca_T(x,\tilde y_2)$ because of
		$\tilde y_2\preceq_T v_3\neq v'$; a contradiction to
		$x\tilde y_2\in E(G)$.
		\par\noindent
		Hence, assume Case (2), i.e., there exists $y\in L(T(v'))\cap
		L_r^H$. Since $t\notin\sigma(L(T(v')))$ and, by definition,
		$r\notin \sigma(L_r^H)$, the leaf $y$ must be of color $s$, which in
		particular implies $v'\neq v_2$. As $t\notin\sigma(L(T(v')))$ and
		$s\notin\sigma(L(T(v_2)))$, one easily checks that $y\tilde z_1\in
		E(G)$. However, as $y\in L_r^H$ and thus $[y]\in \overline{L}_r^H$,
		Property \AX{(C3.c)} implies $[\tilde z_1] \in \overline{L}_r^H$, hence
		$\tilde z_1 \in L_r^H$; a contradiction since $\tilde z_1 \in L_s^H$.
		\par\noindent
		In summary, we conclude that $L_{\overline c}^H\cap L(T(v'))=\emptyset$,
		where $\overline c \in \{r,s\}$, hence (ii.a) is satisfied for $c=t$.
		Analogous arguments can be used to demonstrate that properties (ii.a) and
		(ii.b) are also satisfied for $\sigma(a)=s$.
		
		\smallskip\par\noindent(iii) Since, by definition, $V(G)$ is partitioned
		into $L_r^H$, $L_s^H$, $L_t^H$, and $L_*^H$, the leaf $b$ must be either
		contained in $L_r^H$, $L_s^H$, or $L_t^H$. Suppose first $b\in
		L_t^H$. Then, since $[a][b]\in E(G/\sthin)$ follows from $ab\in E(G)$,
		Properties \AX{(C2.a)}, \AX{(C3.a)}, and \AX{(C4.a)} immediately imply
		$\sigma(a)=t$. Moreover, by (i), there exists some $v'\in \child(v)$ such
		that $b\preceq_T v'\prec_T v$ and, by (ii.b),
		$\sigma(L(T(v')))\subseteq \sigma(L_t^H)=\{r,s\}$. Hence, as
		$\sigma(a)=t$, we can conclude $\lca_T(a,b)\succeq_T v$. Similarly,
		$\sigma(L(T(v')))\subseteq \{r,s\}$ implies $\lca_T(b,\tilde z_1)=v$,
		thus it must hold $\lca_T(a,b)\preceq_T \lca_T(b,\tilde z_1)=v$ because
		of $ab\in E(G)$. In summary, this implies $\lca_T(a,b)=v$. Analogous
		arguments can be applied to the cases $b\in L_s^H$ and $b\in L_r^H$.
	\end{proof}
	
	\noindent Similar to Type (B) 3-RBMGs, we use
	Lemma~\ref{lem:tree-constraints-C} to finally identify false positive
	edges.
	
	\begin{theorem}
		Let $T$ and $S$ be planted trees, $\sigma: L(T)\to L(S)$ a surjective
		map, and $\mu$ a reconciliation map from $(T,\sigma)$ to $S$ determining
		an event labeling $t_T$ on $T$. Moreover, let the leaf sets $L_r^H$,
		$L_s^H$, $L_t^H$, and $L_*^H$ be defined w.r.t.\ $H$, which is the
		$\sthin$-thin version of some hexagon
		$H'=\langle x_1 y_1 z_1 x_2 y_2 z_2\rangle$ of the form $(r,s,t,r,s,t)$
		and $|N_t(x_1)>1|$ in $(\G,\sigma)$ with color set $S=\{r,s,t\}$. Then
		$t_T(\lca_T(a,b))=\DUPL$ for any edge $ab\in E(G)$ such that
		$a\in L_{\star}^H$ and $b\notin L_{\star}^H$, where
		$\star \in \{r,s,t,*\}$.
		\label{cor:C-erase}
	\end{theorem}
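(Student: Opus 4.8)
The plan is to mirror the proof of Theorem~\ref{cor:B-erase} almost verbatim, replacing the two-child decomposition of the good quartet by the three-child decomposition of the hexagon and invoking Lemma~\ref{lem:tree-constraints-C} in place of Lemma~\ref{lem:tree-constraints}. First I would fix representatives: write $H=\langle [x_1][y_1][z_1][x_2][y_2][z_2]\rangle$ with $\sigma(x_1)=\sigma(x_2)=r$, $\sigma(y_1)=\sigma(y_2)=s$, $\sigma(z_1)=\sigma(z_2)=t$, and set $v\coloneqq\lca_T(x_1,x_2,y_1,y_2,z_1,z_2)$. By Lemma~39(iii) of \citet{Geiss:19x} there are three distinct children $v_1,v_2,v_3\in\child(v)$ carrying the hexagon vertices exactly as in the proof of Lemma~\ref{lem:tree-constraints-C}, and Lemma~\ref{lem:tree-constraints-C}(ii.b) then yields $\sigma(L(T(v_1)))=\{r,s\}$, $\sigma(L(T(v_2)))=\{r,t\}$, and $\sigma(L(T(v_3)))=\{s,t\}$.

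The core is a short case analysis showing that every cross-class edge $ab$ has $\lca_T(a,b)=v$. If one endpoint, say $a$, lies in $L_*^H$ and $b\notin L_*^H$, then Lemma~\ref{lem:tree-constraints-C}(iii) gives $\lca_T(a,b)=v$ immediately (and the symmetric case follows by exchanging $a$ and $b$). If instead both endpoints lie in distinct non-star classes, say $a\in L_c^H$ and $b\in L_{\overline c}^H$ with $c\neq\overline c$ in $\{r,s,t\}$, then part~(i) gives $a,b\preceq_T v$, and part~(ii.a) forces $a$ and $b$ into different child-subtrees of $v$: if both lay below a single $v'\in\child(v)$, then $L_c^H$ and $L_{\overline c}^H$ would both meet $L(T(v'))$, contradicting (ii.a). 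Hence $\lca_T(a,b)=v$ in every case permitted by the hypothesis $a\in L_\star^H$, $b\notin L_\star^H$.

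Finally I would upgrade ``$\lca_T(a,b)=v$'' to ``$t_T(v)=\DUPL$''. Since any two of the three species sets $\{r,s\},\{r,t\},\{s,t\}$ intersect (e.g.\ $\sigma(L(T(v_1)))\cap\sigma(L(T(v_2)))=\{r\}\neq\emptyset$), Lemma~\ref{lem:disjointSpecies} rules out $\mu(v)\in W^0$, so $t_T(v)\neq\SPEC$ and therefore $t_T(v)=\DUPL$. This is precisely the concluding argument of Theorem~\ref{cor:B-erase}.

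I do not expect a genuine obstacle here: all the heavy lifting has already been carried out in Lemma~\ref{lem:tree-constraints-C}, and the theorem is a mechanical combination of its three parts with Lemma~\ref{lem:disjointSpecies}. The only point requiring slight care is exhaustiveness of the case split—one must verify that the hypothesis ``$a\in L_\star^H$ and $b\notin L_\star^H$ for some $\star\in\{r,s,t,*\}$'' covers exactly the edges joining two distinct blocks of the partition $\{L_r^H,L_s^H,L_t^H,L_*^H\}$, and that the star/non-star and non-star/non-star subcases together are complete. Because the partition now has four blocks rather than the three present in the Type~(B) setting, the non-star/non-star subcase has three color pairs to consider instead of one, but each is dispatched uniformly by (ii.a), so no new difficulty arises.
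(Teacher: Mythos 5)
Your proposal is correct and follows essentially the same route as the paper's proof: the same appeal to Lemma~39(iii) of \citet{Geiss:19x} for the three children $v_1,v_2,v_3$, the same use of Lemma~\ref{lem:tree-constraints-C}(i), (ii.a)--(ii.b), and (iii) to force $\lca_T(a,b)=v$ in each case, and the same final application of Lemma~\ref{lem:disjointSpecies} to conclude $t_T(v)=\DUPL$. The only difference is that you spell out slightly more explicitly why (i) together with (ii.a) pins the $\lca$ of two non-star endpoints to $v$, which the paper leaves implicit.
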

	\begin{proof}
		Let $v\coloneqq \lca_T(x_1,x_2,y_1,y_2,z_1,z_2)$. Again, we have
		$[x_1], [y_1]\in \overline{L}_t^H$, $[x_2],[z_1]\in \overline{L}_s^H$,
		and $[y_2],[z_2]\in \overline{L}_r^H$ by Lemma 27 of \citet{Geiss:19x}
		and thus, $x_1, y_1\in L_t^H$, $x_2,z_1\in L_s^H$, $y_2,z_2\in
		L_r^H$. Moreover, by Lemma 39(iii) of \citet{Geiss:19x}, there exist
		distinct $v_1,v_2,v_3\in\child(v)$ such that $x_1,y_1\preceq_T v_1$,
		$x_2,z_1\preceq_T v_2$, and $y_2,z_2\preceq_T v_3$. As
		$x_1, y_1\in L_t^H$, $x_2,z_1\in L_s^H$, $y_2,z_2\in L_r^H$,
		Lemma~\ref{lem:tree-constraints-C}(ii.b) in particular implies
		$\sigma(L(T(v_1)))=\{r,s\}$, $\sigma(L(T(v_2)))=\{r,t\}$, and
		$\sigma(L(T(v_3)))=\{s,t\}$.
		\\
		Now, if $a\in L_c^H$, $b\in L_{\overline c}^H$, where $c=\{r,s,t\}$ and
		$\overline c \in \{r,s,t\}, \overline c\neq c$, it follows from Lemma
		\ref{lem:tree-constraints-C}(ii.a) that $\lca_T(a,b)=v$. On the other
		hand, if $a\in L_*^H$ and $b\in L_c^H$, then we also have $\lca_T(a,b)=v$
		by Lemma~\ref{lem:tree-constraints-C}(iii). Since
		$\sigma(L(T(v_i)))\cap \sigma(L(T(v_j))) \neq \emptyset$ for
		$1\le i < j \le 3$, we conclude from Lemma~\ref{lem:disjointSpecies} that
		$\mu(v)\notin V^0(S)$, which implies $t_T(v)\neq \SPEC$. Therefore we
		have $t_T(v)=\DUPL$.
	\end{proof}
\end{appendix}

\bibliographystyle{plainnat}
\bibliography{cobmg}

\end{document}